\documentclass[11pt]{article}
\usepackage[margin=1in]{geometry}
\usepackage[T1]{fontenc}
\usepackage{lmodern}
\usepackage{microtype}
\usepackage{amsmath,amssymb,amsthm,mathtools}
\usepackage{bm}
\usepackage{hyperref}
\usepackage{comment}
\usepackage{enumitem}
\usepackage{booktabs}
\hypersetup{colorlinks=true,linkcolor=blue,citecolor=blue,urlcolor=blue}

\title{\huge Optimal Condition Numbers in Low-Rank Positive Semidefinite Matrix Sensing}
\author{Mingxuan Sun and Zhiqiang Xu}
\date{}

\newcommand{\Sn}{\mathbb{S}_n}
\newcommand{\Snp}{\mathbb{S}_n^+}
\newcommand{\Mr}{\mathcal{M}_r}
\newcommand{\A}{\mathcal{A}}
\newcommand{\Aone}{\mathcal{A}^{(1)}}
\newcommand{\PhiA}{\Phi_{\A}}
\newcommand{\B}{\mathcal{B}}
\newcommand{\PhiB}{\Phi_{\B}}
\newcommand{\ip}[2]{\langle #1,#2\rangle}
\newtheorem{theorem}{Theorem}[section]
\newtheorem{lemma}{Lemma}[section]

\newtheorem{proposition}{Proposition}[section]
\newtheorem{definition}{Definition}[section]

\numberwithin{equation}{section}
\DeclareMathOperator{\sgn}{sgn}
\DeclareMathOperator{\E}{\mathbb{E}}
\DeclareMathOperator{\diag}{diag}
\newcommand{\R}{\mathbb{R}}
\newcommand{\Pbb}{\mathbb{P}}
\newcommand{\HH}{\mathbb{H}}
\newcommand{\C}{\mathbb{C}}

\newcommand{\rank}{\operatorname{rank}}
\newcommand{\norm}[1]{\left\lVert#1\right\rVert}
\newcommand{\abs}[1]{\left|#1\right|}
\newcommand{\eps}{\varepsilon}
\DeclareMathOperator{\Tr}{Tr}
\usepackage{titling}
\pretitle{\begin{center}\vspace*{-3cm}}
\posttitle{\par\end{center}\vspace{-0.5cm}}
\predate{\begin{center}\vspace*{-1cm}}
\postdate{\par\end{center}\vspace{-0.5cm}}
\begin{document}
\maketitle
\begin{abstract}
In this paper we focus on the stability of positive semidefinite matrix sensing maps $\PhiA(X)=(\ip{A_i}{X})_{i=1}^m$ where $A_i\succeq 0$, $X\succeq0$ and $\rank(X)\le r$. We introduce the bi-Lipschitz constants of $\PhiA(X)$ and define the global condition numbers as the ratio of upper and lower Lipschitz constants. We give deterministic universal lower bounds for these condition numbers, that depend only on the rank $r$ and on the underlying field. We then investigate the random rank-one Gaussian measurements and show that our lower bounds on condition numbers are asymptotically sharp, and therefore the random rank-one Gaussian measurements are asymptotically optimal. As an application, we derive the stability guarantees for an $\ell_1$-residual PhaseLift-type estimator at the optimal sampling scale.
\end{abstract}
\section{Introduction}
\subsection{Problem setup}

Matrix sensing aims to recover a structured matrix from a collection of linear measurements.
In many applications, the target matrix is assumed to be low-rank and positive semidefinite, while the sensing matrices are also chosen to be positive semidefinite.
 Let $\Sn(\HH):=\{X\in\HH^{n\times n}:X=X^*\}$, $\Snp(\HH):=\{X\in\Sn(\HH):X\succeq 0\}$,
and  $\Mr(\HH):=\{X\in\Snp(\HH):\rank(X)\le r\}$,
where $\HH\in\{\R,\C\}$.
For a family of positive semidefinite sensing matrices
$\A=\{A_i\}_{i=1}^m\subset \Snp(\HH)$,
we define the measurement map
 $\PhiA: \Mr\rightarrow {\mathbb R}_+^m$ by
\begin{equation}
\PhiA(X):= (\ip{A_1}{X},\dots,\ip{A_m}{X} ),
\label{eq:intro-map}
\end{equation}
where $\ip{A}{X}:=\Tr(A^*X)$.
Since $A_i\succeq 0$ and $X\succeq 0$, each coordinate $\ip{A_i}{X}=\Tr(A_i^{1/2}XA_i^{1/2})$ is nonnegative.
Therefore, \eqref{eq:intro-map} provides a natural model for positive semidefinite matrix sensing, which appears in several applications, including phase retrieval and quantum state tomography.

The rank-one case of \eqref{eq:intro-map} is exactly intensity phase retrieval \cite{CandesLi2014}. If $\bm{a}_i\in\HH^n$, $A_i=\bm{a}_i \bm{a}_i^*$, and $X=\bm{x}\bm{x}^*$, then
\begin{equation}
\ip{A_i}{X}=\Tr(\bm{a}_i \bm{a}_i^*\bm{x}\bm{x}^*)=\abs{\bm{a}_i^*\bm{x}}^2.
\label{eq:intro-pr-formula}
\end{equation}
Measurements of the form \eqref{eq:intro-map} also occur in a number of applications besides phase retrieval, such as covariance sketching\cite{wang2025lowrankcovariance},
quantum state tomography\cite{qin2024quantum}, compressive power spectrum estimation\cite{alwan2021compressive}, synthetic aperture radar imaging
\cite{thammakhoune2021phasespace}, and so on.

The stability of the sensing map is of vital importance in matrix sensing.
It guarantees robustness of the recovery process and provides quantitative control of the sensing map.
In order to quantify the stability of the matrix sensing map for a given set of measurement matrices,
we introduce the lower and upper Lipschitz constants of the sensing map $\PhiA$.
We use the $\ell_1$ norm to measure the distance between sensing outputs and define the lower and upper Lipschitz constants with respect to the nuclear norm and the operator norm, respectively.
We first consider the nuclear norm and define
\begin{equation}
L_{\A}^{\HH,*}(r):=\inf_{\substack{X,Y\in\Mr\\ X\neq Y}}\frac{\norm{\PhiA(X)-\PhiA(Y)}_1}{\norm{X-Y}_*},
\qquad
U_{\A}^{\HH,*}(r):=\sup_{\substack{X,Y\in\Mr\\ X\neq Y}}\frac{\norm{\PhiA(X)-\PhiA(Y)}_1}{\norm{X-Y}_*}.
\label{eq:intro-nuclear-constants}
\end{equation}
We also consider the operator norm and define
\begin{equation}
L_{\A}^{\HH,2}(r):=\inf_{\substack{X,Y\in\Mr\\ X\neq Y}}\frac{\norm{\PhiA(X)-\PhiA(Y)}_1}{\norm{X-Y}_2},
\qquad
U_{\A}^{\HH,2}(r):=\sup_{\substack{X,Y\in\Mr\\ X\neq Y}}\frac{\norm{\PhiA(X)-\PhiA(Y)}_1}{\norm{X-Y}_2}.
\label{eq:intro-operator-constants}
\end{equation}
In this paper, we focus on the following condition numbers associated with the  map $\PhiA$:
\begin{equation}\label{eq:intro-beta}
\beta_{\A}^{\HH,*}(r):=U_{\A}^{\HH,*}(r)/L_{\A}^{\HH,*}(r)
 \text{\quad and \quad}
\beta_{\A}^{\HH,2}(r):=U_{\A}^{\HH,2}(r)/L_{\A}^{\HH,2}(r).
\end{equation}
If the  map $\PhiA$ fails to be injective on $\Mr(\HH)$, equivalently,
 $L_{\A}^{\HH,*}(r)=L_{\A}^{\HH,2}(r)=0$, then we set $\beta_{\A}^{\HH,*}(r)=\beta_{\A}^{\HH,2}(r)=+\infty$.

  These quantities can be viewed as nonlinear analogues of the extremal singular value ratios of a linear operator:
$L_{\A}^{\HH,*}(r)$ and $L_{\A}^{\HH,2}(r)$ characterize the smallest global separation induced by the sensing map,
whereas $U_{\A}^{\HH,*}(r)$ and $U_{\A}^{\HH,2}(r)$ characterize the largest one.
Consequently, $\beta_{\A}^{\HH,*}(r)$ and $\beta_{\A}^{\HH,2}(r)$ provide scale-invariant measures of the stability of the sensing map.
Related Lipschitz stability constants have been extensively investigated in the phase retrieval literature
\cite{BandeiraCahillMixonNelson2014,BalanWang2015,BalanZou2016,AlharbiAlshabhiFreemanGhoreishi2024,XiaXuXu2024,PengHanHuang2025}.
However, to our knowledge, analogous questions for general positive semidefinite matrix sensing remain largely unexplored.

Naturally, two fundamental questions arise concerning the optimal
conditioning of positive semidefinite matrix sensing.

\begin{enumerate}[label=Question \Roman*.]
\item
{\em
Does there exist an absolute lower bound, denoted by
\[
\beta_{0}^{\HH,*}(r)
\quad\text{and}\quad
\beta_{0}^{\HH,2}(r),
\]
such that
\[
\beta_{\A}^{\HH,*}(r)\geq \beta_{0}^{\HH,*}(r),
\qquad
\beta_{\A}^{\HH,2}(r)\geq \beta_{0}^{\HH,2}(r)
\]
for every sensing family $\A$ with an arbitrary number of measurements?
In other words, what is the intrinsic limitation on the condition numbers
of positive semidefinite matrix sensing?
}
\item
{\em For a fixed number of measurements $m$, what is the optimal sensing
ensemble that achieves the smallest possible condition number?
}

\end{enumerate}
The first question concerns the fundamental lower limit of stability,
while the second characterizes optimal measurement ensembles under a
prescribed measurement budget.

For the rank-one case corresponding to phase retrieval, these questions
have been studied in recent works
\cite{XiaXuXu2024}.
However, the corresponding problems for general low-rank positive
semidefinite matrix sensing remain open.
In this paper, we resolve both questions by determining the optimal lower
bounds and proving the optimality of rank-one Gaussian measurements.

\subsection{Related work}
The rank one case which corresponds to the phase retrieval problem is of special attention. It is the connection between our paper and recent condition number research.
The condition-number in phase retrieval already has a substantial and explicit theory.
The authors of \cite{XiaXuXu2024} introduced the condition number $\beta_{\A}^{\mathrm{mag}}=U_{\A}^{\mathrm{mag}}/L_{\A}^{\mathrm{mag}}$ for the magnitude map $\Phi_A^{\mathrm{mag}}(\bm{x}):=\abs{A\bm{x}}=(\abs{\bm{a}_1^*\bm{x}},\dots,\abs{\bm{a}_m^*\bm{x}})$ and proved the universal lower bound
\begin{equation}
\beta_{\A}^{\mathrm{mag}}\ge
\begin{cases}
\sqrt{\pi/(\pi-2)}, & \HH=\R,\\
\sqrt{4/(4-\pi)}, & \HH=\C.
\end{cases}
\label{eq:intro-xx-bound}
\end{equation}
They further proved that Gaussian measurements asymptotically attain these constants, and in the real setting they obtained the sharper estimate $\beta_A\ge (1-(m\sin(\pi/(2m)))^{-1} )^{-1/2}$.
For the intensity map $\Psi_A(\bm{x})=\abs{A\bm{x}}^2$ \cite{PengHanHuang2025} defined
\begin{equation}
L^{\ell_p}_{\Psi_A}=\inf_{\bm{x}\bm{x}^* \neq \bm{y}\bm{y}^*} \frac{\norm{\Psi_A(\bm{x})-\Psi_A(\bm{y})}_p}{\norm{\bm{x}\bm{x}^*-\bm{y}\bm{y}^*}_*},
\qquad
U^{\ell_p}_{\Psi_A}=\sup_{\bm{x}\bm{x}^* \neq \bm{y}\bm{y}^*} \frac{\norm{\Psi_A(\bm{x})-\Psi_A(\bm{y})}_p}{\norm{\bm{x}\bm{x}^*-\bm{y}\bm{y}^*}_*},
\label{eq:intro-peng-characterization}
\end{equation}
for $1\le p<\infty$, so that $\beta_{\Psi_A}^{\ell_p}=U^{\ell_p}_{\Psi_A}/L^{\ell_p}_{\Psi_A}$. They proved the universal bounds
\begin{equation}
\beta_{\Psi_A}^{\ell_1}\ge
\begin{cases}
\pi/2, & \HH=\R,\\
2, & \HH=\C,
\end{cases}
\qquad
\beta_{\Psi_A}^{\ell_2}\ge
\begin{cases}
\sqrt{3}, & \HH=\R,\\
2, & \HH=\C,
\end{cases}
\label{eq:intro-peng-universal}
\end{equation}
with the sharper real estimate $\beta_{\Psi_A}^{\ell_1}\ge m\tan(\pi/(2m))$. They also established asymptotic sharpness for Gaussian measurements. Our nuclear-norm condition number for rank-one is exactly their $\ell_1$ condition number.

As for general low rank positive semidefinite matrix sensing, the literature mainly focuses on the stability of the
recovery process rather than the condition number of the sensing map. Let $X_0\in \Snp(\HH)$ be the target matrix, the measurements obtained from the sensing map in \eqref{eq:intro-map} can be represented as
\[
\bm{y}=\PhiA(X_0)+\bm{w},
\]
where $\bm{w}\in \R^m$ is the noise term. The convex estimators that appear most often in this literature are the following:
\begin{align}
\widehat X_{\mathrm{tr}} &\in
\mathop{\mathrm{arg\,min}}\limits_{X\succeq0} \Tr(X)
\quad\text{subject to}\quad \norm{\PhiA(X)-\bm{y}}_1\le \varepsilon,
\label{eq:rw-trace}
\\
\widehat X_{2,+} &\in
\mathop{\mathrm{arg\,min}}\limits_{X\succeq0} \norm{\PhiA(X)-\bm{y}}_2,
\label{eq:rw-l2psd}
\\
\widehat X_{1,+} &\in
\mathop{\mathrm{arg\,min}}\limits_{X\succeq0} \norm{\PhiA(X)-\bm{y}}_1.
\label{eq:rw-l1psd}
\end{align}
The program \eqref{eq:rw-trace} uses the trace as a convex surrogate for the rank, whereas \eqref{eq:rw-l2psd} and \eqref{eq:rw-l1psd} are regularization-free and rely only on the positive semidefinite constraint. A commonly used sensing map in the previous work is the rank one Gaussian map $A_i=\bm{a}_i \bm{a}_i^*$ where $\bm{a}_i$ are i.i.d. standard Gaussian vectors. The stability results of rank one Gaussian map in all three estimators have been obtained in the literature.
The stability properties of the trace-minimization program
\eqref{eq:rw-trace} have been studied in
\cite{ChenChiGoldsmith2015,KabanavaKuengRauhutTerstiege2016}.
In particular, \cite{KabanavaKuengRauhutTerstiege2016} established stability guarantees for both the trace-minimization estimator
$\widehat X_{\mathrm{tr}}$ and the PSD least-squares estimator in
\eqref{eq:rw-l2psd}.
 The $\ell_1$ residual model \eqref{eq:rw-l1psd} was analyzed in \cite{LiSunChi2017}. They study corrupted measurements that the noise term $\bm{w}=
\bm{\eta}+\bm{w}'$ with $\norm{\bm{w}'}_1\le \varepsilon$ and $\norm{\bm{\eta}}_0=sm$, i.e. the fraction $s$ of measurements may be arbitrary noisy. Suppose that $X_0$ is a fixed rank-$r$ PSD matrix and the support of $\bm{\eta}$ is selected uniformly at random with the signs of its nonzero entries generated from the Rademacher distribution. If $m\ge c_1nr^2$ and $s\le {s_0}/r$, then with high probability,
\[
\norm{\widehat X_{1,+}-X_0}_F\le c_2\frac{r\varepsilon}{m}.
\]
Here $c_1, c_2$, and $s_0$ are universal constants. Notice that this theorem holds for fixed $X_0$ rather than for all target matrices simultaneously.

\subsection{Notation}
Throughout this paper, we denote the entries of a vector $\bm{x}$ by $x_i$. We say a vector $\bm{a}\in\HH^d$ is a standard Gaussian random vector if $a_i$ are
i.i.d.\ $\mathcal N(0,1)$ for $\HH=\R$, and $a_i$ are i.i.d.\ $\mathcal N(0,\tfrac12)+ i\mathcal N(0,\tfrac12)$ for $\HH=\C$.
We use $\Gamma(p,\theta)$ to denote the Gamma distribution with shape parameter $p$ and scale parameter $\theta$, and use $\operatorname{Beta}(p,q)$ to denote the Beta distribution with shape parameters $p$ and $q$. For a random variable $X$, we write $\norm{X}_{L_p}:=(\E \abs{X}^p)^{\frac 1p}$. For random variables $X$ and $Y$, we write $X\overset{d}{=}Y$ if they have the same distribution.

\begin{table}[ht]
\centering
\begin{tabular*}{\textwidth}{@{\extracolsep{\fill}}ll@{}}
\hline
Symbol
& Definition \\
\hline
$\beta_{\A}^{\HH,*}(r),\beta_{\A}^{\HH,2}(r)$
& condition numbers of measurements set $\A$, defined in \eqref{eq:intro-beta} \\
$\beta_{0}^{\HH,*}(r),\beta_{0}^{\HH,2}(r)$
& constant bounds of $\beta_{\A}^{\HH,*}(r),\beta_{\A}^{\HH,2}(r)$, defined in Definition~\ref{def:beta0}\\
$\eta^\HH(r,t)$
& expectations about beta distribution, used to define $\beta_{0}^{\HH,2}(r)$, see Definition~\ref{def:beta0}\\
$\eta_0^\HH(r)$
& $\inf_{0\leq t\leq1}\eta^\HH(r,t)$ \\
$\Aone$
& rank-one Gaussian measurement ensemble \\
$\prec$
& majorization, see Lemma~\ref{prec} \\
$\mu^{\HH,*}(r),\mu^{\HH,2}(r)$
& stability constant of the PhaseLift model, defined in Definition~\ref{def:mu}\\
$\mu^{\HH,2}$
& $\mu^{\HH,2}(r)$, see Lemma~\ref{p2} \\

\hline
\end{tabular*}
\caption{some notations in this paper}
\label{table}
\end{table}

\subsection{Our contribution}

\subsubsection{Universal Lower Bounds for Condition Numbers}

In this paper we obtain the asymptotically tight universal lower bounds on the condition number $\beta_{\A}^{\HH,*}(r)$ and $\beta_{\A}^{\HH,2}(r)$ for all $\A=\{A_i\}_
{i=1}^m\subset \Snp(\HH)$.
\begin{definition}\label{def:beta0}
For each integer $r\geq 1$ and $\HH\in\{\R,\C\}$, we define the
universal constants $\beta_0^{\HH,*}(r)$ and $\beta_0^{\HH,2}(r)$ as
follows.

\begin{enumerate}[label=(\alph*)]
\item Set
\[
\beta_0^{\R,*}(r)
:=
\frac{r\sqrt{\pi}\,\Gamma(\frac r2)}
{2\,\Gamma(\frac{r+1}{2})},
\qquad
\beta_0^{\C,*}(r)
:=
\frac{r\sqrt{\pi}\,\Gamma(r)}
{\Gamma(\frac{2r+1}{2})}.
\]

\item Set
\[
\beta_0^{\HH,2}(r)
:=
\frac{r}{(r+1)\eta_0^\HH(r)},
\]
where
\(
\eta_0^\HH(r):=\inf_{0\leq t\leq1}\eta^\HH(r,t).
\)
Here, for $\HH=\R$,
\[
\eta^\R(r,t)
:=
\frac{1}{\mathrm{B}(\frac12,\frac r2)}
\int_0^1
\abs{(1+t)u-t}
u^{-1/2}(1-u)^{r/2-1}\,du,
\]
and for $\HH=\C$,
\[
\eta^\C(r,t)
:=
r\int_0^1
\abs{(1+t)u-t}(1-u)^{r-1}\,du
=
\frac{rt-1}{r+1}
+\frac{2}{(r+1)(1+t)^r}.
\]
In particular,
\[
\eta_0^\C(r)=2^{1/(r+1)}-1,
\]
and hence
\[
\beta_0^{\C,2}(r)
=
\frac{r}{(r+1)(2^{1/(r+1)}-1)}.
\]
\end{enumerate}
\end{definition}

The following theorem shows that $\beta_0^{\HH,*}(r)$ and
$\beta_0^{\HH,2}(r)$ provide universal lower bounds for
$\beta_{\A}^{\HH,*}(r)$ and $\beta_{\A}^{\HH,2}(r)$, respectively.
\begin{theorem}\label{thm:main}
Let $\A=\{A_i\}_{i=1}^m$, where $A_i\in\Snp(\HH)$ and $\HH\in\{\R,\C\}$. Suppose that $2r\le n$. Then, we have
\[
\beta_{\A}^{\HH,*}(r) \ge \beta_0^{\HH,*}(r) \quad \text{and} \quad \beta_{\A}^{\HH,2}(r) \ge \beta_0^{\HH,2}(r).
\]
\end{theorem}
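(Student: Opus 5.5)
The plan is a rotation-averaging argument: evaluate $\PhiA$ on a random unitary (orthogonal, if $\HH=\R$) conjugate of a fixed test matrix, and compare a "large" configuration (feeding $U$) with a "small" one (feeding $L$). The inequality $2r\le n$ is what lets both test configurations fit inside $\HH^n$.

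\emph{Step 1 (reduction to a single spherical vector).} Let $Q$ be Haar distributed on the unitary (resp.\ orthogonal) group. The constants $L$ and $U$ are an infimum and a supremum over pairs in $\Mr$, and $\Mr$ is invariant under $X\mapsto QXQ^*$. Hence, for any admissible pairs $(X,Y)$ and $(X',Y')$,
\[
U_{\A}\ge \frac{\E\norm{\PhiA(QXQ^*)-\PhiA(QYQ^*)}_1}{\norm{X-Y}}, \qquad L_{\A}\le \frac{\E\norm{\PhiA(QX'Q^*)-\PhiA(QY'Q^*)}_1}{\norm{X'-Y'}}.
\]
Set $W=X-Y$ and $Z=X'-Y'$, and write each $A_i=\sum_j\lambda_{ij}\bm{v}_{ij}\bm{v}_{ij}^*$ with $\lambda_{ij}\ge0$. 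Then $Q^*\bm{v}_{ij}/\norm{\bm v_{ij}}$ is uniform on the sphere; call such a vector $\bm u$.
\begin{itemize}
\item For the upper constant, take $W\succeq0$. Then $\E\ip{A_i}{QWQ^*}=\Tr(A_i)\,\E[\bm u^*W\bm u]$ exactly.
\item For the lower constant, the triangle inequality gives $\E\abs{\ip{A_i}{QZQ^*}}\le\Tr(A_i)\,\E\abs{\bm u^*Z\bm u}$.
\end{itemize}
Summing over $i$, the common factor $\sum_i\Tr A_i$ cancels in the ratio. This yields the bound, independent of $\A$,
\[
\beta_{\A}\ \ge\ \frac{\E[\bm u^*W\bm u]/\norm{W}}{\E\abs{\bm u^*Z\bm u}/\norm{Z}}.
\]

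\emph{Step 2 (test matrices).} For $W$, take $X=P$ a rank-$r$ orthogonal projection and $Y=0$. Then $\E[\bm u^*W\bm u]=r/n$, with $\norm{W}_*=r$ and $\norm{W}_2=1$. For $Z$ the choice depends on the norm.
\begin{itemize}
\item \emph{Nuclear norm.} Take $Z=P_1-P_2$ with $P_1,P_2$ mutually orthogonal rank-$r$ projections; this needs $2r\le n$. Then $\norm{Z}_*=2r$, and
\[
\beta^{\HH,*}_{\A}(r)\ \ge\ \frac{2r}{\,n\,\E\abs{\bm u^*Z\bm u}\,}.
\]
\item \emph{Operator norm.} Take $X'=\bm e_1\bm e_1^*$ and $Y'=tR$, with $R$ a rank-$r$ projection orthogonal to $\bm e_1$ and $t\in[0,1]$. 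Then $\norm{Z}_2=1$, and this requires $r+1\le n$.
\end{itemize}

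\emph{Step 3 (computing the spherical expectations).} Write $\bm u=\bm g/\norm{\bm g}$ with $\bm g$ standard Gaussian. Since $\bm u$ and $\norm{\bm g}$ are independent, $\E\abs{\bm g^*Z\bm g}=\E\norm{\bm g}^2\cdot\E\abs{\bm u^*Z\bm u}=n\,\E\abs{\bm u^*Z\bm u}$, so the factor $n$ disappears.
\begin{itemize}
\item \emph{Nuclear case.} Here $n\E\abs{\bm u^*Z\bm u}=\E\abs{S_1-S_2}$, where $S_1,S_2$ are i.i.d.\ $\Gamma(r/2,2)$ if $\HH=\R$ and $\Gamma(r,1)$ if $\HH=\C$. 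A Beta-function computation should give $\E\abs{S_1-S_2}=4\Gamma(\frac{r+1}2)/(\sqrt\pi\,\Gamma(\frac r2))$ in the real case, and the analogous value with $r\mapsto 2r$ in the complex case. This would reproduce $\beta_0^{\HH,*}(r)$.
\item \emph{Operator case.} Restrict to the first $r+1$ coordinates and write $S=\abs{u_1}^2+\sum_{j=2}^{r+1}\abs{u_j}^2$ and $V=\abs{u_1}^2/S$. Then $V$ is independent of $S$, with $V\sim\operatorname{Beta}(\frac12,\frac r2)$ (resp.\ $\operatorname{Beta}(1,r)$), and $\E S=(r+1)/n$. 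Hence
\[
\E\abs{\bm u^*Z\bm u}=\frac{r+1}{n}\,\E\abs{(1+t)V-t}=\frac{r+1}{n}\,\eta^{\HH}(r,t),
\]
and the ratio is $r/((r+1)\eta^\HH(r,t))$. Optimizing over $t\in[0,1]$ gives $\beta_0^{\HH,2}(r)$.
\item For $\HH=\C$, the closed form of $\eta^\C$ and its minimizer come from differentiating in $t$: the minimum occurs where $(1+t)^{r+1}=2$.
\end{itemize}

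The conceptual core is Step 1, the observation that rank-one decomposition plus the triangle inequality makes the bound uniform in $\A$. The main obstacle I expect is Step 3, in the real nuclear case: the exact evaluation of $\E\abs{S_1-S_2}$ for a difference of chi-squares, and matching it to the stated Gamma-ratio. I would do this via the Beta representation $S_1/(S_1+S_2)\sim\operatorname{Beta}(\frac r2,\frac r2)$, which gives $\E\abs{S_1-S_2}=\E(S_1+S_2)\,\E\abs{2B-1}$.
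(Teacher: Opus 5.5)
Your proof is correct. It reaches the theorem through a different normalization than the paper uses.

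\textbf{How the paper argues.} The paper does not average the upper constants. It computes $U_{\A}^{\HH,*}(r)=\lambda_1$ and shows $U_{\A}^{\HH,2}(r)\ge\sum_{j\le r}\lambda_j$ using top eigenvectors of $S_{\A}=\sum_iA_i$. For the lower constants it Haar-averages only over $O(2r)$ or $U(2r)$ (resp.\ $O(r+1)$ or $U(r+1)$) acting on a fixed leading block. This produces $\sum_i\Tr(B_i)$ for the compressed matrices $B_i$. Ky Fan's maximum principle bounds this by $\sum_{j\le 2r}\lambda_j$ (resp.\ $\sum_{j\le r+1}\lambda_j$). The theorem then follows from the elementary comparisons $2r\lambda_1\ge\sum_{j\le 2r}\lambda_j$ and $(r+1)\sum_{j\le r}\lambda_j\ge r\sum_{j\le r+1}\lambda_j$.

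\textbf{How you argue.} You average over the full group for both constants. Both sides then become multiples of $\Tr(S_{\A})/n$, which cancels. The factor $r/(r+1)$ comes from $\E[\bm u^*P\bm u]=r/n$ versus $\E S=(r+1)/n$, not from comparing partial eigenvalue sums. No eigenvector choice and no Ky Fan step are needed.

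\textbf{What each route buys.} The paper's route keeps an $\A$-dependent refinement, e.g.\ $\beta_{\A}^{\HH,*}(r)\ge\beta_0^{\HH,*}(r)\cdot 2r\lambda_1/\sum_{j\le 2r}\lambda_j$. Yours is shorter and more symmetric. Your lower-constant bounds, which use the mean eigenvalue $\Tr(S_{\A})/n$ instead of the top-$2r$ or top-$(r+1)$ average, are in fact at least as sharp as the paper's. Combining them with the exact value $U_{\A}^{\HH,*}(r)=\lambda_1$ would give an even stronger $\A$-dependent bound. The degenerate case $\Tr(S_{\A})=0$, where cancellation is not allowed, is covered by the convention $\beta=+\infty$.

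\textbf{One factor-of-two slip.} In the complex nuclear case, ``the real value with $r\mapsto 2r$'' is off by a factor of $2$. You need the normalization $\E|g_i|^2=1$ to get $\E\norm{\bm g}_2^2=n$. Under it, $S_1,S_2\sim\Gamma(r,1)$, which is half of a $\chi^2_{2r}$ variable, so $\E|S_1-S_2|=2\Gamma(r+\frac12)/(\sqrt{\pi}\,\Gamma(r))$. This gives exactly $\beta_0^{\C,*}(r)$. The literal substitution would instead give $\beta_0^{\C,*}(r)/2$. Your Beta representation, with $\E(S_1+S_2)=2r$ and $B\sim\operatorname{Beta}(r,r)$, produces the correct value automatically.
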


We present the proof of Theorem~\ref{thm:main} in Section 2. This result provides the
first known universal constant lower bounds on the condition number for
low-rank positive semidefinite matrix sensing. Previous studies mainly
show that certain random ensembles or specific measurement classes
guarantee stable recovery. However, they do not quantify the precise
condition number  achievable by
positive semidefinite sensing maps.  Moreover, in the rank-one case, Theorem~\ref{thm:main}
recovers the intensity phase retrieval lower bounds in
\eqref{eq:intro-peng-universal}, since $
\beta_0^{\R,*}(1)=\frac{\pi}{2},
\beta_0^{\C,*}(1)=2.
$

Additionally, we show that the lower bounds in Theorem~\ref{thm:main}
are asymptotically tight, up to one exceptional case, for both the
nuclear norm and operator norm condition numbers, with rank-one Gaussian
measurements.

\begin{theorem}\label{d1}
Let $\HH\in\{\R,\C\}$, and let
$\bm{a}_1,\ldots,\bm{a}_m\in\HH^n$ be independent standard Gaussian
vectors. Consider the rank-one Gaussian measurement ensemble
$
\Aone:=\{\bm{a}_i\bm{a}_i^*\}_{i=1}^m.
$
Suppose that $1\leq r\leq n/2$. Then, for every $\delta\in(0,1)$,
provided that
$
m\geq C\delta^{-2}nr,
$
each of the following estimates holds with probability at least
$1-2\exp(-c\delta^2m)$:
\begin{equation}\label{rabdom1}
\beta_0^{\HH,*}(r)
\leq
\beta_{\Aone}^{\HH,*}(r)
\leq
\beta_0^{\HH,*}(r)\frac{1+\delta}{1-\delta},
\end{equation}
and
\begin{equation}\label{rabdom2}
\beta_0^{\HH,2}(r)
\leq
\beta_{\Aone}^{\HH,2}(r)
\leq
\frac{U_r^{\HH}}{(r+1)\eta_0^{\HH}(r)}
\frac{1+\delta}{1-\delta}.
\end{equation}
Here $c,C>0$ are universal constants, and
\[
U_r^{\HH}
:=
\begin{cases}
\dfrac{4}{\pi}, & \text{if }(\HH,r)=(\R,1),\\[2mm]
r, & \text{otherwise}.
\end{cases}
\]
\end{theorem}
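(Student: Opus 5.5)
The lower bounds in \eqref{rabdom1} and \eqref{rabdom2} follow directly from Theorem~\ref{thm:main}, so only the upper bounds need work. The plan is to compare the empirical quantity $\frac1m\norm{\Phi_{\Aone}(X)-\Phi_{\Aone}(Y)}_1=\frac1m\sum_i\abs{\bm{a}_i^*H\bm{a}_i}$, where $H=X-Y$, with its expectation $\E\abs{\bm{a}^*H\bm{a}}$. Here $H$ ranges over Hermitian matrices of rank at most $2r$ with at most $r$ positive and at most $r$ negative eigenvalues. By unitary invariance, $\E\abs{\bm{a}^*H\bm{a}}=\E\abs{\sum_j\lambda_j\abs{g_j}^2}$, where $\lambda_j$ are the eigenvalues of $H$ and the $g_j$ are i.i.d.\ standard Gaussian. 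This quantity is positively homogeneous, so it suffices to compute it on the unit sphere of each norm.

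The first step is a uniform concentration statement. With probability at least $1-2\exp(-c\delta^2m)$, for all such $H$,
\[
(1-\delta)\,\E\abs{\bm{a}^*H\bm{a}}\le \frac1m\sum_{i=1}^m\abs{\bm{a}_i^*H\bm{a}_i}\le(1+\delta)\,\E\abs{\bm{a}^*H\bm{a}}.
\]
I would prove this by the standard route. For fixed $H$, the variables $\abs{\bm{a}_i^*H\bm{a}_i}$ are sub-exponential with $\psi_1$-norm at most $C\norm{H}_F$, and $\norm{H}_F\le C'\,\E\abs{\bm{a}^*H\bm{a}}$ by a small-ball or Paley--Zygmund argument, so Bernstein's inequality gives the fixed-$H$ bound. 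Next take a $\delta$-net of the set of rank-$\le 2r$ matrices of unit Frobenius norm, which has cardinality at most $(C/\delta)^{Cnr}$; this is exactly what forces $m\ge C\delta^{-2}nr$. To pass from the net to all $H$, use the upper bound $\frac1m\sum\abs{\bm{a}_i^*\Delta\bm{a}_i}\lesssim\norm{\Delta}_*$ on low-rank differences $\Delta$. That bound itself comes from $\frac1m\sum\abs{\bm{a}_i^*\bm{u}}^2\le 1+\delta$ uniformly over unit vectors $\bm{u}$. I expect this step to be the main technical obstacle: the sub-exponential tails and the need for relative rather than additive error in the net argument must be handled with the usual decomposition-of-the-difference trick.

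Given the concentration, $\beta_{\Aone}\le\frac{1+\delta}{1-\delta}\cdot\frac{\sup F}{\inf F}$, where $F(H)=\E\abs{\bm{a}^*H\bm{a}}/\norm{H}$. Everything then reduces to computing the extremes of $F$, which I expect the Section~2 analysis has essentially already done: the lower-bound proof of Theorem~\ref{thm:main} presumably averages over Gaussian-like configurations and lands on exactly these expectations.

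In the nuclear norm, convexity and majorization (Lemma~\ref{prec}) show that the supremum of $F$ is attained at $H\succeq0$, where $F=1$ since $\E\,\bm{a}^*H\bm{a}=\Tr H$. The infimum is attained at $H=P-Q$ with $P,Q$ orthogonal rank-$r$ projections, normalized, giving $\E\abs{\chi_r-\chi_r'}/(2r)$ after scaling. This ratio should equal $\beta_0^{\HH,*}(r)$ by a Beta/Gamma computation.

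In the operator norm, the supremum over $\norm{H}_2\le1$ is $r$, attained at $H=P$, except when $(\HH,r)=(\R,1)$. In that case the supremum is $\E\abs{g_1^2-g_2^2}=4/\pi$, which exceeds $1$, and this accounts for $U_r^{\HH}$. The infimum reduces, by majorization, to $H=P-tQ'$ with $Q'$ rank one, after which the Beta-distribution computation gives $(r+1)\eta_0^\HH(r)$. Combining these extremes with the concentration bound yields \eqref{rabdom1} and \eqref{rabdom2}.
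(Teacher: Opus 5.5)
Your architecture matches the paper's. The lower bounds come from Theorem~\ref{thm:main}. The upper bounds come from a uniform relative concentration of $\frac1m\sum_i\abs{\bm{a}_i^*H\bm{a}_i}$ around $\E\abs{\bm{a}^*H\bm{a}}$ (the paper's Lemma~\ref{mastertrace}), combined with the extremes of $\E\abs{\bm{a}^*H\bm{a}}$ on the cone of differences $H=X-Y$ (Lemma~\ref{d3}).

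There is a genuine gap in how you propose to prove the concentration: a single-scale net argument of the kind you describe cannot give $m\ge C\delta^{-2}nr$ with a universal $C$.

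First, the union bound of $2e^{-c\delta^2 m}$ over $(C/\delta)^{Cnr}$ net points needs $m\gtrsim\delta^{-2}nr\log(C/\delta)$. So the net is not ``exactly what forces'' $m\ge C\delta^{-2}nr$.

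Second, and more seriously, your passage from the net to all $H$ uses $\frac1m\sum_i\abs{\bm{a}_i^*\Delta\bm{a}_i}\lesssim\norm{\Delta}_*\le 2\sqrt r\,\norm{\Delta}_F$. For Frobenius-unit $H$ this gives an additive error of order $\sqrt r\,\delta$. On this cone, however, $\E\abs{\bm{a}^*H\bm{a}}$ can be as small as order $\norm{H}_F$. For instance, $H=P_U-P_V$ with orthogonal rank-$r$ projections has $\E\abs{\bm{a}^*H\bm{a}}=2r/\beta_0^{\HH,*}(r)\asymp\sqrt r\asymp\norm{H}_F$. So the relative error you get is $\sqrt r\,\delta$, not $\delta$.

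The decomposition-of-the-difference trick does not repair this. It bounds $\frac1m\sum_i\abs{\bm{a}_i^*\Delta\bm{a}_i}$ by $\sqrt2\norm{\Delta}_F$ times the supremum of the same quantity over Frobenius-unit rank-$2r$ matrices. That supremum is itself of order $\sqrt r$ (attained near normalized projections), while the relevant infimum over the cone is of order $1$. Within your framework you are forced to a net at scale $\delta/\sqrt r$, and hence at best $m\gtrsim\delta^{-2}nr\log(r/\delta)$.

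The missing idea is to linearize before netting. The paper symmetrizes (Lemma~\ref{thm:Lp-symm}) and applies the contraction principle to the $1$-Lipschitz map $x\mapsto\abs{x}$ (Lemma~\ref{thm:Lp-contract}). This reduces everything to $\sup_H\abs{\ip{Z_m}{H}+S_m\Tr H}$, where $Z_m=\sum_i\varepsilon_i(\bm{a}_i\bm{a}_i^*-I_n)$ and $S_m=\sum_i\varepsilon_i$; the $S_m$ term is handled by Hoeffding.

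The normalized set $T(r)$ lies inside $\{\norm{H}_F\le K,\ \norm{H}_*\le K\sqrt r,\ \abs{\Tr H}\le K\}$. This uses the lower endpoint $\E\abs{\bm{a}^*H\bm{a}}\ge\norm{H}_*/\beta_0^{\HH,*}(r)$ with $\beta_0^{\HH,*}(r)=O(\sqrt r)$, together with Lemma~\ref{Fnorm}. Von Neumann's inequality then gives $\sup\abs{\ip{Z_m}{H}}\le 2K\norm{Z_m}_{(r,2)}$. Since $\norm{\cdot}_{(r,2)}$ is a supremum of a linear functional, a constant-scale net suffices. Its tail (Lemma~\ref{lem:gmrz-lemma7}) carries no logarithm, which yields exactly $m\gtrsim\delta^{-2}nr$.

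Separately, your sketch only asserts the operator-norm upper endpoint $\E\abs{\bm{a}^*H\bm{a}}\le U_r^{\HH}\norm{H}_2$; this needs an argument. Convexity reduces it to $x\in\{0,\pm1\}^{2r}$. The Gamma-difference bounds then handle the cases, including $\E\abs{\chi^2_2-\chi^2_1}=2\sqrt2-1\le 2$ for $(p,q)=(2,1)$ in the real case.

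There are also two slips. The nuclear-norm extremal ratio equals $1/\beta_0^{\HH,*}(r)$, not $\beta_0^{\HH,*}(r)$. The operator-norm minimizer is $P-tQ'$ with $P$ of rank one and $Q'$ of rank $r$, not $Q'$ of rank one.
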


We prove Theorem~\ref{d1} in section 3. Recall that $\beta_0^{\HH,2}(r):=\frac{r}{(r+1)\eta_0^{\HH}(r)}$. By letting $\delta \to 0$ in Theorem~\ref{d1}, the condition number $\beta_{\Aone}^{\HH,*}(r)$ and $\beta_{\Aone}^{\HH,2}(r)$  approaches the  $\beta_{0}^{\HH,*}(r)$ and $\beta_{0}^{\HH,2}(r)$ respectively, except for $\HH=\R$ and $r=1$.
Thus, Theorem~\ref{d1} shows that the universal lower bounds
$\beta_{0}^{\HH,*}(r)$ and $\beta_{0}^{\HH,2}(r)$ are asymptotically
attained, except for the case $(\HH,r)=(\R,1)$, which will be treated
separately in the following proposition.
\begin{proposition}\label{p1}
Denote the unique solution of $\cos(\vartheta)-\vartheta=2-\frac{\pi}{2}$ in $(0,\pi/2)$ by $\vartheta_0$. Let $\A=\{A_i\}_{i=1}^m$ where $A_i\in\Snp(\R)$ and $n\ge 2$. Then we have the lower bound
\[
\beta_{\mathcal A}^{\R,2}(1) \,\,\ge\,\, \sec{\vartheta_0}\,\, \approx\,\, 1.1186435067,
\]
which is asymptotically sharp.
\end{proposition}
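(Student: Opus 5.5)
The plan is to reduce everything to a two-dimensional averaging argument, in the spirit of the proof of Theorem~\ref{thm:main}. Since $n\ge 2$, I restrict the test pairs to $X=\bm{x}\bm{x}^T$, $Y=\bm{y}\bm{y}^T$ with $\bm{x},\bm{y}\in\operatorname{span}\{e_1,e_2\}$. Then only the $2\times 2$ compressions $B_i$ of the $A_i$ matter, and each $B_i\succeq 0$.

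\textbf{Parametrizing differences and measurements.} Any difference $M=X-Y$ of two such rank-one matrices has eigenvalues $a\ge 0\ge -b$, so $\norm{M}_2=\max(a,b)$. I write
\[
M=cI+dR_\psi\diag(1,-1)R_\psi^T,\qquad c=\tfrac{a-b}{2},\ d=\tfrac{a+b}{2}.
\]
For a measurement block $B_i$, let $t_i=\Tr B_i$ and let $\gamma_i=(\lambda_1(B_i)-\lambda_2(B_i))/t_i\in[0,1]$. Then
\[
\ip{B_i}{M}=t_i\bigl(c+d\gamma_i\cos(2(\psi-\psi_i))\bigr).
\]
I average over rotations $\psi$ uniform on $[0,\pi)$. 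Every rotated pair is admissible, so $U$ is at least the average and $L$ is at most it. Setting $h(a,b;\gamma):=\E_\psi\abs{c+d\gamma\cos\psi}$ and $T=\sum_i t_i$, this gives for every fixed $(a,b)$
\[
L\le \frac{\sum_i t_ih(a,b;\gamma_i)}{\max(a,b)}\le U .
\]

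\textbf{Collapsing to one parameter.} The function $\gamma\mapsto h(a,b;\gamma)$ is convex on $[0,1]$. Let $\bar\gamma=\sum t_i\gamma_i/T$.
\begin{itemize}[label=$\cdot$]
\item For the upper constant, Jensen gives $\sum t_ih\ge T\,h(a,b;\bar\gamma)$.
\item For the lower constant, the chord bound gives $\sum t_ih\le T\bigl[(1-\bar\gamma)\abs{c}+\bar\gamma\,h(a,b;1)\bigr]$.
\end{itemize}
Hence
\[
\beta_{\A}^{\R,2}(1)\ \ge\ \inf_{\bar\gamma\in[0,1]}\ \frac{\sup_{a,b}h(a,b;\bar\gamma)/\max(a,b)}{\inf_{a,b}\bigl[(1-\bar\gamma)\abs{c}+\bar\gamma h(a,b;1)\bigr]/\max(a,b)} .
\]
By homogeneity I normalize $\max(a,b)=1$, which leaves one parameter $s=\min(a,b)/\max(a,b)$ in each problem. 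Two choices are natural:
\begin{itemize}[label=$\cdot$]
\item $Y=0$ (i.e.\ $s=0$) gives $h=\tfrac12$ for all $\gamma$, which is a good candidate for the numerator.
\item $a=b$ gives $c=0$ and the value $\tfrac{2}{\pi}\bar\gamma$, which is a good candidate for the denominator.
\end{itemize}
Intermediate $s$ are needed to get the exact constant, though. The integral $\E\abs{c+d\gamma\cos\psi}$ has a kink where $c=d\gamma\cos\vartheta$: its value is $\tfrac{2}{\pi}\bigl(c(\tfrac{\pi}{2}-\vartheta)\cdots+d\gamma\sin\vartheta\bigr)$-type. Optimizing over $s$ and then over $\bar\gamma$, the first-order conditions should produce the equation $\cos\vartheta-\vartheta=2-\frac{\pi}{2}$ and the value $\sec\vartheta_0$.

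\textbf{Main obstacle.} This explicit one-dimensional min--max is the step I expect to be hardest: identifying the extremal $s$ for each side, and checking that the chord and Jensen relaxations lose nothing at the optimum. I would first guess that the extremal measure puts all its $t$-mass on $\gamma\in\{0,1\}$, where the chord bound is an equality and the mixture makes Jensen tight in the relevant range. I would then verify this directly.

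\textbf{Sharpness.} Rank-one Gaussian measurements always have $\gamma_i=1$, which explains why Theorem~\ref{d1} is not sharp here. Instead I take a mixture ensemble: a fraction $p$ of rank-one Gaussians $\bm{a}_i\bm{a}_i^T$, plus a fraction of isotropic rank-two (or $\gamma=0$-type) Gaussian measurements $\bm{a}_i\bm{a}_i^T+\bm{b}_i\bm{b}_i^T$. The fraction $p$ is tuned so that the effective $\gamma$-distribution matches the extremizer above. The concentration argument used for Theorem~\ref{d1} (uniform deviation bounds over the rank-$\le 2$ set via an $\varepsilon$-net, needing $m\gtrsim \delta^{-2}n$) then shows that $U$ and $L$ are within factors $1\pm\delta$ of their expectations. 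Those expectations are computed by the same rotation-invariant formulas and attain the bound, giving $\beta\le\sec\vartheta_0\frac{1+\delta}{1-\delta}$.
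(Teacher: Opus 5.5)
Your reduction for the inequality is the paper's argument in different coordinates. With $T=2\sum_i p_i$ and $T\bar\gamma=2\sum_i q_i$, your chord bound is exactly the paper's step $\abs{2p_it+2q_i(1-t)\cos\theta}\le 2(p_i-q_i)t+2q_i\abs{t+(1-t)\cos\theta}$. Since $h$ is convex in the shape parameter, the supremum in your numerator sits at $s\in\{0,1\}$. There $h$ is affine in $\gamma$, so Jensen costs nothing, and you recover the paper's $U\ge\max\{\sum_i p_i,\frac4\pi\sum_i q_i\}$.

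What you have not done is the computation that actually produces the constant; you leave it as ``should produce''. You also misplace the difficulty: for the inequality, nothing about tightness of the relaxations needs checking. The step closes quickly. Normalize $a=1$, $b=1-2t$ (so $c=t$, $d=1-t$, $t\in[0,\frac12]$), and set $\phi(t):=\E\abs{t+(1-t)\cos\psi}=\frac2\pi\bigl(\sqrt{1-2t}+t\arcsin\frac{t}{1-t}\bigr)$, which is $\ge t$ by Jensen.
\begin{itemize}
\item Both $D(\bar\gamma):=\inf_t[t+\bar\gamma(\phi(t)-t)]$ and $\bar\gamma/D(\bar\gamma)=1/\inf_t[t/\bar\gamma+\phi(t)-t]$ are nondecreasing in $\bar\gamma$.
\item Hence your ratio $\max\{\frac12,\frac2\pi\bar\gamma\}/D(\bar\gamma)$ is smallest at $\bar\gamma=\frac\pi4$, where it equals $1/\min_t m(t)$ with $m(t)=(2-\frac\pi2)t+\sqrt{1-2t}+t\arcsin\frac{t}{1-t}$.
\item The substitution $\sin\vartheta=\frac{t}{1-t}$ turns $m$ into $\frac{\cos\vartheta+(\vartheta+2-\frac\pi2)\sin\vartheta}{1+\sin\vartheta}$, whose minimum is $\cos\vartheta_0$.
\end{itemize}
The paper reaches the same point by a case split on $\sum_i q_i$ versus $\frac\pi4\sum_i p_i$.

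The sharpness part contains an actual error. You correctly guess that the extremal configuration puts all trace-mass on $\gamma\in\{0,1\}$ with $\bar\gamma=\frac\pi4$. But the measurements you then propose, $\bm{a}_i\bm{a}_i^T+\bm{b}_i\bm{b}_i^T$ with Gaussian $\bm{a}_i,\bm{b}_i$, are not of $\gamma=0$ type: their $2\times2$ compressions are Wishart matrices whose $\gamma$ has a density on all of $(0,1)$. At the minimizer $t_0$, the map $\gamma\mapsto\E\abs{t_0+(1-t_0)\gamma\cos\psi}$ is convex but not affine on $[0,1]$. It is constant up to $\sin\vartheta_0$ and strictly convex after. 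So the chord bound is strict for every such measurement, and $L$ drops a fixed amount below the chord value. Meanwhile $U\ge T\max\{\frac12,\frac2\pi\bar\gamma\}$ persists, so the mixture's condition number stays bounded away from $\sec\vartheta_0$ whatever the mixing fraction.

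The $\gamma=0$ part has to be an exact multiple of the identity, since that is the only way to get $\gamma=0$ on every plane. For example, use a deterministic $(\frac4\pi-1)m\,I_n$ alongside $m$ rank-one Gaussian measurements. Once the identity part is deterministic, the paper's route is simpler still: take $n=2$, $A_0=(1-\frac\pi4)I_2$ and $A_k=\frac{\pi}{2N}\bm{u}_k\bm{u}_k^T$ at equally spaced angles. The ratio is then a Riemann sum converging to $m(t)$ uniformly in $(t,\theta)$. So $U\to\max m=1$ and $L\to\min m=\cos\vartheta_0$, with no concentration argument at all.
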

\noindent
The proof of Proposition~\ref{p1} is presented in Appendix.

\subsubsection{Recovery Guarantees for the Model in \eqref{eq:rw-l1psd}}

As for recovery, we consider the PhaseLift model in \eqref{eq:rw-l1psd} for random rank-one Gaussian measurements. Since there is no rank constraint in \eqref{eq:rw-l1psd}
which means the solution of \eqref{eq:rw-l1psd} can be full rank, we need new constants to describe the condition number.
\begin{definition}\label{def:mu}
Let $\bm{a}\in\HH^n$ be a standard Gaussian vector. Set
\begin{equation}\label{mu}
\mu^{\HH,*}(r):=\inf_{n\geq r} \inf_{H\in T^*(n,r)} \E \abs{\bm{a}^*H\bm{a}} \text{\ \ and\ \ } \mu^{\HH,2}(r):=\inf_{n\ge r} \inf_{H\in T^2(n,r)} \E \abs{\bm{a}^*H\bm{a}}.
\end{equation}
Here,
\[
T^*(n, r) :=\{H\in\Sn(\HH):\norm{H}_\ast=1,\ \nu_+(H)\le r\}
\]
 and
 \[
 T^2(n, r) :=\{H\in\Sn(\HH):\norm{H}_2=1,\ \nu_+(H)\le r\}
 \]
  where $\nu_+(H)$
  is the number of positive eigenvalues
of $H$ counted with multiplicity.
\end{definition}
In fact $\mu^{\HH,*}(r)$ and $\mu^{\HH,2}(r)$ have a close relationship with $\beta^{\HH,*}_0(r)$ and $\beta^{\HH,2}_0(r)$ as the lemma below.

\begin{lemma}\label{p2}
Let $m_1$ and $m_2$ denote the medians of $\chi^2(1)$ and
$\Gamma(1,1)$, respectively. Then, for all $r\geq1$,
\[
\mu^{\R,2}(r)=\E\abs{\chi^2(1)-m_1},
\qquad
\mu^{\C,2}(r)=\E\abs{\Gamma(1,1)-m_2}=\ln 2.
\]
Furthermore, the following asymptotic relations hold:
\[
\lim_{r\to\infty}\mu^{\HH,*}(r)\beta_0^{\HH,*}(r)
=
\frac{1}{\sqrt{2}},
\]
and
\[
\lim_{r\to\infty}(r+1)\eta_0^{\HH}(r)
=
\mu^{\HH,2}.
\]
Here, the definitions of $\beta_0^{\HH,*}(r)$ and $\eta_0^{\HH}(r)$ are presented in Definition \ref{def:beta0}.
\end{lemma}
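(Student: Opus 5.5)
The plan is to diagonalize and reduce every quantity to a weighted sum of i.i.d.\ scalar variables. By unitary (orthogonal) invariance of the standard Gaussian vector, if $H\in\Sn(\HH)$ has eigenvalues $\lambda_1,\dots,\lambda_n$, then $\bm{a}^*H\bm{a}\overset{d}{=}\sum_j\lambda_j\xi_j$. Here the $\xi_j$ are i.i.d.\ $\chi^2(1)$ when $\HH=\R$ and i.i.d.\ $\Gamma(1,1)$ when $\HH=\C$. Hence $\mu^{\HH,2}(r)$ and $\mu^{\HH,*}(r)$ are infima of $\E|\sum_j\lambda_j\xi_j|$ over real weight vectors with at most $r$ positive entries, normalized by $\max_j|\lambda_j|=1$ and $\sum_j|\lambda_j|=1$ respectively. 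The infimum also ranges over the length $n$.

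\emph{Operator norm.} Let $\xi$ denote a generic copy of $\xi_j$ and $m_\HH$ its median. For the lower bound, pick an index with $|\lambda_{j_0}|=1$ and condition on $W=\sum_{j\ne j_0}\lambda_j\xi_j$, which is independent of $\xi_{j_0}$. This gives
\[
\E\Bigl|\sum_j\lambda_j\xi_j\Bigr|=\E\bigl|\xi_{j_0}\pm W\bigr|\ge\inf_{c\in\R}\E|\xi-c|=\E|\xi-m_\HH|.
\]
For the matching upper bound, take $H=e_1e_1^*-\frac{m_\HH}{N}\sum_{j=2}^{N+1}e_je_j^*$, which has $\nu_+(H)=1\le r$ and $\norm{H}_2=1$ once $N\ge m_\HH$. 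By the law of large numbers in $L_1$, $\frac1N\sum_{j=2}^{N+1}\xi_j\to1$, so $\E|\bm{a}^*H\bm{a}|\to\E|\xi-m_\HH|$. This shows the value is independent of $r$. In the complex case $m_\C=\ln2$, and a direct computation gives $\E|\xi-\ln 2|=\ln2$.

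\emph{Nuclear norm.} Write $H=H_+-H_-$ and set $P=\sum_{\lambda_j>0}\lambda_j\xi_j$ (at most $r$ terms, total weight $s$) and $N=\sum_{\lambda_j<0}|\lambda_j|\xi_j$ (weight $1-s$). We have $\E|P-N|\ge|2s-1|$, and by Jensen conditionally on $P$ also $\E|P-N|\ge\inf_c\E|P-c|$. The key step, which I expect to be the main obstacle, is a majorization bound using Lemma~\ref{prec}: among nonnegative weights with at most $r$ entries summing to $s$, the map $w\mapsto\inf_c\E|\sum w_j\xi_j-c|$ is minimized by equal weights. I would try to prove this by showing the map is Schur-convex, via convexity and symmetry in $w$. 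Granting it, $\inf_c\E|P-c|\ge\frac sr\,\E|G_r-\mathrm{med}(G_r)|$, where $G_r=\sum_{j\le r}\xi_j$ is $\chi^2(r)$ or $\Gamma(r,1)$. By the CLT with uniform integrability, $\E|G_r-\mathrm{med}|\sim\sqrt{\operatorname{Var}G_r}\sqrt{2/\pi}$, which is $2/\sqrt{\pi}\cdot\sqrt r$ in the real case and $\sqrt{2/\pi}\cdot\sqrt r$ in the complex case. Minimizing $\max\bigl(|2s-1|,\,s\,\kappa_r/\sqrt r\bigr)$ over $s$ gives $(1+o(1))\frac{\kappa_r}{2\sqrt r}$, namely $\frac1{\sqrt{\pi r}}$ (real) and $\frac1{\sqrt{2\pi r}}$ (complex). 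The upper bound uses $r$ eigenvalues $\frac1{2r}$ and $N\to\infty$ eigenvalues $-\frac1{2N}$, which produces $\frac1{2r}\E|G_r-r|$ with the same asymptotics. Stirling gives $\beta_0^{\R,*}(r)\sim\sqrt{\pi r/2}$ and $\beta_0^{\C,*}(r)\sim\sqrt{\pi r}$, so both products tend to $1/\sqrt2$.

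\emph{The $\eta_0$ limit.} For $\HH=\C$ we have $\eta_0^\C(r)=2^{1/(r+1)}-1$, so $(r+1)\eta_0^\C(r)\to\ln2$ directly. For $\HH=\R$, write $\eta^\R(r,t)=\E|(1+t)U-t|$ with $U\sim\operatorname{Beta}(\tfrac12,\tfrac r2)$. Substitute $t=\tau/(r+1)$ and use $(r+1)U\to\chi^2(1)$ in $L_1$. This gives $(r+1)\eta^\R(r,\tau/(r+1))\to\E|\chi^2(1)-\tau|$ locally uniformly in $\tau$. For large $\tau$ the bound $\E|\cdot|\ge|(1+t)\E U-t|$ gives growth in $\tau$. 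Hence the infimum localizes to bounded $\tau$ and converges to $\inf_\tau\E|\chi^2(1)-\tau|=\E|\chi^2(1)-m_1|=\mu^{\R,2}$.
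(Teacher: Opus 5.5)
Your proposal is correct, but in each of the three parts it takes a different route from the paper.

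\textbf{Operator norm.} The paper majorizes an arbitrary admissible eigenvalue vector down to $(1,-\frac{t}{n-1},\dots,-\frac{t}{n-1})$. It then shows, by zero-padding and re-majorizing, that these values decrease in $n$, and passes to the limit by the law of large numbers. Your conditioning on $W$ gives the lower bound $\E\abs{\xi-m_\HH}$ in one line, with no majorization and no interchange of $\inf$ and $\lim$. Your explicit sequence supplies the matching upper bound.

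\textbf{Nuclear norm.} The paper first derives the exact formula $\mu^{\HH,*}(r)=\inf_t\E\abs{\frac tr Y_r-(1-t)}$ (two-level majorization, padding, law of large numbers) and then runs a CLT and subsequence argument. You only bound $\mu^{\HH,*}(r)$ below by $\max\bigl(\abs{2s-1},\frac sr\E\abs{G_r-\mathrm{med}(G_r)}\bigr)$ and check that this is asymptotically sharp, which suffices for the limit. The step you flagged as the main obstacle is immediate. The map $(w,c)\mapsto\E\abs{\sum_j w_j\xi_j-c}$ is jointly convex, so its infimum over $c$ is convex in $w$. Being permutation-symmetric, it is Schur-convex, hence minimized at equal weights by Lemma~\ref{prec}.

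You can also keep more than you do. If you retain the conditional Jensen bound $\E\abs{P-N}\ge\E\abs{P-(1-s)}$ rather than weakening it to $\inf_c$, the same Schur-convexity argument with fixed $c=1-s$ gives $\E\abs{P-N}\ge\E\abs{\frac sr G_r-(1-s)}$. Combined with your upper-bound construction (weights $s/r$ and $-(1-s)/N$), this recovers the paper's exact formula more directly than its padding argument.

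\textbf{The $\eta_0$ limit.} The paper gets this essentially for free: $(n+1)\eta_0^\HH(n)=\inf_t f^\HH_{n+1}(\bm{a}_{t,n+1})$ is exactly the monotone sequence from its operator-norm argument. Your route needs a separate Beta-to-$\chi^2(1)$ analysis with localization in $\tau$, which you handle correctly, or the closed form in the complex case.

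Overall, the paper's single majorization-plus-padding mechanism yields exact finite-$r$ formulas and monotonicity of $(r+1)\eta_0^\HH(r)$. Your arguments are more elementary and self-contained piece by piece.
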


In particular, these quantities  $\mu^{\HH,2}(r)$ are independent of $r$, and we shall
write them simply as $\mu^{\HH,2}$ throughout the rest of the paper. We prove Lemma~\ref{p2} in section 4.
We derive the stability guarantees for the solution of \eqref{eq:rw-l1psd} in the subsequent theorem.
\begin{theorem}\label{nuclear}
Let $\bm{\eta}\in \R^m$ that $\norm{\bm{\eta}}_0= s\cdot m$ where $s\in [0,\frac{1}{2})$. Suppose $\hat{X}_{1,+}$ is a solution of \eqref{eq:rw-l1psd} with $\bm{y}=\Phi_{\Aone}(X_0)+\bm{w}+\bm{\eta}$, where $
\Aone:=\{\bm{a}_i\bm{a}_i^*\}_{i=1}^m$ is the rank-one Gaussian measurement ensemble. For any
$\delta\in(0,1)$, assuming that $m \ge C\,\delta^{-2}(1-2s)^{-2}nr$, we have
\[
\norm{X_0-\hat{X}_{1,+}}_*\le\frac{2}{(1-\delta)(1-2s)\mu^{\HH,*}(r)}\frac{\norm{\bm{w}}_1}{m},
\]
and
\[
\norm{X_0-\hat{X}_{1,+}}_2\le\frac{2}{(1-\delta)(1-2s)\mu^{\HH,2}}\frac{\norm{\bm{w}}_1}{m},
\]
hold for all $X_0 \in\mathcal M_r$, with probability at least $1-4\exp(-c\delta^2(1-2s)^2 m)$, respectively. Here $c,C>0$ are universal constants.
\end{theorem}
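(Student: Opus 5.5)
Write $H:=\hat X_{1,+}-X_0$. The plan is the standard optimality-plus-restricted-lower-bound argument. The key observation is structural: $\hat X_{1,+}\succeq 0$ and $X_0\in\Mr$, so $H=\hat X_{1,+}-X_0$ satisfies $\nu_+(-H)\le r$. Indeed, $-H=X_0-\hat X_{1,+}\preceq X_0$, and by Weyl's inequality $-H$ has at most $\rank(X_0)\le r$ positive eigenvalues. Hence $-H/\norm{H}_*\in T^*(n,r)$ and $-H/\norm{H}_2\in T^2(n,r)$, which is exactly the cone on which $\mu^{\HH,*}(r)$ and $\mu^{\HH,2}$ control $\E\abs{\bm a^*H\bm a}$ from below.

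\emph{Step 1 (optimality inequality).} Let $S$ denote the support of $\bm\eta$, with $\abs{S}=sm$. Since $\hat X_{1,+}$ minimizes the residual and $X_0$ is feasible,
\[
\norm{\PhiA(H)-\bm w-\bm\eta}_1\le\norm{\bm w+\bm\eta}_1 .
\]
Split the sum over $S^c$ and $S$. On $S^c$ use $\abs{h_i-w_i}\ge\abs{h_i}-\abs{w_i}$, where $h_i=\bm a_i^*H\bm a_i$. On $S$ use $\abs{h_i-w_i-\eta_i}\ge\abs{w_i+\eta_i}-\abs{h_i}$. This yields
\[
\sum_{i\notin S}\abs{h_i}-\sum_{i\in S}\abs{h_i}\le 2\norm{\bm w}_1 .
\]

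\emph{Step 2 (uniform concentration over the cone).} I will show that with probability at least $1-4\exp(-c\delta^2(1-2s)^2m)$, simultaneously for all $H$ with $\nu_+(-H)\le r$ and all index sets $S$ with $\abs S\le sm$,
\[
\sum_{i\notin S}\abs{h_i}-\sum_{i\in S}\abs{h_i}\ge(1-\delta)(1-2s)\,m\,\E\abs{\bm a^*H\bm a}.
\]
Combining this with Step 1 and with $\E\abs{\bm a^*H\bm a}\ge\mu^{\HH,*}(r)\norm H_*$ (respectively $\ge\mu^{\HH,2}\norm H_2$) gives both claimed bounds.

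To prove the uniform bound I would reduce to a low-dimensional cone. Write $-H=P-N$ with $P\succeq0$ of rank at most $r$ and $N\succeq 0$ of arbitrary rank. The difficulty is that $N$ may have full rank, so the set is not low-rank. I would first establish the two-sided concentration
\[
\frac1m\sum_i\abs{h_i}\approx\E\abs{\bm a^*H\bm a}
\]
uniformly. The lower tail comes from the sub-exponential Bernstein inequality together with a net argument over rank-$2r$ matrices, the same device already used for Theorem~\ref{d1} with $m\ge C\delta^{-2}nr$. To extend beyond rank $2r$, I would use the PSD part: since $\bm a^*N\bm a\ge 0$, one can control $\sum_i\abs{\bm a_i^*P\bm a_i-\bm a_i^*N\bm a_i}$ via an $\ell_1$ restricted-isometry statement on the PSD cone. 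Specifically, for PSD $N$,
\[
\frac1m\sum_i\bm a_i^*N\bm a_i\approx\Tr N
\]
uniformly, with deviation $\delta\norm N_*$, once $m\gtrsim n$. I would combine this with the fact that $\mu^{\HH,*}(r)$ is a lower bound taken uniformly over all $n$, so every trace-heavy $N$ is still covered. The most delicate point is converting additive errors of size $\delta\norm H_*$ into multiplicative errors relative to $\E\abs{\bm a^*H\bm a}$. This conversion works because $\E\abs{\bm a^*H\bm a}\ge\mu^{\HH,*}(r)\norm H_*$ with a constant depending only on $r$. For the operator-norm version, I would additionally exploit $\E\abs{\bm a^*H\bm a}\ge \mu^{\HH,2}\norm H_2$, working with the $T^2$ normalization directly.

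\emph{Step 3 (handling the corrupted set $S$).} I need an upper bound
\[
\sum_{i\in S}\abs{h_i}\le(s+\delta')\,m\,\E\abs{\bm a^*H\bm a}
\]
uniformly over $\abs S\le sm$, and a matching lower bound on the full sum. A crude bound on the maximal $sm$-subsum would lose a $\log(1/s)$ factor. Instead, I would follow the cancellation-free route: write
\[
\sum_{i\notin S}-\sum_{i\in S}=\sum_i-2\sum_{i\in S},
\]
and handle the $2\sum_{i\in S}$ term via a uniform bound over the $\binom{m}{sm}$ choices of $S$ combined with the net argument. This costs an entropy factor $sm\log(e/s)$, which is absorbed into $m\ge C\delta^{-2}(1-2s)^{-2}nr$ after choosing $\delta'\approx\delta(1-2s)$. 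The bound in Step 2 should then come out with the factor $(1-2s)$.

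I expect this last part to be the main obstacle, along with the full-rank negative part in Step 2. If the subset uniformity fails to give exactly $(1-2s)$, I would fall back on the interpretation that $S$ is fixed before the sampling, that is, independent of $\{\bm a_i\}$. In that case $\{\bm a_i\}_{i\in S}$ and $\{\bm a_i\}_{i\notin S}$ are independent Gaussian families of sizes $sm$ and $(1-s)m$. Applying the upper concentration to the first and the lower concentration to the second gives
\[
(1-s)(1-\delta)-s(1+\delta)\ge(1-2s)(1-\delta')
\]
directly. The two events are where the factor $4$ in the probability comes from.
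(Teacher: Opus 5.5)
You have a genuine gap, and the primary route of your Step~3 is false.

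\textbf{Step 3.} Your Step~1 is correct. Your fallback in Step~3 is exactly how the paper concludes: fix $S$ independently of the $\bm a_i$ and apply two-sided concentration separately to the independent families $\{\bm a_i\}_{i\in S}$ and $\{\bm a_i\}_{i\notin S}$, which yields $(1-s)-s-(1-2s)\delta$. The primary route should be discarded.

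A union bound over the $\binom{m}{sm}$ supports costs $sm\ln(e/s)$ in the exponent. That cost is linear in $m$, so no sample condition $m\ge C\delta^{-2}(1-2s)^{-2}nr$ can absorb it; it would instead force $\delta^2(1-2s)^2\gtrsim s\ln(e/s)$.

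Worse, the uniform-in-$S$ inequality you state in Step~2 is false. Take $\HH=\C$ and $H=-\bm e_1\bm e_1^*$, so the values $\abs{h_i}=\abs{a_{i1}}^2$ are i.i.d.\ $\mathrm{Exp}(1)$. Let $S$ be the indices of the $sm$ largest of them. Then $\sum_{i\in S}\abs{h_i}\approx sm(1+\ln(1/s))$, so for $s=1/5$ and large $m$ the quantity $\sum_{i\notin S}\abs{h_i}-\sum_{i\in S}\abs{h_i}$ is negative with high probability. The $\ln(1/s)$ loss of the maximal $sm$-subsum is intrinsic to adversarial supports. The theorem is a statement about a fixed corruption support.

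\textbf{Step 2.} The real gap is the two-sided concentration, uniformly over $\{H:\nu_+(-H)\le r\}$. This is the core of the proof, and your sketch does not establish it.
\begin{itemize}
\item A net over rank-$2r$ matrices never reaches $H$ whose negative-semidefinite part $N$ (in $-H=P-N$) has full rank. A net over the whole cone has entropy of order $n^2$, not $nr$.
\item The linear estimate $\frac1m\sum_i\bm a_i^*N\bm a_i\approx\Tr N$ on the PSD cone controls $\frac1m\sum_i\abs{\bm a_i^*(P-N)\bm a_i}$ only up to an additive error of order $\Tr N$. That error is not small compared with $\E\abs{\bm a^*H\bm a}$.
\end{itemize}

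The missing idea is to remove the absolute values before using any structure.
\begin{itemize}
\item Symmetrization and contraction (Lemmas~\ref{thm:Lp-symm} and~\ref{thm:Lp-contract}) bound the deviation by $\frac4m\sup_H\abs{\ip{\sum_i\eps_i\bm a_i\bm a_i^*}{H}}$. This is a \emph{linear} process whose supremum depends on $H$ only through norm bounds, not through its rank.
\item Normalize $\E\abs{\bm a^*H\bm a}=1$. Then every $H$ in the cone satisfies $\norm{H}_*\le 1/\mu^{\HH,*}(r)$, and by Lemma~\ref{Fnorm} also $\norm{H}_F=O(1)$ and $\abs{\Tr H}\le 1$.
\item Lemma~\ref{mastertrace} then applies through the duality $\abs{\ip{Z}{H}}\le 2K\norm{Z}_{(r,2)}$. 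The Frobenius bound is what keeps $r$ out of the exponent of the tail $\exp(-c\delta^2(1-2s)^2m)$.
\end{itemize}

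Finally, ``a constant depending only on $r$'' is not enough for your conversion from additive to multiplicative error. You need $1/\mu^{\HH,*}(r)=O(\sqrt r)$, which follows from Lemma~\ref{p2} together with $\beta_0^{\HH,*}(r)\asymp\sqrt r$. A bound of order $r$ would only give $m\gtrsim nr^2$.
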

We prove Theorem~\ref{nuclear} in Section 4. This result generalizes
and improves the recovery guarantees established in
\cite{LiSunChi2017} in several aspects. In particular, we obtain robust
recovery guarantees at the optimal sampling rate
$m\asymp nr$ in both the nuclear norm and operator norm senses. Moreover,
our guarantee holds uniformly over all fixed sparse corruption vectors
$\bm{\eta}$, without requiring any randomness in the corruption pattern.
In contrast, the analysis in \cite{LiSunChi2017} relies on a random sign
model for the nonzero entries of the sparse corruption vector and requires
a larger sampling complexity $m\asymp nr^2$.

Furthermore, we characterize the robustness constants explicitly through
$\mu^{\HH,*}(r)$ and $\mu^{\HH,2}$, which are directly related to the
deterministic lower bounds on the condition numbers established in
Theorem~\ref{thm:main}. In comparison, the corresponding constants in
\cite{LiSunChi2017} are implicit.

\section{Proof of Theorem~\ref{thm:main}: Universal Lower Bounds for Condition Numbers}

To derive lower bounds for the condition numbers
$\beta_0^{\HH,*}(r)$ and $\beta_0^{\HH,2}(r)$, we first establish the
following two lemmas, which characterize the upper and lower Lipschitz
constants separately. Their proofs are deferred to the end of this
section.

\begin{lemma}\label{lem:U}
Let $\A=\{A_i\}_{i=1}^m$ where $A_i\in\Snp(\HH)$ and $\HH\in\{\R,\C\}$. Suppose that $S_{\mathcal A}:=\sum_{i=1}^m A_i$ and $\lambda_1\ge \lambda_2\ge \cdots \ge \lambda_n\ge 0$ be the eigenvalues of $S_{\mathcal A}$. Then we have
\begin{equation}\label{Ubound}
U_{\A}^{\HH,*}(r)=\lambda_1,\qquad U_{\A}^{\HH,2}(r)\ge \sum_{j=1}^{r}\lambda_j.
\end{equation}
\end{lemma}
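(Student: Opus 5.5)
We prove the two claims of Lemma~\ref{lem:U} separately; the first needs matching upper and lower bounds, the second only an explicit pair of matrices.

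\textbf{Upper bound for the nuclear-norm constant.} Fix $X,Y\in\Mr$ with $X\neq Y$ and set $H:=X-Y\in\Sn(\HH)$. Let $H=H_+-H_-$ be its Jordan decomposition, with $H_\pm\succeq0$, $H_+H_-=0$, and $\Tr H_++\Tr H_-=\norm{H}_*$. Since each $A_i\succeq0$, we have $\ip{A_i}{H_\pm}\ge0$, so
\[
\abs{\ip{A_i}{H}}\le \ip{A_i}{H_+}+\ip{A_i}{H_-}.
\]
Summing over $i$ gives
\[
\norm{\PhiA(X)-\PhiA(Y)}_1\le \ip{S_\A}{H_+}+\ip{S_\A}{H_-}\le \lambda_1(\Tr H_++\Tr H_-)=\lambda_1\norm{H}_*,
\]
where the last inequality uses $\ip{S}{P}\le\lambda_{\max}(S)\Tr P$ for $P\succeq0$. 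Hence $U_\A^{\HH,*}(r)\le\lambda_1$.

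\textbf{Attaining the value $\lambda_1$.} Let $\bm v_1$ be a unit top eigenvector of $S_\A$, and take $X=\bm v_1\bm v_1^*\in\Mr$ (valid since $r\ge1$) and $Y=0$. Every coordinate $\ip{A_i}{X}$ is nonnegative, so
\[
\norm{\PhiA(X)}_1=\sum_i \bm v_1^*A_i\bm v_1=\bm v_1^*S_\A\bm v_1=\lambda_1,
\]
while $\norm{X}_*=1$. This shows $U_\A^{\HH,*}(r)=\lambda_1$.

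\textbf{Lower bound for the operator-norm constant.} Let $\bm v_1,\dots,\bm v_r$ be orthonormal eigenvectors of $S_\A$ for $\lambda_1,\dots,\lambda_r$, and take $X=\sum_{j\le r}\bm v_j\bm v_j^*$ and $Y=0$. Then $\rank X\le r$ and $\norm{X}_2=1$. Again all coordinates are nonnegative, so
\[
\norm{\PhiA(X)}_1=\ip{S_\A}{X}=\sum_{j\le r}\lambda_j,
\]
which gives $U_\A^{\HH,2}(r)\ge\sum_{j=1}^r\lambda_j$.

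\textbf{Remarks.} No step here is a real obstacle. The only point needing care is the decomposition argument in the first part: the positive semidefiniteness of the $A_i$ is what lets $\abs{\ip{A_i}{H}}$ be bounded by $\ip{A_i}{H_+}+\ip{A_i}{H_-}$. Note also that the choice $Y=0\in\Mr$ is admissible in both constructions, so neither needs the hypothesis $2r\le n$.
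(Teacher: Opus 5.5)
Your proposal is correct and follows the paper's proof essentially step for step. Your $H_++H_-$ is exactly the paper's $\widetilde H=U^*\abs{D}U$, and the test pairs $(\bm v_1\bm v_1^*,0)$ and $(\sum_{j\le r}\bm v_j\bm v_j^*,0)$ are the same ones the paper uses for attainment and for the operator-norm lower bound.
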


\begin{lemma}\label{lem:L}
Let $\A=\{A_i\}_{i=1}^m$ where $A_i\in\Snp(\HH)$ and $\HH\in\{\R,\C\}$. Suppose that $S_{\mathcal A}:=\sum_{i=1}^m A_i$ and $\lambda_1\ge \lambda_2\ge \cdots \ge \lambda_n\ge 0$ be the eigenvalues of $S_{\mathcal A}$, and $2r\le n$. Then we have
\begin{equation}\label{Lbound}
L_{\A}^{\HH,*}(r) \le \frac{1}{2r\beta_0^{\HH,*}(r)} \sum_{j=1}^{2r}\lambda_j,\qquad L_{\A}^{\HH,2}(r)\le\eta_0^{\HH}(r)\sum_{j=1}^{r+1}\lambda_j.
\end{equation}
\end{lemma}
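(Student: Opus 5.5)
We bound each Lipschitz constant by an \emph{average} of ratios over a random family of test pairs $(X,Y)$: since $L_{\A}$ is an infimum, it is at most any average of $\norm{\PhiA(X)-\PhiA(Y)}_1/\norm{X-Y}$. We place the test pairs inside the span of the top eigenvectors of $S_{\mathcal A}$ and randomize them by a Haar-distributed orthogonal/unitary matrix. The expectation over $\bm{b}$ then reduces to Beta-distribution integrals, which produce exactly the constants of Definition~\ref{def:beta0}.

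\textbf{Nuclear norm.} Let $V\in\HH^{n\times 2r}$ have orthonormal columns spanning the top $2r$ eigenvectors of $S_{\mathcal A}$; this is possible since $2r\le n$. Let $Q$ be Haar distributed on the $2r\times 2r$ orthogonal/unitary group, set $P_1=\diag(I_r,0)$, $P_2=\diag(0,I_r)$, and put
\[
X=VQP_1Q^*V^*,\qquad Y=VQP_2Q^*V^*.
\]
Both lie in $\Mr$, and $X-Y=VQJQ^*V^*$ with $J=\diag(I_r,-I_r)$, so $\norm{X-Y}_*=2r$. Writing $B_i=V^*A_iV\succeq0$, we get
\[
L_{\A}^{\HH,*}(r)\le \frac{1}{2r}\sum_i \E_Q\abs{\Tr(B_iQJQ^*)}.
\]
The map $B\mapsto \E_Q\abs{\Tr(BQJQ^*)}$ is convex and positively homogeneous on the PSD cone, so it is at most $c\Tr(B)$, where $c$ is its value at a rank-one $B=\bm{b}\bm{b}^*$ with $\norm{\bm{b}}=1$; by unitary invariance this value does not depend on $\bm{b}$. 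In that case $Q^*\bm{b}$ is uniform on the sphere, so
\[
c=\E\abs{2u-1},\qquad u=\frac{\norm{\bm{g}_1}^2}{\norm{\bm{g}_1}^2+\norm{\bm{g}_2}^2},
\]
with $\bm{g}_1,\bm{g}_2$ independent standard Gaussian vectors in $\HH^r$. Thus $u\sim\operatorname{Beta}(r/2,r/2)$ in the real case and $\operatorname{Beta}(r,r)$ in the complex case. A direct Beta integral should give $c=1/\beta_0^{\HH,*}(r)$; for example, when $\HH=\R$ and $r=1$ we get $\E\abs{\cos 2\theta}=2/\pi$. Finally, $\sum_i\Tr B_i=\Tr(V^*S_{\mathcal A}V)=\sum_{j\le 2r}\lambda_j$, which gives the first bound.

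\textbf{Operator norm.} Here we work in dimension $r+1\le n$, with $V$ spanning the top $r+1$ eigenvectors. Fix $t\in[0,1]$ and take
\[
X=VQ\,e_1e_1^*\,Q^*V^*\ (\text{rank one}),\qquad Y=t\,VQ\,\diag(0,I_r)\,Q^*V^*\ (\text{rank } r).
\]
Then $X-Y=VQ\,\diag(1,-t,\dots,-t)\,Q^*V^*$ has operator norm $1$ because $t\le1$. By the same convexity reduction to rank-one $B$, we need
\[
\E\abs{u-t(1-u)}=\E\abs{(1+t)u-t},
\]
where $u=\abs{(Q^*\bm{b})_1}^2\sim\operatorname{Beta}(\tfrac12,\tfrac r2)$ in the real case and $\operatorname{Beta}(1,r)$ in the complex case. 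These are exactly the integrals defining $\eta^{\HH}(r,t)$, so
\[
L_{\A}^{\HH,2}(r)\le \eta^{\HH}(r,t)\sum_{j\le r+1}\lambda_j \quad\text{for every } t\in[0,1].
\]
Taking the infimum over $t$ yields $\eta_0^{\HH}(r)$.

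\textbf{Main obstacle.} The step needing most care is the convexity reduction: showing that $\E_Q\abs{\Tr(BQJQ^*)}\le c\Tr(B)$ for every PSD $B$, not only rank-one $B$. This follows by writing $B=\sum_k \mu_k\bm{b}_k\bm{b}_k^*$, applying the triangle inequality inside the expectation, and using that each rank-one term gives the same value by invariance. The other delicate point is the Beta-function computation confirming that $\E\abs{2u-1}$ equals $1/\beta_0^{\HH,*}(r)$ in both fields.
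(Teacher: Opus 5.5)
Your proposal is correct and follows the paper's proof: Haar-randomized test pairs built from $\diag(I_r,-I_r)$ and $\diag(1,-t,\dots,-t)$, reduction to rank-one $B$ via the spectral decomposition and the triangle inequality, and the Beta-distribution integrals giving $1/\beta_0^{\HH,*}(r)$ and $\eta^{\HH}(r,t)$. The only difference is that you place the test pairs in the top eigenspace of $S_{\A}$, which gives $\sum_i\Tr B_i=\sum_{j\le 2r}\lambda_j$ with equality. The paper instead uses the leading coordinate block and needs Ky Fan's maximum principle to obtain the inequality.
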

Using Lemma~\ref{lem:U} and Lemma~\ref{lem:L}, we can present a proof of Theorem~\ref{thm:main}.
\begin{proof}[Proof of Theorem~\ref{thm:main}]
Following the notation introduced above, let
$\lambda_1\geq \lambda_2\geq\cdots\geq\lambda_n\geq0$
denote the eigenvalues of
$S_{\mathcal A}:=\sum_{i=1}^m A_i$.
Combining Lemma~\ref{lem:U} and Lemma~\ref{lem:L}, we have
\[
\beta_{\A}^{\HH,*}(r)
=\frac{U_{\A}^{\HH,*}(r)}{L_{\A}^{\HH,*}(r)}
\ge \frac{2r\lambda_1\beta_0^{\HH,*}(r)}{\sum_{j=1}^{2r}\lambda_j}\ge \beta_0^{\HH,*}(r),
\]
\[
\beta_{\A}^{\HH,2}(r)=\frac{U_{\A}^{\HH,2}(r)}{L_{\A}^{\HH,2}(r)}\ge \frac{\sum_{j=1}^{r} \lambda_j}{\eta_0^{\HH}(r)\sum_{j=1}^{r+1}\lambda_j}\ge \frac{r}{(r+1)\eta_0^{\HH}(r)}=\beta_0^{\HH,2}(r).
\]
This completes the proof.
\end{proof}

We next present the proof of  Lemma \ref{lem:U}.
\begin{proof}[Proof of Lemma \ref{lem:U}]
Let $X,Y\in \Mr$ and define $H:=X-Y$. Since $\Phi_{\mathcal A}$ is linear,
we have
\[
\Phi_{\mathcal A}(X)-\Phi_{\mathcal A}(Y)
=
(\Tr(A_1H),\dots,\Tr(A_mH)).
\]

We first consider the nuclear norm case. Write the spectral decomposition
of $H$ as $H=U^*DU$, where $D$ is a real diagonal matrix and $U$ is unitary.
Define $\widetilde H:=U^*\abs{D}U$, where $\abs{D}$ denotes the diagonal
matrix obtained by taking the entrywise absolute values of the diagonal
entries of $D$. Then we have
$
-\widetilde H\preceq H\preceq \widetilde H.
$
Since each $A_i$ is positive semidefinite, it follows that
$
\abs{\Tr(A_iH)}\leq \Tr(A_i\widetilde H),
 i=1,\dots,m.
$
Therefore, summing over all measurements gives
\begin{equation*}
\norm{\Phi_{\mathcal A}(X)-\Phi_{\mathcal A}(Y)}_1=\norm{\Phi_{\mathcal A}(H)}_1
\leq
\Tr(S_{\mathcal A}\widetilde H)
\leq
\lambda_1\Tr(\widetilde H)
=
\lambda_1\norm{H}_*
=
\lambda_1\norm{X-Y}_*.
\end{equation*}
Hence, by the definition of $U_{\A}^{\HH,*}(r)$ in (\ref{eq:intro-nuclear-constants}), we obtain
\begin{equation}\label{eq:main-upper}
U_{\A}^{\HH,*}(r)\,\,\leq\,\, \lambda_1.
\end{equation}

It remains to show that the above bound is attainable. Let
$\bm{u}_1,\dots,\bm{u}_r$ be orthonormal eigenvectors of
$S_{\mathcal A}$ corresponding to the eigenvalues
$\lambda_1,\dots,\lambda_r$, respectively. Choosing
$X=\bm{u}_1\bm{u}_1^*$ and $Y=0$, we have
\begin{equation}\label{eq:main-upper-attain}
\frac{\norm{\Phi_{\mathcal A}(X)-\Phi_{\mathcal A}(Y)}_1}
{\norm{X-Y}_*}
=
\bm{u}_1^*S_{\mathcal A}\bm{u}_1
=
\lambda_1.
\end{equation}
Combining \eqref{eq:main-upper} and \eqref{eq:main-upper-attain} yields
$
U_{\A}^{\HH,*}(r)=\lambda_1.
$

We next establish the lower bound for $U_{\A}^{\HH,2}(r)$. Let
$V=(\bm{u}_1,\dots,\bm{u}_r)$ and choose
$X=VV^*$ and $Y=0$. Since the columns of $V$ are orthonormal,
$\norm{X-Y}_2=1$. Moreover,
\begin{equation}\label{eq:main-op-upper}
\frac{\norm{\Phi_{\mathcal A}(X)-\Phi_{\mathcal A}(Y)}_1}
{\norm{X-Y}_2}
=
\sum_{j=1}^{r}\lambda_j.
\end{equation}
Therefore, by the definition of $U_{\A}^{\HH,2}(r)$, we conclude that
\[
U_{\A}^{\HH,2}(r)\geq \sum_{j=1}^{r}\lambda_j.
\]
This completes the proof.
\end{proof}

It remains to prove Lemma~\ref{lem:L}. The guiding idea here is that we evaluate $\Phi_{\mathcal A}$ on a carefully chosen family of low-rank test matrices and then average over a Haar-distributed parameter. We start with technical lemmas about Beta and Gamma distribution, and then the Haar-averaging lemmas needed.

\begin{lemma}[Example 2.41 \cite{Ross2024}]\label{gamma-beta}
Let $p, q, \theta$ be positive constants. Suppose that $G_1\sim \Gamma(p,\theta)$ and $G_2\sim \Gamma(q,\theta)$ are independent random variables.
Define
\[
S:=G_1+G_2,\qquad U:=\frac{G_1}{G_1+G_2}.
\]
Then $S$ and $U$ are independent, and
\[
S\sim \Gamma(p+q,\theta),
\qquad
U\sim \mathrm{Beta}(p,q).
\]
\end{lemma}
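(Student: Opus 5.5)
The statement to prove is the classical Gamma--Beta algebra (Lemma~\ref{gamma-beta}). I would prove it by a direct change of variables in the joint density. Since $G_1$ and $G_2$ are independent, their joint density on $(0,\infty)^2$ is
\[
f(g_1,g_2)=\frac{g_1^{p-1}g_2^{q-1}e^{-(g_1+g_2)/\theta}}{\Gamma(p)\Gamma(q)\theta^{p+q}}.
\]
Consider the map $(g_1,g_2)\mapsto(s,u)=(g_1+g_2,\,g_1/(g_1+g_2))$. It is a smooth bijection from $(0,\infty)^2$ onto $(0,\infty)\times(0,1)$. Its inverse is $g_1=su$, $g_2=s(1-u)$.

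The Jacobian of the inverse map is
\[
\det\begin{pmatrix}u & s\\ 1-u & -s\end{pmatrix}=-s,
\]
so its absolute value is $s$. Substituting, the joint density of $(S,U)$ is
\[
f_{S,U}(s,u)=\frac{(su)^{p-1}(s(1-u))^{q-1}e^{-s/\theta}\,s}{\Gamma(p)\Gamma(q)\theta^{p+q}}.
\]
This factors as
\[
\left[\frac{s^{p+q-1}e^{-s/\theta}}{\Gamma(p+q)\theta^{p+q}}\right]\cdot\left[\frac{\Gamma(p+q)}{\Gamma(p)\Gamma(q)}u^{p-1}(1-u)^{q-1}\right].
\]

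The first bracket is the $\Gamma(p+q,\theta)$ density in $s$. The second bracket is the $\mathrm{Beta}(p,q)$ density in $u$, using $\mathrm{B}(p,q)=\Gamma(p)\Gamma(q)/\Gamma(p+q)$. The joint density is therefore a product of a function of $s$ alone and a function of $u$ alone, on a product domain, with each factor a probability density. This gives independence of $S$ and $U$ and identifies both marginals at once.

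There is no real obstacle. The only points needing care are verifying the bijection and its domain, and recalling the Beta-function identity that makes the constants split correctly.
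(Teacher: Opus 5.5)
Your proof is correct: the Jacobian of the inverse map is $-s$, and the joint density factors exactly into the $\Gamma(p+q,\theta)$ density in $s$ times the $\mathrm{Beta}(p,q)$ density in $u$. The paper gives no proof of its own and simply cites Ross (Example 2.41), and your change-of-variables argument is the standard textbook proof of this fact.
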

The following lemma gives the averaging estimates used in the proof of  Lemma~\ref{lem:L}.

\begin{lemma}\label{lem:haar-average}

Let $Q$ be Haar distributed on the orthogonal matrices if $\HH=\R$, and on the unitary matrices if
$\HH=\C$. We have the following two properties.
\begin{enumerate}[label=(\roman*)]
\item Set $D:= \diag(I_r,-I_r)\in\R^{2r\times 2r}$. Then for any $B\in \mathcal S_{2r}^+(\HH)$,
\begin{equation}\label{dnuclear}
\E_Q \abs{\Tr(BQDQ^*)}
\le
\frac{1}{\beta_0^{\HH,*}(r)}\Tr(B).
\end{equation}
\item Set $D_t:=\diag(1,-t,\dots,-t)\in \R^{(r+1)\times(r+1)}$, where $0\le t\le 1$. Then for any $B\in \mathcal S_{r+1}^+(\HH)$,
\begin{equation}\label{doperator}
\E_Q \abs{\Tr(BQD_tQ^*)}
\le
\eta^{\HH}(r,t)\Tr(B).
\end{equation}
\end{enumerate}
\end{lemma}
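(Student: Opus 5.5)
The plan is to reduce both estimates to rank-one $B$ by convexity and unitary invariance, and then to compute a one-dimensional Beta integral.

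\textbf{Reduction to $B=\bm e_1\bm e_1^*$.} Let $D$ stand for either $\diag(I_r,-I_r)$ or $D_t$, and set $F(B):=\E_Q\abs{\Tr(BQDQ^*)}$ for $B\succeq 0$. Then $F$ is convex, being an average of absolute values of linear functionals. It is also positively homogeneous of degree one. By left-invariance of Haar measure, $F(VBV^*)=F(B)$ for every orthogonal (resp.\ unitary) $V$.

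Both sides of \eqref{dnuclear} and \eqref{doperator} scale linearly in $B$, so it suffices to take $\Tr(B)=1$. In that case $B$ is a convex combination of rank-one projections $\bm v\bm v^*$ with $\norm{\bm v}=1$. Convexity then gives $F(B)\le \max_{\bm v}F(\bm v\bm v^*)$. Invariance shows that $F(\bm v\bm v^*)=F(\bm e_1\bm e_1^*)$ for every unit $\bm v$. Hence
\[
F(B)\le \Tr(B)\,\E\abs{\bm q^*D\bm q},
\]
where $\bm q=Q^*\bm e_1$ is uniformly distributed on the unit sphere of $\HH^{2r}$ (resp.\ $\HH^{r+1}$).

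\textbf{Gaussian representation.} Write $\bm q=\bm g/\norm{\bm g}$ with $\bm g$ a standard Gaussian vector. In the real case the squared norm of $k$ coordinates of $\bm g$ is $\Gamma(k/2,2)$. In the complex case it is $\Gamma(k,1)$.

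For (i) we have $\bm q^*D\bm q=2U-1$, where $U=G_1/(G_1+G_2)$ with $G_1,G_2$ the squared norms of the two blocks. By Lemma~\ref{gamma-beta}, $U\sim\mathrm{Beta}(p,p)$ with $p=r/2$ (real) or $p=r$ (complex).

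For (ii) we have $\bm q^*D_t\bm q=\abs{q_1}^2-t(1-\abs{q_1}^2)=(1+t)u-t$, where $u=\abs{q_1}^2$. Again by Lemma~\ref{gamma-beta}, $u\sim\mathrm{Beta}(\tfrac12,\tfrac r2)$ (real) or $\mathrm{Beta}(1,r)$ (complex), and the complex density is $r(1-u)^{r-1}$. Thus $\E\abs{(1+t)u-t}$ is exactly $\eta^{\HH}(r,t)$ as written in Definition~\ref{def:beta0}, which proves (ii).

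\textbf{Beta computation for (i).} This is the only genuine calculation and the step I expect to need the most care. By symmetry of $\mathrm{Beta}(p,p)$ about $\tfrac12$,
\[
\E\abs{2U-1}=\frac{2}{\mathrm B(p,p)}\int_{1/2}^1(2u-1)u^{p-1}(1-u)^{p-1}\,du .
\]
The integrand is an exact derivative, since $\frac{d}{du}\bigl[u^p(1-u)^p\bigr]=p\,u^{p-1}(1-u)^{p-1}(1-2u)$. This yields
\[
\E\abs{2U-1}=\frac{2\cdot 4^{-p}}{p\,\mathrm B(p,p)}.
\]
It remains to simplify this using the duplication formula $\Gamma(2p)=\pi^{-1/2}2^{2p-1}\Gamma(p)\Gamma(p+\tfrac12)$. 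The result is $\Gamma(p+\tfrac12)/(\sqrt\pi\,p\,\Gamma(p))$.

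Substituting $p=r/2$ gives $2\Gamma(\tfrac{r+1}2)/(r\sqrt\pi\,\Gamma(\tfrac r2))=1/\beta_0^{\R,*}(r)$. Substituting $p=r$ gives $\Gamma(\tfrac{2r+1}{2})/(r\sqrt\pi\,\Gamma(r))=1/\beta_0^{\C,*}(r)$. As a sanity check, $r=1$ in the real case gives $2/\pi$, consistent with $\beta_0^{\R,*}(1)=\pi/2$.

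So in both parts the bound is attained with equality at rank-one $B$, and the inequality for general $B$ comes from convexity.
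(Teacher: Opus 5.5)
Your proposal is correct and takes essentially the same route as the paper. Your convexity-plus-invariance reduction to $B=\bm e_1\bm e_1^*$ is the paper's spectral decomposition with the triangle inequality (using that $Q^*\bm z_\ell$ is uniform on the sphere), followed by the same Gaussian normalization, Gamma--Beta lemma and Beta integral, and your evaluation via the exact derivative of $u^p(1-u)^p$ and the duplication formula simply spells out the paper's ``direct calculation.''
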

\begin{proof}
Let $B=\sum_{\ell}\mu_\ell \bm{z}_\ell \bm{z}_\ell^*$ be a spectral decomposition of $B$, where $B\in \mathcal S_{2r}^+(\HH)$ for case (i) and $B\in \mathcal S_{r+1}^+(\HH)$ for case (ii). Here $\mu_\ell\ge0$ and $\bm{z}_\ell$ are orthonormal vectors. Then for any diagonal matrix we have
\begin{equation}\label{eq:haar-spectral}
\E_Q\abs{\Tr(BQDQ^*)}
\le
\E_Q\sum_{\ell}\mu_\ell \abs{\bm{z}_\ell^*QDQ^*\bm{z}_\ell}
=\sum_{\ell}\mu_\ell \E_Q\abs{\bm{z}_\ell^*QDQ^*\bm{z}_\ell}.
\end{equation}
Since $Q$ is Haar-distributed and $\bm{z}_\ell$ is a fixed unit vector, the random vector $\bm{q}:=Q^*\bm{z}_\ell$ is uniformly distributed on the unit sphere. Suppose $\bm{g}$ is a standard Gaussian random vector. A normalized Gaussian vector is uniformly distributed on the unit sphere. So
\[
\bm{q}\stackrel{d}{=} \frac{\bm{g}}{\norm{\bm{g}}_2}.
\]

We first prove \eqref{dnuclear}. Notice that
\begin{equation}\label{eq:2q1}
\bm{q}^*D\bm{q}=\sum_{i=1}^r \abs{q_i}^2-\sum_{i=r+1}^{2r}\abs{q_i}^2:=2\sum_{i=1}^r \abs{q_i}^2-1.
\end{equation}
We claim that
\begin{equation}\label{eq:ebeta}
\E\abs{2\sum_{i=1}^r \abs{q_i}^2-1}= \frac{1}{\beta_0^{\HH,*}(r)}.
\end{equation}
Then \eqref{eq:haar-spectral} and \eqref{eq:2q1} give
\[
\E_Q\abs{\Tr(BQDQ^*)}
\le
\sum_{\ell=1}^{2r}\mu_\ell \E\abs{\bm{q}^*D\bm{q}}
=
 \frac{1}{\beta_0^{\HH,*}(r)}\sum_{\ell=1}^{2r}\mu_\ell
=
\frac{1}{\beta_0^{\HH,*}(r)}\Tr(B).
\]
It remains to prove \eqref{eq:ebeta}. If $\HH=\C$, then $\sum_{i=1}^r \abs{g_i}^2$ and $\sum_{i=r+1}^{2r}\abs{g_i}^2$ are independent $\Gamma(r,1)$ random variables. If $\HH=\R$, then $\sum_{i=1}^r \abs{g_i}^2$ and $\sum_{i=r+1}^{2r}\abs{g_i}^2$ are independent $\Gamma(r/2,2)$ random variables. Hence Lemma~\ref{gamma-beta} gives
\[
\sum_{i=1}^r |q_i|^2\sim \mathrm{Beta}(c_{\HH}\cdot r,c_{\HH}\cdot r),
\]
where $c_{\HH}=1$ if $\HH=\C$ and $c_{\HH}=\frac 12$ if $\HH=\R$. A direct calculation gives
\[
\begin{split}
&\mathbb \E\abs{2\sum_{i=1}^r \abs{q_i}^2-1}=\frac{2}{B(c_{\HH}r,c_{\HH}r)}\int_{1/2}^{1}(2u-1)u^{c_{\HH}r-1}(1-u)^{c_{\HH}r-1}\,du \\
=
&\frac{2^{1-2c_{\HH}r}}{c_{\HH}rB(c_{\HH}r,c_{\HH}r)}=\frac{2^{1-2c_{\HH}r}\Gamma(2c_{\HH}r)}{c_{\HH}r\Gamma(c_{\HH}r)^2}
=
\frac{\Gamma(c_{\HH}r+\frac12)}{c_{\HH}r\sqrt{\pi}\Gamma(c_{\HH}r)}= \frac{1}{\beta_0^{\HH,*}(r)}.
\end{split}
\]

Next we prove \eqref{doperator}. Notice that
\begin{equation}\label{eq:dtq}
\abs{\bm{q}^*D_t\bm{q}}
=
\abs{\abs{q_1}^2-t\sum_{i=2}^{r+1}\abs{q_i}^2}
=
\abs{(1+t)\abs{q_1}^2-t}.
\end{equation}
If $\HH=\C$, then $|g_1|^2\sim \Gamma(1,1)$ and $\sum_{i=2}^{r+1}|g_i|^2\sim \Gamma(r,1)$ are independent. If $\HH=\R$, then $|g_1|^2\sim \Gamma(1/2,2)$ and $\sum_{i=2}^{r+1}|g_i|^2\sim \Gamma(r/2,2)$ are independent. Hence Lemma~\ref{gamma-beta} gives
\[
|q_1|^2\sim \mathrm{Beta}(c_{\HH},c_{\HH}\cdot r),
\]
where $c_{\HH}=1$ for $\HH=\C$ and $c_{\HH}=\frac 12$ for $\HH=\R$. Thus
\[
\E\abs{(1+t)|q_1|^2-t}
=
\eta^{\HH}(r,t).
\]
Then combining \eqref{eq:haar-spectral} and \eqref{eq:dtq}, we have
\[
\E_Q \abs{\Tr(BQD_tQ^*)}
\le
\sum_{\ell=1}^{r+1}\mu_\ell \E\abs{\bm{q}^*D_t\bm{q}}
=
\eta^{\HH}(r,t)\sum_{\ell=1}^{r+1}\mu_\ell
=
\eta^{\HH}(r,t)\Tr(B).
\]
This completes the proof.
\end{proof}

\begin{proof}[Proof of Lemma~\ref{lem:L}]
We first prove the upper bound of $L_{\mathcal A}^{\mathbb H,*}(r)$. Let $Q$ be Haar distributed on $O(2r)$ if $\HH=\R$, and on $U(2r)$ if $\HH=\C$. Define
\[
E_X:=\begin{pmatrix}
I_r & 0\\
0 & 0
\end{pmatrix}, \qquad
E_Y:=\begin{pmatrix}
0 & 0\\
0 & I_r
\end{pmatrix}, \qquad
J_r:=\begin{pmatrix}
I_r & 0\\
0 & -I_r
\end{pmatrix}.
\]
Then set
\begin{equation}\label{eq:main-test-family}
X_Q
:=
\begin{pmatrix}
QE_XQ^* & 0\\
0 & 0
\end{pmatrix},
\qquad
Y_Q
:=
\begin{pmatrix}
QE_YQ^* & 0\\
0 & 0
\end{pmatrix}.
\end{equation}
We have $X_Q,Y_Q\in \mathcal M_r$ and $\norm{X_Q-Y_Q}_*=2r$. Therefore, by the definition of $L_{\A}^{\HH,*}(r)$ in \eqref{eq:intro-nuclear-constants}, we obtain
\begin{equation}\label{eq:main-lower-start}
L_{\A}^{\HH,*}(r)
\le
\frac{1}{2r}\E_Q \norm{\Phi_{\mathcal A}(X_Q)-\Phi_{\mathcal A}(Y_Q)}_1
=
\frac{1}{2r}\sum_{i=1}^m
\E_Q
\abs{\Tr (A_i(X_Q-Y_Q) )}.
\end{equation}
Set $P_*:=( I_{2r},0 )\in \R^{2r\times n}$ and $B_i:=P_*A_iP_*^\top\in \HH^{2r\times 2r}$ be the corresponding leading principal minor. Since $X_Q-Y_Q$ is supported on the leading
principal minor, we have
\begin{equation}\label{eq:main-compression}
\Tr (A_i(X_Q-Y_Q) )=\Tr(B_iQJ_rQ^*).
\end{equation}
Applying Lemma~\ref{lem:haar-average} to each $B_i$ and returning to \eqref{eq:main-lower-start}, we obtain
\begin{equation}\label{eq:main-lower-average}
L_{\A}^{\HH,*}(r)\le \frac{1}{2r}\sum_{i=1}^m\E_Q\abs{\Tr(B_iQJ_rQ^*)}
\le\frac{1}{2r\beta_0^{\HH,*}(r)}
\sum_{i=1}^m \Tr(B_i).
\end{equation}
Finally,
\begin{equation}\label{eq:main-kyfan}
\sum_{i=1}^m \Tr(B_i)=\Tr(P_*S_{\mathcal A}P_*^\top)
\le
\sum_{j=1}^{2r}\lambda_j,
\end{equation}
where the last inequality is the maximum principle of normal matrices. Combining \eqref{eq:main-lower-average} and \eqref{eq:main-kyfan} gives
\[
L_{\A}^{\HH,*}(r)
\le
\frac{1}{2r\beta_0^{\HH,*}(r)}\sum_{j=1}^{2r}\lambda_j.
\]

Now we move on to $L_{\mathcal A}^{\mathbb H,2}(r)$. Let $Q$ be Haar distributed on $O(r+1)$ if $\HH=\R$, and on $U(r+1)$ if $\HH=\C$. Define
\[
E:=\begin{pmatrix}
1 &        &        &   \\
  & 0      &        &   \\
  &        & \ddots &   \\
  &        &        & 0
\end{pmatrix}, \qquad
E_t:=\begin{pmatrix}
0 &        &        &   \\
  & t      &        &   \\
  &        & \ddots &   \\
  &        &        & t
\end{pmatrix}, \qquad
J_{r,t}:=\begin{pmatrix}
1 &        &        &   \\
  & -t      &        &   \\
  &        & \ddots &   \\
  &        &        & -t
\end{pmatrix}.
\]
Then set
\[
X_{Q,t}
:=
\begin{pmatrix}
QEQ^* & 0\\
0 & 0
\end{pmatrix},
\qquad
Y_{Q,t}
:=
\begin{pmatrix}
QE_tQ^* & 0\\
0 & 0
\end{pmatrix}.
\]
We have $X_{Q,t},Y_{Q,t}\in \mathcal M_r$, and $\norm{X_{Q,t}-Y_{Q,t}}_2=1$.
Let $P_2:=( I_{r+1},0 )\in \R^{(r+1)\times n}$ and $B_i:=P_2A_iP_2^\top\in \HH^{(r+1)\times (r+1)}$. By the definition of $L_{\A}^{\HH,2}(r)$ in \eqref{eq:intro-operator-constants}, we obtain
\[
L_{\A}^{\HH,2}(r)\le\E_Q \norm{\Phi_{\mathcal A}(X_{Q,t})-\Phi_{\mathcal A}(Y_{Q,t})}_1
=\sum_{i=1}^m \E_Q \abs{\Tr(B_iQJ_{r,t}Q^*)}.
\]
Lemma~\ref{lem:haar-average} gives $\E_Q \abs{\Tr(B_iQJ_{r,t}Q^*)}\le\eta^{\HH}(r,t)\Tr(B_i)$. Thus
\begin{equation}\label{eq:main-op-lower-average}
L_{\A}^{\HH,2}(r)
\le
\eta^{\HH}(r,t)\sum_{i=1}^m \Tr(B_i)
\le \eta^{\HH}(r,t)\sum_{j=1}^{r+1}\lambda_j.
\end{equation}
Since $t$ is arbitrary, we take the infimum over $t\in[0,1]$ and obtain
\[
L_{\A}^{\HH,2}(r)\le\inf_{t\in[0,1]}\eta^{\HH}(r,t)\sum_{j=1}^{r+1}\lambda_j=\eta_0^{\HH}(r)\sum_{j=1}^{r+1}\lambda_j.
\]
\end{proof}

\section{Asymptotic Attainment of the Universal Lower Bounds: Proof of Theorem~\ref{d1}}
\subsection{Proof of Theorem~\ref{d1}}
In order to prove Theorems~\ref{d1}, we establish two auxiliary results: Lemma~\ref{mastertrace} gives the concentration inequality, and Lemma~\ref{d3} bounds the expectation needed in our analysis. Their proofs are deferred to the end of this section.

\begin{lemma}\label{mastertrace}
Let $\bm{a}_1,\ldots,\bm{a}_m\in\HH^n$ be independent standard Gaussian vectors for $\HH\in\{\R,\C\}$. Given $T\subset \Sn(\HH)$,  suppose that there exists a constant $K>0$ and an integer $1\le r\le n$, such that $\sup_{H\in T}\norm{H}_F\le K$, $\sup_{H\in T}\norm{H}_*\le K\sqrt{r}$, and $\sup_{H\in T}\abs{\Tr (H)}\le K$. Define
\[
\mathcal E_T:=\sup_{H\in T}\abs{\frac1m\sum_{i=1}^m\abs{\bm{a}_i^*H\bm{a}_i}-\E\abs{\bm{a}_i^*H\bm{a}_i}}.
\]
Then there are universal constants $C$ such that for $u\ge 1$ we have the probability
\begin{equation}\label{eq:mastertrace-tail-effective}
\Pbb\{\mathcal E_T>CK(\sqrt{\frac{nr}{m}}+\frac{nr}{m}+\sqrt{\frac{u}{m}}+\frac{u}{m})\}\le e^{-u}.
\end{equation}
\end{lemma}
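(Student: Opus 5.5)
The argument is a uniform concentration bound for the empirical process $Z_H:=\frac1m\sum_{i}\bigl(\abs{\bm{a}_i^*H\bm{a}_i}-\E\abs{\bm{a}^*H\bm{a}}\bigr)$ indexed by $H\in T$. It proceeds in three steps: symmetrization with contraction to remove the absolute value, a decoupling split of the quadratic form, and chaining/Bernstein estimates for the pieces.

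\textbf{Step 1: symmetrization and contraction.}
By standard symmetrization, $\E\mathcal E_T\le 2\E\sup_{H\in T}\abs{\frac1m\sum_i\varepsilon_i\abs{\bm{a}_i^*H\bm{a}_i}}$, where $\varepsilon_i$ are independent Rademacher signs. Since $x\mapsto\abs{x}$ is $1$-Lipschitz and vanishes at $0$, the Ledoux--Talagrand contraction principle (applied conditionally on the $\bm{a}_i$) removes the absolute value. This gives
\[
\E\mathcal E_T\le 4\,\E\sup_{H\in T}\Bigl|\frac1m\sum_{i=1}^m\varepsilon_i\,\bm{a}_i^*H\bm{a}_i\Bigr| =4\,\E\sup_{H\in T}\abs{\ip{H}{M}},\qquad M:=\frac1m\sum_{i=1}^m\varepsilon_i\bm{a}_i\bm{a}_i^*.
\]

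\textbf{Step 2: bounding $\E\sup_H\abs{\ip{H}{M}}$.}
Decompose $M=M_{\mathrm{off}}+\frac1m\sum_i\varepsilon_i\diag(\abs{a_{i,j}}^2)_j$. The diagonal part, paired with $H$, can be written as $\frac1m\sum_i\varepsilon_i(\abs{\bm{a}_i}^2\text{-weighted})$ terms. It is simpler, however, to split $\bm{a}_i\bm{a}_i^*=(\bm{a}_i\bm{a}_i^*-I)+I$. The identity part contributes $\abs{\Tr H}\cdot\abs{\frac1m\sum_i\varepsilon_i}\lesssim K/\sqrt m$, which is controlled using $\sup\abs{\Tr H}\le K$.

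For the centered part, use duality between the nuclear and operator norms: $\abs{\ip{H}{N}}\le\norm{H}_*\norm{N}_2\le K\sqrt r\,\norm{N}_2$ with $N=\frac1m\sum_i\varepsilon_i(\bm{a}_i\bm{a}_i^*-I)$. The standard covariance/sample-matrix bound (a net argument over the sphere plus Bernstein for sub-exponential variables) gives $\E\norm{N}_2\lesssim\sqrt{n/m}+n/m$. This alone yields $K(\sqrt{nr/m}+\sqrt r\,n/m)$, which loses $\sqrt r$ in the second term. To obtain exactly $nr/m$, refine by writing $H=\sum_k h_k\bm{v}_k\bm{v}_k^*$ and grouping eigenvalues dyadically: the top $r$ eigenvalues are bounded via $\norm{H}_F\le K$ and a rank-$r$ restricted operator norm, and the tail via the nuclear norm. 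Alternatively, run Talagrand's generic chaining / Bernstein mixed-tail bound (Dirksen's theorem) for the process $\ip{H}{N}$. Its increments are sub-exponential with $\psi_2$-scale $\norm{H-H'}_F/\sqrt m$ and $\psi_1$-scale $\norm{H-H'}_2/m$, so $\gamma_2(T,\norm{\cdot}_F)\lesssim K\sqrt{nr}$ and $\gamma_1(T,\norm{\cdot}_2)\lesssim K nr$. The latter follows from the covering numbers of the set $\{\norm{H}_F\le K,\ \norm{H}_*\le K\sqrt r\}$, which behaves like $\sqrt{r}$ times the nuclear ball: approximate such $H$ by rank-$O(r)$ matrices with entropy $\sim nr\log(\cdot)$.

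\textbf{Step 3: the deviation from the expectation.}
Dirksen's mixed-tail chaining bound also directly provides the deviation terms $K(\sqrt{u/m}+u/m)$, using $\sup\norm{H}_F\le K$ and $\sup\norm{H}_2\le K$ as the diameters. Alternatively, apply a Talagrand-type concentration for suprema of unbounded empirical processes with sub-exponential summands, since $\norm{\abs{\bm{a}^*H\bm{a}}}_{\psi_1}\lesssim\norm{H}_F+\abs{\Tr H}\lesssim K$. Combining the expectation bound with the tail bound yields \eqref{eq:mastertrace-tail-effective}.

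The main obstacle is Step 2: obtaining the sharp $nr/m$ term rather than $\sqrt r\,nr/m$ or an extra logarithm. This requires controlling the $\gamma_1$ functional of $T$ under the operator norm. I would first try the direct Dirksen route, bounding entropy numbers through rank-$r$ approximation with Sudakov dual estimates. If the $\gamma_1$ computation picks up a log, the fallback is to pass to $\mathrm{conv}$ of rank-$2r$ matrices with $\norm{\cdot}_F\le 2K$ (a standard inclusion for this $T$) and use a $\tfrac14$-net of size $(C)^{O(nr)}$ on that set, together with a union bound on Bernstein's inequality.
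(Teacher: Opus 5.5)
Your bound on the expected supremum is essentially the paper's argument. You symmetrize, remove the absolute value by contraction, split $\bm a_i\bm a_i^*=(\bm a_i\bm a_i^*-I_n)+I_n$, handle the $S_m\Tr H$ term using $\abs{\Tr H}\le K$, and control the centered part through a rank-restricted norm. Your fallback is the right route. The convex-hull inclusion gives $\sup_{H\in T}\abs{\ip{H}{Z}}\le 2K\norm{Z}_{(r,2)}$. The paper gets this directly from von Neumann's inequality: the top $r$ singular values pair with $\norm{H}_F\le K$, and the rest are bounded by $K\sqrt r\,\sigma_{r+1}(Z)\le K\norm{Z}_{(r,2)}$. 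A net on rank-$r$ matrices plus Bernstein plus a union bound is exactly the content of the Gao--Ma--Ren--Zhou lemma the paper cites. (The paper splits $Z_m$ over $I_\pm$ to use an unsigned Wishart bound; your pointwise Bernstein handles the signs directly.) I would drop your primary route. The claim $\gamma_1(T,\norm{\cdot}_2)\lesssim Knr$ is not justified, because $T=K$ times the full nuclear ball satisfies all three hypotheses. Its operator-norm entropy is $\gtrsim n/\eps$ across about $\log n$ dyadic scales, so entropy bounds give $Kn\log n$ for $r=1$. Chaining the convex set instead of its extreme points is typically where such a log enters.

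The genuine gap is the tail bound. In Step 1 you symmetrize and contract only in expectation, and Step 3 does not repair this. Dirksen's mixed-tail chaining gives a tail for $\sup_H\abs{\ip{H}{N}}$, which is the symmetrized, linearized process. The inequality $\E\mathcal E_T\le 4\E\sup_H\abs{\ip{H}{M}}$ cannot transport that tail to $\mathcal E_T$. The Talagrand/Adamczak alternative fails quantitatively. Those inequalities require the envelope $\norm{\max_i\sup_{H\in T}\abs{\bm a_i^*H\bm a_i}}_{\psi_1}$, not the per-$H$ bound $\norm{\bm a^*H\bm a}_{\psi_1}\lesssim K$. If $T$ contains all $K\bm v\bm v^*$ with $\bm v$ a unit vector (which the hypotheses allow), the envelope is $K\norm{\bm a_i}_2^2\approx Kn$. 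The resulting deviation term is $Knu/m$, which is useless at $u\asymp\delta^2 m$.

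The fix, which is what the paper does, is to run Steps 1--2 at the level of $L_p$ norms. Both the symmetrization inequality and Ledoux--Talagrand contraction hold for every convex increasing $F$, in particular $F(t)=t^p$. This gives $\norm{\mathcal E_T}_{L_p}\le\frac4m\norm{\sup_H\abs{\sum_i\eps_i\bm a_i^*H\bm a_i}}_{L_p}$. Then convert your tail bounds for $\norm{Z_m}_{(r,2)}$ and for $S_m=\sum_i\eps_i$ (Hoeffding) into moment bounds of order $\sqrt{mnr}+nr+\sqrt{mp}+p$ and $\sqrt{mp}$. Finally, return to a tail bound for $\mathcal E_T$ with the moment-to-tail lemma.
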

\begin{lemma}\label{d3}
Let $\bm{a}\in\HH^n$ be a standard Gaussian vector. Then for any $X,Y\in \Mr(\HH)$, we have
\[
\frac{1}{\beta_0^{\HH,*}(r)}\norm{X-Y}_*\le \E \abs{\bm{a}^*(X-Y)\bm{a}} \le\norm{X-Y}_*,
\]
\[
(r+1)\eta_0^{\HH}(r)\norm{X-Y}_2\le \E \abs{\bm{a}^*(X-Y)\bm{a}} \le U^\HH_r \norm{X-Y}_2,
\]
where $U^{\HH}_r$ is defined in Theorem~\ref{d1}.
\end{lemma}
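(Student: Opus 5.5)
The plan is to reduce every bound to a computation on the eigenvalues of $H:=X-Y$, using unitary (orthogonal) invariance of the Gaussian vector. Since $X,Y\in\Mr$, $H$ has at most $r$ positive and at most $r$ negative eigenvalues $\lambda_i$. Writing $\bm{a}^*H\bm{a}\overset{d}{=}\sum_i\lambda_i\abs{g_i}^2$, we must study the function
\[
F(\lambda):=\E\abs{\sum_i\lambda_i\abs{g_i}^2}.
\]
This function is convex in $\lambda$ and invariant under permutations of coordinates. All four bounds will be extracted from these two properties, together with the Gamma--Beta factorization in Lemma~\ref{gamma-beta}.

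\emph{Upper bounds.} For the nuclear norm, $\abs{\bm{a}^*H\bm{a}}\le \bm{a}^*\abs{H}\bm{a}$, so $\E\abs{\bm{a}^*H\bm{a}}\le\Tr\abs{H}=\norm{H}_*$.

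For the operator norm, normalize $\norm{H}_2=1$ and fix the sign pattern: a positive set $P$ and a negative set $N$ with $\abs{P},\abs{N}\le r$. On the box $\{0\le\lambda_i\le1,\ i\in P;\ -1\le\lambda_i\le0,\ i\in N\}$ the convex function $F$ attains its maximum at a vertex. So it suffices to bound $\E\abs{S_p-S_q'}$, where $S_p,S_q'$ are independent sums of $p,q\le r$ squared Gaussian moduli. By Cauchy--Schwarz,
\[
\E\abs{S_p-S_q'}\le\sqrt{\operatorname{Var}+(p-q)^2},
\]
and this is $\le r$ once $r$ is moderately large. The remaining cases are checked explicitly:
\begin{itemize}[label=$\cdot$]
\item complex, $r=1$: the difference of two exponentials is Laplace, with mean absolute value $1$;
\item real, $r=1$: $g_1^2-g_2^2\overset{d}{=}2g_1g_2$, which gives exactly $4/\pi$;
\item real, $r=2,3$: a short direct computation.
\end{itemize}

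\emph{Nuclear lower bound.} Normalize $\norm{H}_*=1$ and let the positive eigenvalues sum to $\alpha$ and the negative ones to $-(1-\alpha)$. Pad both groups with zeros to exactly $r$ coordinates each. Averaging $\lambda$ over permutations within the positive block and within the negative block does not increase $F$, by Jensen. This reduces to $\lambda=(\alpha/r,\dots,\alpha/r,-(1-\alpha)/r,\dots)$. The resulting one-variable function $\alpha\mapsto\E\abs{\alpha S-(1-\alpha)S'}/r$ is convex and symmetric under $\alpha\leftrightarrow1-\alpha$, because $S\overset{d}{=}S'$. Hence it is minimized at $\alpha=1/2$, where it equals $\E\abs{S-S'}/(2r)$. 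By Lemma~\ref{gamma-beta}, $S+S'$ is independent of $S/(S+S')\sim\mathrm{Beta}(c_\HH r,c_\HH r)$. So $\E\abs{S-S'}=\E(S+S')\cdot\E\abs{2U-1}=2r/\beta_0^{\HH,*}(r)$, using the calculation already done in the proof of Lemma~\ref{lem:haar-average}(i).

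\emph{Operator lower bound (the main obstacle).} Normalize $\norm{H}_2=1$. Replacing $H$ by $-H$ if necessary (this leaves $F$ unchanged), assume $\lambda_1=1$ is the top eigenvalue. Symmetrizing within the remaining positive block (size $r-1$) and within the negative block (size $r$) gives the form
\[
\lambda=(1,s,\dots,s,-t,\dots,-t),\qquad s,t\in[0,1].
\]
At $s=0$, the Gamma--Beta factorization shows that $F$ equals $\E\big(\abs{g_1}^2+S_r\big)\cdot\eta^\HH(r,t)=(r+1)\eta^\HH(r,t)$, exactly as in Lemma~\ref{lem:haar-average}(ii). Minimizing over $t$ then gives $(r+1)\eta_0^\HH(r)$.

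The hard step is showing that the extra positive mass $s$ cannot decrease $F$. My first attempt would use convexity in $s$: it suffices to show $\partial_sF\ge0$ at $s=0$. Since the added coordinates are independent of $Z_t:=\abs{g_1}^2-tS_r$, this derivative is proportional to $\Pbb(Z_t>0)-\Pbb(Z_t<0)$. I would then try to show that at the minimizing $t$ this is nonnegative. One route is the first-order condition $\partial_tF=0$, which encodes a median-type identity for the Beta variable, consistent with the median appearing in Lemma~\ref{p2}. For non-optimal $t$, the joint minimum can be handled by convexity in $(s,t)$. If this fails, the fallback is a direct coupling argument: split $s$-mass so as to compare against the $D_t$ configuration with a re-tuned $t$.
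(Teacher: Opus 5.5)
Your nuclear-norm bounds are fine and match the paper: averaging within the two sign blocks is exactly the majorization $\bm a_{t(\bm x)}\prec\bm x$ in Lemma~\ref{lem:inff}.

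\textbf{Main gap: the operator-norm lower bound.} After averaging within sign blocks you are left with $\lambda=(1,s,\dots,s,-t,\dots,-t)$. You never show that $s>0$ cannot push $F$ below $\inf_t F(1,0,\dots,0,-t,\dots,-t)$; you only list strategies you would try. This is not a formality, because monotonicity in $s$ is false for fixed $t$. For $r\ge2$ and $t$ near $1$ one has $\Pbb(Z_t>0)<\Pbb(Z_t<0)$ (in the complex case $\Pbb(Z_1>0)=2^{-r}$), so increasing $s$ lowers $F$ there. Only the joint infimum is preserved.

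The idea you are missing is to majorize \emph{across} the sign blocks rather than within them. With $t'=t-\frac{r-1}{r}s$ one checks
\[
(1,0,\dots,0,-t',\dots,-t')\prec(1,s,\dots,s,-t,\dots,-t).
\]
This is the paper's $\bm b_{t(\bm x)}\prec\bm x$ with $t(\bm x)=-\frac1r\sum_{i\ge2}x_i$ in \eqref{eq:lemf2}, which uses $x_2,\dots,x_r\ge0\ge x_{r+1},\dots,x_{2r}$. Convexity and permutation invariance then absorb the extra positive mass into the negative block at no cost. If $t'<0$, every nonzero coordinate is positive, so $F\ge F(1,0,\dots,0)$, which is the $t=0$ member of the family.

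Your own KKT route can also be closed, but the needed step is absent from the proposal. Take a minimizer $t^*$ of $t\mapsto F(1,0,\dots,0,-t,\dots,-t)$ on $[0,1]$. It lies in $(0,1]$, since the right derivative at $0$ is $-\E S_r<0$. At $t^*$ one has $\E[\sgn(Z_{t^*})S_r]\ge0$. Because $\sgn(\abs{g_1}^2-tS_r)$ is nonincreasing in $S_r$ for fixed $g_1$, Chebyshev's association inequality gives
\[
\E[\sgn(Z_{t^*})S_r]\le\E[\sgn(Z_{t^*})]\,\E S_r .
\]
Hence $\partial_sF\ge0$ at $(0,t^*)$. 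Joint convexity in $(s,t)$ then makes $(0,t^*)$ a global minimizer on $[0,1]^2$.

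\textbf{Secondary error: the operator-norm upper bound.} Cauchy--Schwarz does not leave only small $r$.
\begin{itemize}
\item The vertex $(p,q)=(r,0)$ gives $\sqrt{\operatorname{Var}(S_r)+r^2}>r$. This is harmless, since $\E S_r=r$ directly.
\item In the real case the vertex $(p,q)=(r,1)$ gives $\E(S_r-g^2)^2=2(r+1)+(r-1)^2=r^2+3>r^2$ for every $r\ge2$. So this configuration is never covered by your argument.
\end{itemize}
The $(r,1)$ case needs a separate estimate. For example, use the paper's $\E\abs{\chi^2_2-\chi^2_1}=2\sqrt2-1$ from Lemma~\ref{lem:gamma-difference} together with the triangle inequality, which gives $\E\abs{S_r-g^2}\le r-2+2\sqrt2-1\le r$.
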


In order to use Lemma~\ref{mastertrace}, we also need the following lemma to verify its conditions.
\begin{lemma}\label{Fnorm}
Let $\bm{a}\in\HH^n$ be a standard Gaussian vector. Then for any $H\in \Sn(\mathbb H)$, we have
\[
\E \abs{\bm{a}^*H\bm{a}}\ge \abs{\Tr(H)},\qquad \E \abs{\bm{a}^*H\bm{a}}\gtrsim \norm{H}_F.
\]
\end{lemma}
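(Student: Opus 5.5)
The plan is to diagonalize $H$ and reduce both claims to statements about weighted sums of independent scalar Gaussians. Since the standard Gaussian vector is unitarily (orthogonally) invariant, we may write $H=U^*\Lambda U$ with $\Lambda=\diag(\lambda_1,\dots,\lambda_n)$ real. Then
\[
\bm{a}^*H\bm{a}\overset{d}{=}\sum_{j=1}^n\lambda_j\abs{g_j}^2,
\]
where the $g_j$ are i.i.d.\ standard Gaussians in $\HH$, so that $\E\abs{g_j}^2=1$.

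\textbf{The trace bound.} By Jensen's inequality,
\[
\E\abs{\bm{a}^*H\bm{a}}\ge\abs{\E\,\bm{a}^*H\bm{a}}=\abs{\sum_j\lambda_j}=\abs{\Tr(H)}.
\]
This part is immediate.

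\textbf{The Frobenius bound.} Set $Z:=\sum_j\lambda_j\abs{g_j}^2$. I would use a Paley--Zygmund-type moment comparison, namely the H\"older interpolation
\[
\E Z^2\le(\E\abs{Z})^{2/3}(\E Z^4)^{1/3},\qquad\text{which gives}\qquad \E\abs{Z}\ge\frac{(\E Z^2)^{3/2}}{(\E Z^4)^{1/2}}.
\]
Compute $\E Z^2=\Tr(H)^2+\kappa\norm{H}_F^2$, where $\kappa=\operatorname{Var}\abs{g}^2$ equals $2$ in the real case and $1$ in the complex case. In particular $\E Z^2\ge\kappa\norm{H}_F^2$.

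For the fourth moment, write $Z=\Tr(H)+W$ with $W=\sum_j\lambda_j(\abs{g_j}^2-1)$, a centered sum of independent subexponential variables. Hypercontractivity for Gaussian chaos of degree two, or alternatively Bernstein/Hanson--Wright, gives $\norm{W}_{L_4}\lesssim\norm{W}_{L_2}\asymp\norm{H}_F$. Hence $\norm{Z}_{L_4}\lesssim\abs{\Tr(H)}+\norm{H}_F\lesssim\norm{Z}_{L_2}$, so $\E Z^4\lesssim(\E Z^2)^2$. Plugging into the interpolation inequality yields
\[
\E\abs{Z}\gtrsim(\E Z^2)^{1/2}\ge\sqrt{\kappa}\,\norm{H}_F .
\]

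\textbf{Main obstacle.} The only nontrivial step is the $L_4$--$L_2$ equivalence for $Z$ with a universal constant, independent of $n$ and of the spectrum of $H$. I would handle it by direct expansion instead of quoting hypercontractivity. Expanding $\E W^4$ as a sum over indices, only terms with pairs or quadruples of equal indices survive. This gives
\[
\E W^4=\sum_j\lambda_j^4\,\E(\abs{g}^2-1)^4+3\kappa^2\sum_{j\ne k}\lambda_j^2\lambda_k^2\lesssim\Big(\sum_j\lambda_j^2\Big)^2=\norm{H}_F^4,
\]
which closes the argument.
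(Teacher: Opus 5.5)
Your proof is correct and rests on the same second-versus-fourth-moment comparison as the paper. The paper computes $\E(\bm{a}^*H\bm{a})^2$ exactly as you do, then imports the Paley--Zygmund small-ball bound of Krahmer--St\"oger (their Lemma 8) to get $\Pbb\{\abs{\bm{a}^*H\bm{a}}\ge\norm{H}_F/\sqrt2\}\gtrsim 1$, and converts that into a first-moment bound; you instead reach $\E\abs{Z}$ directly through the H\"older interpolation $\E\abs{Z}\ge(\E Z^2)^{3/2}/(\E Z^4)^{1/2}$ and verify $\E Z^4\lesssim(\E Z^2)^2$ by explicit expansion. That fourth-moment bound is the input hidden inside the cited lemma, so your version is self-contained and in fact yields $\E\abs{\bm{a}^*H\bm{a}}\gtrsim(\E(\bm{a}^*H\bm{a})^2)^{1/2}$.
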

The proof of Lemma~\ref{Fnorm} is placed in Appendix. Now we can present the proof of Theorem~\ref{d1}.

\begin{proof}[Proof of Theorem~\ref{d1}]
Let
\[
T(r):=\{\frac{X-Y}{\E \abs{\bm{a}_i^*(X-Y)\bm{a}_i}}:  X, Y \in \Mr\}.
\]
In order to use Lemma~\ref{mastertrace} on $T(r)$, we need to verify its conditions. By Lemma~\ref{d3} and the normalization in the definition of $T(r)$, we have
\[
\sup_{H\in T(r)}\norm{H}_*\le \beta_{0}^{\HH,*}(r)=O(\sqrt{r}),
\]
where the last equality follows from the definition of $\beta_{0}^{\HH,*}(r)$ in Definition~\ref{def:beta0}. From Lemma~\ref{Fnorm} it follows that $\sup_{H\in T(r)}\norm{H}_F=O(1)$ and $\sup_{H\in T(r)} \abs{\Tr(H)}\le 1$. Therefore the conditions in Lemma~\ref{mastertrace} are satisfied by $r$ and a universal constant $K$.

Notice that $\E\abs{\bm{a}_i^* H \bm{a}_i}=1$ holds for all $H\in T(r)$.
Applying Lemma~\ref{mastertrace} with $u=\frac{1}{16C^2K^2}\delta^2m$ to $T(r)$, we obtain
\begin{equation}\label{eq:probbound1}
\sup_{H\in T(r)}\abs{\frac1m\sum_{i=1}^m\abs{\bm{a}_i^*H\bm{a}_i}-1} \le \delta
\end{equation}
holds with probability at least $1-2\exp(-1/(16C^2K^2)\delta^2 m)$ respectively, provided that $m\ge 16C^2K^2\delta^{-2}nr$. Combining with Lemma~\ref{d3}, \eqref{eq:probbound1} implies
\[
\frac{1}{\beta_{0}^{\HH,*}(r)}(1-\delta)\norm{X-Y}_* \le \frac{1}{m} \norm{\Phi_{\Aone}(X)-\Phi_{\Aone}(Y)}_1 \le (1+\delta)\norm{X-Y}_*,
\]
\[
(r+1)\eta_0^{\HH}(r)(1-\delta)\norm{X-Y}_2 \le \frac{1}{m} \norm{\Phi_{\Aone}(X)-\Phi_{\Aone}(Y)}_1 \le U^{\HH}_r(1+\delta)\norm{X-Y}_2 .
\]
hold for all $X, Y \in \Mr$.
\end{proof}

\subsection{Proof of Lemma~\ref{mastertrace} }

We now pass to the probabilistic estimates needed for Lemma~\ref{mastertrace}. We prove Lemma~\ref{mastertrace} by the method of symmetrization, a similar argument to which is used in the Mendelson's small ball method \cite{mendelson2014learning}. We begin with fundamental results about symmetrization, moment bound and tail bound.
\begin{lemma}[Lemma 2.3.1\cite{vanderVaartWellner1996}]\label{thm:Lp-symm}
Let $X_1,\dots,X_m$ be independent identically distributed random variables, and  $\eps_1,\dots,\eps_m$ be independent Rademacher signs which are independent of the $X_i$. Suppose that $\mathcal F$ is a class of functions such that the left hand side of \eqref{symm} is finite. Then, for every $p\ge 1$,
\begin{equation}\label{symm}
\norm{\sup_{f\in\mathcal F}\abs{\sum_{i=1}^m(f(X_i)-\E f(X_i))}}_{L_p}
\le
2\norm{\sup_{f\in\mathcal F}\abs{\sum_{i=1}^m\eps_i f(X_i)}}_{L_p}.
\end{equation}
\end{lemma}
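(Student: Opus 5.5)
The plan is the classical symmetrization argument: introduce an independent ghost sample, use conditional Jensen to replace the centering by the ghost sample, insert Rademacher signs by exchangeability, and finish with Minkowski's inequality. To avoid measurability issues I first reduce to a finite (hence countable) subclass $\mathcal F_0\subset\mathcal F$. Both sides of \eqref{symm} are monotone in the class, and the suprema over finite subclasses increase to the supremum over $\mathcal F$, either in the separable/pointwise-measurable setting implicitly assumed in the paper or with outer expectations as in \cite{vanderVaartWellner1996}. So it suffices to prove \eqref{symm} for finite $\mathcal F$ and pass to the limit by monotone convergence.

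\textbf{Step 1 (ghost sample).} Let $X_1',\dots,X_m'$ be an independent copy of $X_1,\dots,X_m$, independent of everything else, and write $\E'$ for expectation over the primed variables only. Then $\E f(X_i)=\E' f(X_i')$. For fixed $X$ the map
\[
(v_f)_{f\in\mathcal F}\mapsto \sup_{f}\abs{v_f}^p
\]
is convex, so conditional Jensen gives
\[
\sup_{f}\abs{\sum_{i}(f(X_i)-\E' f(X_i'))}^p\le \E'\sup_f\abs{\sum_i (f(X_i)-f(X_i'))}^p .
\]
Taking $\E$ over $X$ bounds the left side of \eqref{symm} by $\norm{\sup_f\abs{\sum_i (f(X_i)-f(X_i'))}}_{L_p}$. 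The hypothesis that the left-hand side of \eqref{symm} is finite guarantees integrability here, so Jensen applies.

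\textbf{Step 2 (inserting signs).} For each fixed sign vector $\epsilon\in\{\pm1\}^m$, swapping $X_i\leftrightarrow X_i'$ for those $i$ with $\epsilon_i=-1$ preserves the joint law of $(X,X')$. Hence the process $\bigl(\sum_i (f(X_i)-f(X_i'))\bigr)_f$ has the same law as $\bigl(\sum_i \epsilon_i(f(X_i)-f(X_i'))\bigr)_f$. Averaging over independent Rademacher $\epsilon$, the $L_p$ norm from Step 1 equals $\norm{\sup_f\abs{\sum_i\epsilon_i(f(X_i)-f(X_i'))}}_{L_p}$.

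\textbf{Step 3 (Minkowski).} Bound $\sup_f\abs{a_f-b_f}\le\sup_f\abs{a_f}+\sup_f\abs{b_f}$ and apply Minkowski's inequality in $L_p$. Since $(\epsilon,X')$ has the same law as $(\epsilon,X)$, the two resulting terms coincide, which yields the factor $2$ in \eqref{symm}.

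\textbf{Main obstacle.} The only genuinely delicate point is the measurability of the suprema together with the justification of Jensen/Fubini. I would handle this by the finite-subclass reduction described at the start. All remaining steps are exact distributional identities.
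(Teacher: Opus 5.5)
Your proof is correct and is exactly the classical ghost-sample symmetrization (conditional Jensen, sign insertion by exchanging $X_i$ and $X_i'$, then Minkowski) that proves the cited Lemma 2.3.1 of van der Vaart--Wellner, specialized to the convex nondecreasing function $x\mapsto x^p$; the paper gives no proof of its own beyond that citation. The one caveat is your measurability aside: the finite-subclass reduction works when the supremum is realized along a countable subclass, which covers the paper's application because $H\mapsto\abs{\bm{x}^*H\bm{x}}$ and $H\mapsto\E\abs{\bm{a}^*H\bm{a}}$ are continuous on the separable set $T$. It does not extend to outer expectations over general classes: for indicators of singletons under a continuous law, every countable subclass gives supremum $0$ while the full supremum is $1$. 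This is why van der Vaart--Wellner carry the outer expectations through the argument directly instead of reducing to finite classes.
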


\begin{lemma}[Theorem 4.12\cite{LedouxTalagrand1991}]\label{thm:Lp-contract}
Let $T\subset\R^m$, and $\phi_i:\R\to\R$ be $1$-Lipschitz functions satisfying $\phi_i(0)=0$. Suppose that $\eps_1,\dots,\eps_m$ are independent Rademacher signs. Then, for each $p\ge 1$ we have
\[
\norm{\sup_{\bm{t}\in T}\abs{\sum_{i=1}^m\eps_i\phi_i(t_i)}}_{L_p(\eps)}
\le
2\norm{\sup_{\bm{t}\in T}\abs{\sum_{i=1}^m\eps_i t_i}}_{L_p(\eps)}.
\]
\end{lemma}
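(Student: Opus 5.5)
The plan is to derive this $L_p$ statement from the comparison inequality for convex increasing functions (the argument of Ledoux--Talagrand), specialized to $F(x)=x^p$. We may assume $T$ is finite (then pass to a monotone limit) and that $0\in T$: adding the point $0$ does not change the right-hand side's relevant lower bound, and since $\phi_i(0)=0$ it contributes the value $0$ to both suprema. In particular both $\sup_{\bm t}\sum_i\eps_i\phi_i(t_i)$ and $\sup_{\bm t}\sum_i\eps_i t_i$ are then nonnegative.

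\textbf{Step 1: comparison without absolute values.} We show
\[
\E F\Bigl(\sup_{\bm t\in T}\sum_i\eps_i\phi_i(t_i)\Bigr)\le \E F\Bigl(\sup_{\bm t\in T}\sum_i\eps_i t_i\Bigr)
\]
for every convex nondecreasing $F:\R\to\R$. We replace $\phi_i$ by the identity one coordinate at a time. Conditioning on $\eps_2,\dots,\eps_m$, this reduces to the following claim: for any function $u:T\to\R$ and any $1$-Lipschitz $\phi$,
\[
F\Bigl(\sup_{\bm t}\bigl(u(\bm t)+\phi(t_1)\bigr)\Bigr)+F\Bigl(\sup_{\bm s}\bigl(u(\bm s)-\phi(s_1)\bigr)\Bigr)\le F\Bigl(\sup_{\bm t}\bigl(u(\bm t)+t_1\bigr)\Bigr)+F\Bigl(\sup_{\bm s}\bigl(u(\bm s)-s_1\bigr)\Bigr).
\]
To prove it, fix near-maximizers $\bm t,\bm s$ on the left and, by symmetry, assume $t_1\ge s_1$. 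There are then two cases.
\begin{enumerate}[label=(\alph*)]
\item If $\phi(t_1)-\phi(s_1)$ and $t_1-s_1$ are compatible in sign, the inequality follows directly by comparing each term with its counterpart.
\item Otherwise, we set $a=u(\bm s)-\phi(s_1)$ and $b=u(\bm t)+\phi(t_1)$, and use the Lipschitz bound $|\phi(t_1)-\phi(s_1)|\le t_1-s_1$ to place the pair $(a,b)$ inside the interval spanned by $u(\bm s)-s_1$ and $u(\bm t)+t_1$, with matching sum ordering. The elementary convexity fact is that $F(x)+F(y)\le F(x')+F(y')$ whenever $x'\le x,y\le y'$ and $x+y\le x'+y'$, with $F$ nondecreasing.
\end{enumerate}
This two-point case analysis is the main obstacle: it is where the $1$-Lipschitz hypothesis and the monotonicity/convexity of $F$ are both needed, and the signs must be handled carefully.

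\textbf{Step 2: passing to absolute values.} Write $Z_\phi(\bm t)=\sum_i\eps_i\phi_i(t_i)$. Since $0\in T$, we have
\[
\sup_{\bm t}|Z_\phi|\le \sup_{\bm t} Z_\phi+\sup_{\bm t}(-Z_\phi),
\]
and both terms are nonnegative. Take $F(x)=(x^+)^p$, which is convex and nondecreasing. Convexity gives
\[
\E\Bigl(\tfrac12\sup|Z_\phi|\Bigr)^p\le \tfrac12\E\Bigl(\sup Z_\phi\Bigr)^p+\tfrac12\E\Bigl(\sup(-Z_\phi)\Bigr)^p.
\]
Since $-\eps\overset{d}{=}\eps$, the two terms on the right are equal. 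Applying Step 1 to the first term bounds it by $\E(\sup_{\bm t}\sum_i\eps_i t_i)^p$. Because the supremum is nonnegative, this is at most $\E\sup_{\bm t}|\sum_i\eps_it_i|^p$. Taking $p$-th roots yields the stated inequality with constant $2$.
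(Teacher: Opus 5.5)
Your overall architecture is the Ledoux--Talagrand argument that the paper cites without proof. It has two parts: a one-sided comparison $\E F(\sup_{\bm t}\sum_i\eps_i\phi_i(t_i))\le \E F(\sup_{\bm t}\sum_i\eps_i t_i)$ for convex nondecreasing $F$, proved one coordinate at a time, and then the symmetrization step with $F(x)=(x^+)^p$. Your reductions (finite $T$, adjoining $0$) and all of Step 2 are correct. The gap is in the two-point inequality, which carries the whole content of Step 1.

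\textbf{The two-point inequality as stated is false.} You state it for ``any $1$-Lipschitz $\phi$''. Take $T=\{\bm t\}$ with $t_1=0$, $u=0$ and $\phi\equiv c\neq0$. The left side is $F(c)+F(-c)$, which exceeds $2F(0)$ for strictly convex $F$. The hypothesis $\phi(0)=0$, in the form $\abs{\phi(x)}\le\abs{x}$, is indispensable, yet your case analysis never invokes it.

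\textbf{Your split does not isolate a direct case.} For exact maximizers $\bm t$ of $u+\phi$ and $\bm s$ of $u-\phi$, add $u(\bm t)+\phi(t_1)\ge u(\bm s)+\phi(s_1)$ and $u(\bm s)-\phi(s_1)\ge u(\bm t)-\phi(t_1)$. This gives $\phi(t_1)\ge\phi(s_1)$. So after normalizing $t_1\ge s_1$ you are always in case (a). That case is not termwise. Take $T=\{\bm t,\bm s\}$ with $t_1=2$, $s_1=1$, $u\equiv0$ and $\phi(x)=x/2$. These points are the exact maximizers, and the required inequality is $F(1)+F(-\tfrac12)\le F(2)+F(-1)$. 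For strictly increasing $F$ this fails termwise in both pairings and holds only by convexity. In case (b), the Lipschitz bound gives only the sum condition. It says nothing about the lower endpoint, which is exactly what your convexity fact needs.

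\textbf{How to fix it.} Split by the signs of $t_1$ and $s_1$. Keep $t_1\ge s_1$, which is legitimate via $\phi\mapsto-\phi$. Set $A=u(\bm s)-s_1$, $B=u(\bm t)+t_1$, $a=u(\bm s)-\phi(s_1)$ and $b=u(\bm t)+\phi(t_1)$. Then $a+b\le A+B$ always, by the Lipschitz bound.
\begin{enumerate}
\item If $t_1\ge0\ge s_1$, then $\abs{\phi(x)}\le\abs{x}$ gives $a\le A$ and $b\le B$, and the comparison is termwise.
\item If $t_1\ge s_1\ge0$, then $a\ge A$ because $\phi(s_1)\le s_1$. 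Also $b\ge u(\bm s)+\phi(s_1)\ge A$, using maximality of $\bm t$ and $\phi(s_1)\ge -s_1$.
\item If $0\ge t_1\ge s_1$, then $b\ge B$ because $\phi(t_1)\ge t_1$. Also $a\ge u(\bm t)-\phi(t_1)\ge B$, using maximality of $\bm s$ and $\phi(t_1)\le -t_1$.
\end{enumerate}
In the last two cases both $a$ and $b$ lie above $A$ (resp.\ $B$), and $a+b\le A+B$. Hence $\max(a,b)\le B$ (resp.\ $A$) automatically, and your elementary convexity fact gives $F(a)+F(b)\le F(A)+F(B)$. The right side is at most $F(\sup_T(u-x_1))+F(\sup_T(u+x_1))$. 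So the proof needs both the maximality of the chosen points and $\phi(0)=0$, not just the Lipschitz bound.
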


\begin{lemma}[Lemma A.1 \cite{Dirksen2015}]\label{lem:momtail}
Suppose that $X$ is a random variable that for all $p\ge1$,
\[
\norm{X}_{L_p}\le a_1+a_2\sqrt p+a_3p,
\]
for some $0\le a_1,a_2,a_3<\infty$.
Then for any $u\ge1$, we have
\[
\Pbb(\abs{X}\ge e(a_1+a_2\sqrt u+a_3u))\le e^{-u}.
\]
\end{lemma}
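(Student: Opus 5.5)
The plan is to apply Markov's inequality to the $u$-th moment, taking the exponent $p$ equal to the tail parameter $u$. Since $u\ge 1$, the hypothesis can be used at $p=u$, which gives
\[
\norm{X}_{L_u}\le a_1+a_2\sqrt u+a_3u=:t_u .
\]

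Assume first that $t_u>0$. Markov's inequality applied to the nonnegative random variable $\abs{X}^u$ gives
\[
\Pbb(\abs{X}\ge e\,t_u)=\Pbb\bigl(\abs{X}^u\ge (e\,t_u)^u\bigr)\le \frac{\E\abs{X}^u}{(e\,t_u)^u}=\Bigl(\frac{\norm{X}_{L_u}}{e\,t_u}\Bigr)^u\le \Bigl(\frac{1}{e}\Bigr)^u=e^{-u}.
\]
This is exactly the claimed bound. The factor $e$ in the threshold is what converts the moment bound into the exponential rate $e^{-u}$. The assumption $u\ge 1$ is used only to ensure that $p=u$ lies in the range where the hypothesis is available.

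The only point needing care is the degenerate case $t_u=0$, i.e.\ $a_1=a_2=a_3=0$. Then the hypothesis forces $X=0$ almost surely. The event $\{\abs{X}\ge 0\}$ has probability one, so the inequality as literally written with $\ge$ fails in this case. I would handle it by reading the statement with a strict inequality $\abs{X}>e\,t_u$, whose probability is then $0\le e^{-u}$. Alternatively, one can run the Markov argument with $t_u$ replaced by $t_u+\epsilon$ and let $\epsilon\downarrow 0$, which again yields the bound for the strict event. In every application in the paper (for instance Lemma~\ref{mastertrace}) the constants are strictly positive, so nothing is lost.

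I do not expect a genuine obstacle. The whole argument is the single choice $p=u$ in Markov's inequality, and no optimization over $p$ is needed.
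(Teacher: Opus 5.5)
Your proof is correct and is exactly the standard argument behind this result (the paper gives no proof of its own and simply cites Dirksen 2015): apply Markov's inequality to $\abs{X}^u$ using the moment hypothesis at $p=u$. Your remark about the degenerate case is also right. When $a_1=a_2=a_3=0$ we have $X=0$ almost surely, so the non-strict inequality fails as literally written; reading the event as $\abs{X}>e(a_1+a_2\sqrt u+a_3u)$ in that case costs nothing in the paper's applications.
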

\begin{lemma}[Lemma A.2 \cite{Dirksen2015}]\label{lem:tailmom}
Suppose that $X$ is a random variable that for all $u\ge 0$,
\[
\Pbb(\abs{X}\ge a_1\sqrt u+a_2u)\le e^{-u},
\]
for some $0\le a_1,a_2<\infty$.
Then for all $p\ge 1$, we have
\[
\norm{X}_{L_p}\lesssim a_1\sqrt p+a_2p.
\]
\end{lemma}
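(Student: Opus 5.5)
The plan is to compare $\abs{X}$ stochastically with a simple explicit random variable and then compute moments of that variable directly.

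First dispose of the degenerate case. If $a_1=a_2=0$, the hypothesis with any $u>0$ gives $\Pbb(\abs{X}\ge 0)\le e^{-u}<1$, which is impossible unless the statement is read with $X=0$ almost surely; in either reading the conclusion is trivial. So assume $a_1+a_2>0$.

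Next, set up the comparison. Define $f(u):=a_1\sqrt u+a_2u$ for $u\ge0$; it is continuous, strictly increasing, and a bijection onto $[0,\infty)$. Let $V\sim\Gamma(1,1)$, i.e.\ a standard exponential variable. For every $s\ge 0$, write $s=f(u)$ with $u=f^{-1}(s)$. The hypothesis then reads
\[
\Pbb(\abs{X}\ge s)\le e^{-u}=\Pbb(V\ge u)=\Pbb(f(V)\ge s).
\]
Thus $\abs{X}$ is stochastically dominated by $f(V)$. Using the layer-cake formula $\E\abs{X}^p=\int_0^\infty p s^{p-1}\Pbb(\abs{X}\ge s)\,ds$, we obtain $\E\abs{X}^p\le \E f(V)^p$. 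Equivalently, one can substitute $s=f(u)$ in the layer-cake integral and integrate by parts; the boundary terms vanish because $f(u)^pe^{-u}\to0$.

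Finally, bound the moments of $f(V)$. By Minkowski's inequality in $L_p$,
\[
\norm{f(V)}_{L_p}\le a_1\norm{V^{1/2}}_{L_p}+a_2\norm{V}_{L_p}=a_1\Gamma(\tfrac p2+1)^{1/p}+a_2\Gamma(p+1)^{1/p}.
\]
The elementary estimates $\Gamma(p+1)^{1/p}\le p$ for $p\ge1$ and $\Gamma(\tfrac p2+1)^{1/p}\le C\sqrt p$ both follow from $\Gamma(x+1)\le x^x$ for $x\ge1$ (handled by a constant for $x\in[\tfrac12,1]$). Together they give $\norm{X}_{L_p}\le\norm{f(V)}_{L_p}\lesssim a_1\sqrt p+a_2p$.

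There is no real obstacle here. The only points needing care are justifying the stochastic domination, which uses that $f$ is a monotone bijection so that every threshold $s$ is covered, and checking the Gamma-function growth bounds with a universal constant.
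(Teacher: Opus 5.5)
Your proof is correct and complete, and in fact gives the bound with implied constant $1$. The paper itself gives no proof of this lemma; it imports it verbatim as Lemma~A.2 of \cite{Dirksen2015}, so there is no in-paper argument to compare against.

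The usual proof integrates the tail directly. One substitutes $s=a_1\sqrt u+a_2u$ into $\E\abs{X}^p=p\int_0^\infty s^{p-1}\Pbb(\abs{X}\ge s)\,ds$ and bounds the resulting mixed integrals by Gamma functions. Your stochastic domination of $\abs{X}$ by $f(V)$, with $V$ standard exponential, is the same computation organized more cleanly: Minkowski separates the $\sqrt u$ and $u$ contributions before any integration. Your Gamma estimates give $\norm{X}_{L_p}\le a_1\sqrt p+a_2p$; for $1\le p<2$ simply use $\Gamma(\tfrac p2+1)\le 1$.

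The coupling formulation also buys robustness that the paper uses without comment. Where the lemma is invoked, the tails are not literally of the form ``$\le e^{-u}$ for all $u\ge0$''. Hoeffding gives $2e^{-t^2/(2m)}$, and in the proof of Lemma~\ref{lem:rank-constrained-wishart-lp} there is an additive shift $C(\sqrt{mnr}+nr)$ and rate $e^{-cu}$. Your argument absorbs both immediately. Dominate $\abs{S_m}$ by $\sqrt{2m(V+\log 2)}$, since $\Pbb(V+\log2\ge u)=\min\{1,2e^{-u}\}$. Dominate the $(r,2)$-norm by $C(\sqrt{mnr}+nr)+C(\sqrt{mV/c}+V/c)$. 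Then apply Minkowski again.

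One wording nit: if $a_1=a_2=0$, the hypothesis is never satisfied, since $\Pbb(\abs{X}\ge 0)=1>e^{-u}$ for $u>0$ even when $X=0$. That case is therefore vacuous; it forces $X=0$ almost surely only under a strict-inequality reading.
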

To obtain the tail bound in Lemma~\ref{mastertrace}, we need the following more refined norm than operator norm.
\begin{definition}\label{r2}
For a Hermitian matrix $G\in\mathbb H^{n\times n}$, define the $2$-$r$ norm by
\[
\norm{G}_{(r,2)}:=(\sum_{i=1}^r \sigma_i(G)^2)^{1/2},
\]
where $\sigma_1(G)\ge \cdots \ge \sigma_n(G)$ are the singular values of $G$.
\end{definition}

The next lemma about $2$-$r$ norm follows from Lemma 7\cite{GaoMaRenZhou2015}. Its proof is in the Appendix.
\begin{lemma}\label{lem:rank-constrained-wishart-lp}
Let $\bm{a}_1,\ldots,\bm{a}_m\in\mathbb H^n$ be independent standard Gaussian vectors.
Then for any integer $1\le r\le n$ and all $p\ge1$, we have
\begin{equation}\label{eq:rank-constrained-wishart-lp}
\norm{\norm{\sum_{i=1}^m(\bm{a}_i \bm{a}_i^*-I_n)}_{(r,2)}}_{L_p}\lesssim \sqrt{mnr}+nr+\sqrt{mp}+p.
\end{equation}
\end{lemma}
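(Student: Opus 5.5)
The statement to prove is Lemma~\ref{lem:rank-constrained-wishart-lp}, the $L_p$ bound for the $(r,2)$-norm of the centered Wishart sum $W:=\sum_{i=1}^m(\bm{a}_i\bm{a}_i^*-I_n)$. The plan is to write this norm as a supremum of a quadratic chaos over a low-rank set, prove a tail bound for that supremum, and convert the tail bound to moments with Lemma~\ref{lem:tailmom}.

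\textbf{Step 1 (variational form).} We use
\[
\norm{W}_{(r,2)}=\sup\{\Tr(WH): H\in\Sn(\HH),\ \rank(H)\le r,\ \norm{H}_F\le1\}.
\]
This holds because $W$ is Hermitian: take $H$ supported on the top $r$ eigenvectors of $\abs{W}$, weighted by $\sigma_i/\norm{W}_{(r,2)}$ with the appropriate signs. The converse bound is von Neumann's trace inequality combined with Cauchy--Schwarz. Call the index set $\mathcal T_r$. For each $H$,
\[
\Tr(WH)=\sum_{i=1}^m\bigl(\bm{a}_i^*H\bm{a}_i-\Tr H\bigr).
\]
Stacking the $\bm{a}_i$ into a single Gaussian vector in $\HH^{mn}$ turns this into a centered Gaussian chaos with block-diagonal matrix $I_m\otimes H$. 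That matrix has Frobenius norm $\sqrt m\norm{H}_F\le\sqrt m$ and operator norm at most $1$.

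\textbf{Step 2 (single $H$ and net).} For a fixed $H$, Hanson--Wright gives
\[
\Pbb\bigl(\abs{\Tr(WH)}>C(\sqrt{mu}+u)\bigr)\le 2e^{-u}.
\]
Next, $\mathcal T_r$ admits an $\eps$-net $\mathcal N$ in Frobenius norm with $\log\abs{\mathcal N}\lesssim nr\log(1/\eps)$, by the standard covering of rank-$r$ matrices of Frobenius norm at most $1$ (Cand\`es--Plan). A union bound with $u$ replaced by $u+Cnr$ gives, with probability at least $1-e^{-u}$,
\[
\sup_{H\in\mathcal N}\abs{\Tr(WH)}\lesssim \sqrt{m(nr+u)}+nr+u\le\sqrt{mnr}+nr+\sqrt{mu}+u.
\]

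\textbf{Step 3 (removing the net, the main obstacle).} Passing from $\mathcal N$ to $\mathcal T_r$ needs care, since $H-H'$ for $H,H'\in\mathcal T_r$ has rank up to $2r$, not $r$. The remedy is to split any rank-$2r$ Hermitian matrix into two rank-$r$ pieces with orthogonal supports. Their Frobenius norms add in square, so
\[
\abs{\Tr(W(H-H'))}\le\sqrt2\,\eps\,\norm{W}_{(r,2)}.
\]
Taking $\eps=1/4$ yields $\norm{W}_{(r,2)}\le 2\sup_{\mathcal N}\abs{\Tr(WH)}$, so the tail bound of Step~2 transfers to $\norm{W}_{(r,2)}$ up to a constant.

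\textbf{Step 4 (moments).} Set $Z:=\bigl(\norm{W}_{(r,2)}-C(\sqrt{mnr}+nr)\bigr)_+$. Step~3 shows $\Pbb\bigl(Z\ge C(\sqrt{mu}+u)\bigr)\le e^{-u}$ for $u\ge 1$. Lemma~\ref{lem:tailmom} then gives $\norm{Z}_{L_p}\lesssim\sqrt{mp}+p$. For $u\in[0,1)$ the tail condition can be absorbed by enlarging the constant. The triangle inequality in $L_p$ then yields \eqref{eq:rank-constrained-wishart-lp}.

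An alternative to Steps~2--3 is to cite Lemma~7 of \cite{GaoMaRenZhou2015} directly for the tail bound of $\sup_{\mathcal T_r}\Tr(WH)$ and keep only Steps~1 and~4.
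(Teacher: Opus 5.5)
Your argument is correct. It shares the paper's outer skeleton: the von Neumann variational formula for $\norm{\cdot}_{(r,2)}$, a uniform tail bound over rank-$r$ test matrices, and conversion to moments via Lemma~\ref{lem:tailmom}. The difference is in how you obtain the tail bound, which the paper does not prove.

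For $\HH=\R$ the paper imports Lemma~\ref{lem:gmrz-lemma7} from \cite{GaoMaRenZhou2015} with $t\asymp\sqrt{(nr+u)/m}+(nr+u)/m$. For $\HH=\C$ it realifies, using
$\ip{\bm a_i\bm a_i^*-I_n}{K}=\frac12\ip{\bm g_i\bm g_i^\top-I_{2n}}{\mathcal R(K)}$
together with $\rank\mathcal R(K)=2\rank K$ and $\norm{\mathcal R(K)}_F=\sqrt2\norm{K}_F$. This reduces the complex $(n,r)$ problem to the real $(2n,2r)$ one.

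You instead re-derive the imported bound from three ingredients:
\begin{itemize}
\item Hanson--Wright for the block chaos with matrix $I_m\otimes H$;
\item a Frobenius-norm net of rank-$r$ Hermitian matrices;
\item splitting the rank-$2r$ difference $H-H'$ into two orthogonally supported rank-$r$ pieces, which closes the net argument with loss $\sqrt2\eps$.
\end{itemize}
This makes the proof self-contained and handles both fields in one pass. Complex Hanson--Wright follows from the real version by realifying a single quadratic form, so you never need to realify the supremum. The price is reproducing a standard argument that the paper simply cites.

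Your Step~4 also makes explicit something the paper leaves implicit. You shift to $Z=(\norm{W}_{(r,2)}-C(\sqrt{mnr}+nr))_+$ before applying Lemma~\ref{lem:tailmom}. This is needed because that lemma as stated has no additive offset, while the tail bound does.

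One caveat: your fallback of citing \cite{GaoMaRenZhou2015} directly and keeping only Steps~1 and~4 covers only $\HH=\R$. That lemma is stated for real $N(0,I_n)$ vectors, so for $\HH=\C$ you would also need the paper's realification step.
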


Now we have all the ingredients for proving Lemma~\ref{mastertrace}.
\begin{proof}[Proof of Lemma~\ref{mastertrace}]
Fix $p\ge1$. Applying Lemma~\ref{thm:Lp-symm} to
\[
f_H(\bm{x}):=\abs{\bm{x}^* H\bm{x}},\qquad H\in T,
\]
and then applying Lemma~\ref{thm:Lp-contract} to $\phi(x)=\abs{x}$, we get
\begin{equation}\label{supsym}
\norm{\mathcal E_T}_{L_p}
\le \frac2m
\norm{\sup_{H\in T}\abs{\sum_{i=1}^m\eps_i\abs{\bm{a}_i^* H \bm{a}_i}}}_{L_p} \le \frac 4m
\norm{\sup_{H\in T}\abs{\sum_{i=1}^m\eps_i \bm{a}_i^* H \bm{a}_i}}_{L_p}.
\end{equation}
Denote $S_m:=\sum_{i=1}^m\eps_i$, and $Z_m:=\sum_{i=1}^m\eps_i(\bm{a}_i\bm{a}_i^*-I_n)$ .
Then \eqref{supsym} becomes
\begin{equation}\label{eq:split-trace}
\norm{\mathcal E_T}_{L_p} \le \frac4m \norm{\sup_{H\in T}\abs{\ip{Z_m}{H}+S_m\Tr H}}_{L_p} \le \frac4m \norm{\sup_{H\in T}\abs{\ip{Z_m}{H}}}_{L_p}+\frac4m \sup_{H\in T}\abs{\Tr H}\norm{S_m}_{L_p}.
\end{equation}
We will bound the two terms in \eqref{eq:split-trace} separately. Suppose $H\in \HH^{n\times n}$ that $\norm{H}_F\le K$ and $\norm{H}_*\le K\sqrt{r}$ for some $K$ and integer $r$, then
\[
\begin{aligned}
\abs{\ip{Z_m}{H}}&\le \sum_{i=1}^n \sigma_i(Z_m)\sigma_i(H)\le (\sum_{i=1}^r \sigma_i(Z_m)^2)^{\frac 12}(\sum_{i=1}^r \sigma_i(H)^2)^{\frac 12}+ \sigma_{r+1}(Z_m) \sum_{i=r+1}^n \sigma_i(H) \\
&\le K(\sum_{i=1}^r \sigma_i(Z_m)^2)^{\frac 12}+K\sqrt{r}\sigma_{r+1}(Z_m)\le 2K\norm{Z_m}_{(r,2)}.
\end{aligned}
\]
where the first inequality follows from the Von Neumann's trace inequality.
Therefore we have
\begin{equation}\label{eq:effective-rank-duality}
\sup_{\substack{\norm{H}_F\le K\\ \norm{H}_*\le K\sqrt{r}}}\abs{\ip{Z_m}{H}}\le2K\norm{Z_m}_{(r,2)}.
\end{equation}
We next bound $\norm{Z_m}_{(r,2)}$. Given the signs, set $
I_+:=\{i:\eps_i=1\}$, $I_-:=\{i:\eps_i=-1\}$.
Then
\[
Z_m=\sum_{i\in I_+}(\bm{a}_i\bm{a}_i^*-I_n)-\sum_{i\in I_-}(\bm{a}_i\bm{a}_i^*-I_n).
\]
Therefore,
\[
\norm{Z_m}_{(r,2)}\le\norm{\sum_{i\in I_+}(\bm{a}_i\bm{a}_i^*-I_n)}_{(r,2)}+
\norm{\sum_{i\in I_-}(\bm{a}_i\bm{a}_i^*-I_n)}_{(r,2)}.
\]
Applying Lemma~\ref{lem:rank-constrained-wishart-lp} to both summations yields
\begin{equation}\label{eq:tau-rho-lp}
\norm{\norm{Z_m}_{(r,2)}}_{L_p}\lesssim \sqrt{mnr}+nr+\sqrt{mp}+p.
\end{equation}
Also, Hoeffding's inequality gives
\[
\Pbb (\abs{S_m}\ge t) \le 2e^{-\frac{t^2}{2m}}.
\]
Using Lemma~\ref{lem:tailmom} we have
\begin{equation}\label{eq:radcount-mastertrace}
\norm{S_m}_{L_p}\lesssim \sqrt{mp}.
\end{equation}
Combining
\eqref{eq:split-trace}, \eqref{eq:effective-rank-duality}, \eqref{eq:tau-rho-lp}, and \eqref{eq:radcount-mastertrace}, we obtain
\[
\norm{\mathcal E_T}_{L_p}\le\frac{CK}{m}(\sqrt{mnr}+nr+\sqrt{mp}+p)
=CK(\sqrt{\frac{nr}{m}}+\frac{nr}{m}+\sqrt{\frac{p}{m}}+\frac{p}{m}),
\]
where $C$ is a universal constant. The tail bound \eqref{eq:mastertrace-tail-effective} follows from Lemma~\ref{lem:momtail}.
\end{proof}

\subsection{Proof of Lemma~\ref{d3} }
We next prepare for proving Lemma~\ref{d3}. We begin with some technical lemmas.

\begin{lemma}\label{lem:gamma-difference}
Let $s,\theta>0$, and let $G_1,G_2$ be independent Gamma$(s,\theta)$ random variables. Then
\[
\E\abs{G_1-G_2}=2\theta\frac{\Gamma\!(s+\frac12)}{\sqrt{\pi}\,\Gamma(s)}.
\]
Besides, if we set $G_1\sim\chi^2_2$ and $G_2\sim\chi^2_1$ are independent, then $\E\abs{G_1-G_2}=2\sqrt{2}-1$.
\end{lemma}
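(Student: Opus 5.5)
The plan is to reduce both claims to Beta-distribution integrals via Lemma~\ref{gamma-beta}, reusing the computation already done in the proof of Lemma~\ref{lem:haar-average}. For the first claim, set $S:=G_1+G_2\sim\Gamma(2s,\theta)$ and $U:=G_1/S\sim\mathrm{Beta}(s,s)$. By Lemma~\ref{gamma-beta} these two variables are independent. Since $G_1-G_2=S(2U-1)$, independence gives
\[
\E\abs{G_1-G_2}=\E S\cdot\E\abs{2U-1}=2s\theta\,\E\abs{2U-1}.
\]
The factor $\E\abs{2U-1}$ for $U\sim\mathrm{Beta}(s,s)$ was evaluated in the proof of \eqref{eq:ebeta}, with $c_\HH r$ replaced by $s$; that calculation did not depend on $c_\HH r$ being of any special form. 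It equals
\[
\frac{2^{1-2s}\Gamma(2s)}{s\Gamma(s)^2}=\frac{\Gamma(s+\frac12)}{s\sqrt\pi\,\Gamma(s)},
\]
where the second form uses the Legendre duplication formula $\Gamma(2s)=2^{2s-1}\Gamma(s)\Gamma(s+\frac12)/\sqrt\pi$. Multiplying by $2s\theta$ gives $2\theta\,\Gamma(s+\frac12)/(\sqrt\pi\,\Gamma(s))$, as claimed.

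For the second claim, write $\chi^2_2=\Gamma(1,2)$ and $\chi^2_1=\Gamma(\frac12,2)$. Then $S=G_1+G_2\sim\Gamma(\frac32,2)$ with $\E S=3$, and $V:=G_2/S\sim\mathrm{Beta}(\frac12,1)$ is independent of $S$. The difference is $G_1-G_2=S(1-2V)$. Because $V\sim\mathrm{Beta}(\frac12,1)$ has density $\frac12 v^{-1/2}$ on $(0,1)$, we may write $V=W^2$ with $W$ uniform on $(0,1)$. The problem therefore reduces to computing $3\,\E\abs{1-2W^2}$, which is the integral $\int_0^1\abs{1-2w^2}\,dw$.

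To evaluate it, split at $w=1/\sqrt2$:
\[
\int_0^{1/\sqrt2}(1-2w^2)\,dw=\tfrac{1}{\sqrt2}-\tfrac{2}{3}\cdot\tfrac{1}{2\sqrt2}=\tfrac{2}{3\sqrt2},
\]
\[
\int_{1/\sqrt2}^1(2w^2-1)\,dw=\Big(\tfrac23-1\Big)-\Big(\tfrac{1}{3\sqrt2}-\tfrac{1}{\sqrt2}\Big)=-\tfrac13+\tfrac{2}{3\sqrt2}.
\]
Their sum is $\frac{4}{3\sqrt2}-\frac13=\frac{2\sqrt2-1}{3}$. Multiplying by $3$ gives $2\sqrt2-1$.

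The only real care needed is in the parametrization: matching the Gamma shapes and scales so that Lemma~\ref{gamma-beta} applies with a common scale $\theta$, and keeping the duplication-formula step clean. Neither step should present a genuine obstacle.
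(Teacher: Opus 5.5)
Your proof is correct. For the first identity it is the paper's argument. You write $G_1-G_2=S(2U-1)$ with $S\sim\Gamma(2s,\theta)$ and $U\sim\mathrm{Beta}(s,s)$ independent by Lemma~\ref{gamma-beta}, compute $\E\abs{2U-1}=2^{1-2s}/(sB(s,s))$, and finish with the duplication formula. That evaluation is indeed valid for every $s>0$, since the antiderivative of $(1-2u)u^{s-1}(1-u)^{s-1}$ is $u^s(1-u)^s/s$.

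For the $\chi^2_2$ versus $\chi^2_1$ case you take a different route from the paper.
\begin{itemize}
\item The paper conditions on $G_2$. Because $G_1$ is exponential with mean $2$, it gets the closed form $\E\abs{G_1-t}=t-2+4e^{-t/2}$ for every $t\ge0$. Averaging over $G_2$ then leaves only $\E G_2=1$ and the Gaussian-type integral $\E e^{-G_2/2}=\int_0^\infty(2\pi x)^{-1/2}e^{-x}\,dx=1/\sqrt2$.
\item You apply Lemma~\ref{gamma-beta} again, now with unequal shapes $1$ and $\frac12$ but common scale $2$. This gives $\E\abs{G_1-G_2}=3\,\E\abs{1-2V}$ with $V\sim\mathrm{Beta}(\frac12,1)$. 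The substitution $V=W^2$, with $W$ uniform, reduces this to the polynomial integral $3\int_0^1\abs{1-2w^2}\,dw$. Your arithmetic, including $B(\frac12,1)=2$ and the split at $w=1/\sqrt2$, is right.
\end{itemize}

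Each route has its own reach. Yours handles both claims with one device and works for any pair of independent Gamma variables with a common scale: $\E\abs{G_1-G_2}=(p+q)\theta\,\E\abs{1-2V}$ with $V\sim\mathrm{Beta}(q,p)$. The paper's conditioning argument instead depends on $G_1$ being exponential, but then it works for any independent nonnegative $G_2$ whose mean and Laplace transform at $\frac12$ are known.
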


The proofs of Lemma~\ref{lem:gamma-difference} is presented in Appendix. The following two lemmas establish the convexity of the expectation in Lemma~\ref{d3} and investigate the lower endpoints of the expectation.

\begin{lemma}[Proposition~4.C.1 \cite{MOA2011}]\label{prec}
Let $\bm{x}, \bm{y} \in \R^n$, and write their components in decreasing order as $x_{(1)}\ge x_{(2)}\ge \dots \ge x_{(n)}$ and $y_{(1)}\ge y_{(2)}\ge \dots \ge y_{(n)}$. Define the relationship $\bm{y} \prec \bm{x}$ if and only if the following holds:
\[
\sum_{i=1}^k y_{(i)} \le \sum_{i=1}^k x_{(i)} \text{ for } k=1,\dots, n-1, \quad \text{and}\quad
\sum_{i=1}^n y_{(i)} = \sum_{i=1}^n x_{(i)}.
\]
Then $\bm{y}\prec \bm{x}$ if and only if
\[
\bm{y} \in \operatorname{conv}\{P_\sigma \bm{x}: P_\sigma \text{ is a permutation matrix }\},
\]
where $\operatorname{conv} X$ denotes the convex hull of $X$.
\end{lemma}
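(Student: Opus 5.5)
The statement is the classical majorization theorem of Hardy--Littlewood--P\'olya and Rado. We prove the two directions separately.

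\emph{Easy direction ($\Leftarrow$).} Suppose $\bm{y}=\sum_\sigma c_\sigma P_\sigma\bm{x}$ with $c_\sigma\ge0$ and $\sum_\sigma c_\sigma=1$. Summing coordinates shows that the total sums of $\bm{y}$ and $\bm{x}$ agree. For fixed $k$, the functional $\bm{z}\mapsto\sum_{i=1}^k z_{(i)}=\max_{|I|=k}\sum_{i\in I}z_i$ is a maximum of linear functionals, so it is convex. It is also permutation invariant. Hence
\[
\sum_{i=1}^k y_{(i)}\le\sum_\sigma c_\sigma\sum_{i=1}^k (P_\sigma\bm{x})_{(i)}=\sum_{i=1}^k x_{(i)},
\]
which gives $\bm{y}\prec\bm{x}$.

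\emph{Hard direction ($\Rightarrow$).} Permuting coordinates does not change either side, so we may assume $\bm{x}$ and $\bm{y}$ are both sorted decreasingly. We argue by induction on the number of indices $i$ with $x_i\ne y_i$, using T-transforms. A T-transform is $T=\lambda I+(1-\lambda)P_{(jk)}$ with $\lambda\in[0,1]$, where $P_{(jk)}$ is a transposition. Each T-transform is a convex combination of permutation matrices.

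If $\bm{x}\ne\bm{y}$, let $j$ be the largest index with $x_j>y_j$; such an index exists because the partial-sum inequalities hold and the totals are equal. Let $k>j$ be the smallest index with $x_k<y_k$. Set $\varepsilon=\min(x_j-y_j,\,y_k-x_k)$, and define $\bm{x}'$ by replacing $x_j,x_k$ with $x_j-\varepsilon,\,x_k+\varepsilon$. Then $\bm{x}'=T\bm{x}$ for a T-transform on the coordinates $j,k$.

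We then check three things:
\begin{enumerate}
\item $\bm{y}\prec\bm{x}'$, using the choice of $j$, $k$, and $\varepsilon$ (the ordering of $\bm{x}'$ is handled by the standard argument).
\item $\bm{x}'$ agrees with $\bm{y}$ in at least one more coordinate than $\bm{x}$ does.
\item The induction hypothesis therefore writes $\bm{y}$ as a product of T-transforms applied to $\bm{x}'$.
\end{enumerate}
A product of convex combinations of permutation matrices is again such a combination. So $\bm{y}=D\bm{x}$, where $D$ lies in the convex hull of the permutation matrices, and therefore $\bm{y}\in\operatorname{conv}\{P_\sigma\bm{x}\}$.

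An alternative route is a separating hyperplane argument. If $\bm{y}$ were outside the convex hull, some vector $\bm{w}$ would satisfy $\langle\bm{w},\bm{y}\rangle>\max_\sigma\langle\bm{w},P_\sigma\bm{x}\rangle$. The right side is $\sum_i w_{(i)}x_{(i)}$ by the rearrangement inequality. Abel summation rewrites $\sum_i w_{(i)}y_{(i)}$ in terms of the partial sums of $\bm{y}$, and the majorization inequalities then bound it by $\sum_i w_{(i)}x_{(i)}$. Combined with rearrangement for $\bm{y}$, this is a contradiction.

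The main obstacle is the bookkeeping in the T-transform step: verifying that the partial-sum inequalities survive the transfer of mass $\varepsilon$ from $x_j$ to $x_k$. The separation argument avoids this and is the route I would write out.
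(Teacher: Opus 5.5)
Your proof is correct in both directions. The paper itself gives no proof of this lemma: it quotes Proposition~4.C.1 of \cite{MOA2011}. There the result is derived from the Hardy--Littlewood--P\'olya theorem ($\bm{y}\prec\bm{x}$ iff $\bm{y}=P\bm{x}$ for a doubly stochastic $P$, whose hard direction is proved by essentially your T-transform induction), combined with Birkhoff's theorem on doubly stochastic matrices.

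Your T-transform route is the textbook argument with one saving. Each $T=\lambda I+(1-\lambda)P_{(jk)}$ is visibly a convex combination of permutation matrices, and permutation matrices are closed under multiplication, so you never need Birkhoff's theorem.

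Your separation route is genuinely different and dual in nature. It uses three ingredients:
\begin{itemize}
\item strict separation of $\bm{y}$ from the compact polytope $\operatorname{conv}\{P_\sigma\bm{x}\}$;
\item the rearrangement inequality, which gives $\max_\sigma\langle\bm{w},P_\sigma\bm{x}\rangle=\sum_i w_{(i)}x_{(i)}$;
\item Abel summation with the nonnegative weights $w_{(k)}-w_{(k+1)}$, where equality of the total sums absorbs the possibly negative coefficient $w_{(n)}$.
\end{itemize}
It is shorter and avoids bookkeeping, but it is non-constructive. The T-transform induction instead gives an explicit representation of $\bm{y}$ using at most $n-1$ transfers. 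The paper only uses the implication $\bm{y}\prec\bm{x}\Rightarrow\bm{y}\in\operatorname{conv}\{P_\sigma\bm{x}\}$ (in Lemma~\ref{lem:inff} and in the proof of Lemma~\ref{p2}), so either of your arguments for the hard direction suffices.

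If you write out the T-transform version, the checks you deferred go through as follows.
\begin{itemize}
\item The existence of $j$ needs only $\bm{x}\neq\bm{y}$ and equal totals. It is the existence of $k$ that uses majorization: the inequality at level $j-1$ gives $\sum_{i\le j}(x_i-y_i)\ge x_j-y_j>0$, and equal totals then force some $i>j$ with $x_i<y_i$.
\item By maximality of $j$ and minimality of $k$, $x_i=y_i$ for $j<i<k$. Together with $\varepsilon\le x_j-y_j$ and $\varepsilon\le y_k-x_k$, this shows $\bm{x}'$ is still decreasing. It also gives $\sum_{i\le m}x'_i=\sum_{i\le m}x_i-\varepsilon\ge\sum_{i\le m}y_i$ for $j\le m<k$, and the other partial sums are unchanged.
\item Finally, $x_j>y_j\ge y_k>x_k$ ensures $1-\lambda=\varepsilon/(x_j-x_k)$ lies in $(0,1)$, so the transfer really is a T-transform.
\end{itemize}
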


\begin{lemma}\label{lem:inff}
Let $Y_1,\ldots,Y_{2r}$ be positive independent identically distributed random variables. Define the
function
\[
f(\bm{x}):=\E\abs{\sum_{i=1}^{2r}x_iY_i}.
\]
Set $X_r=\{\bm{x}\in\R^{2r}: \#\{i:x_i>0\}\le r, \#\{i:x_i<0\}\le r\}$. Then $f$ is convex and
\begin{equation}\label{eq:lemf1}
\inf_{\bm{x}\in X_r, \norm{\bm{x}}_1=1} f(\bm{x})=f(\frac{1}{2r},\dots,\frac{1}{2r},-\frac{1}{2r}, \dots, -\frac{1}{2r}),
\end{equation}
\begin{equation}\label{eq:lemf2}
\inf_{\bm{x}\in X_r, \norm{\bm{x}}_{\infty}=1} f(\bm{x})=\inf_{t\in[0,1]} f(1,-t,\dots,-t, 0, \dots, 0).
\end{equation}
\end{lemma}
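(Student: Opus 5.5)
Convexity of $f$ is immediate: for each realization of $(Y_1,\dots,Y_{2r})$ the map $\bm{x}\mapsto\abs{\sum_i x_iY_i}$ is convex, and taking expectations preserves convexity. We will also use that $f$ is invariant under permutations of coordinates, because the $Y_i$ are i.i.d., and under global sign change, $f(-\bm{x})=f(\bm{x})$. The plan is to combine these symmetries with the majorization characterization in Lemma~\ref{prec}, which realizes any vector majorized by $\bm{x}$ as a convex combination of permutations of $\bm{x}$.

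For \eqref{eq:lemf1}, take $\bm{x}\in X_r$ with $\norm{\bm{x}}_1=1$, and split it into its positive part $\bm{x}_+$ and its negative part. Put $a:=\sum_{x_i>0}x_i$ and $b:=-\sum_{x_i<0}x_i$, so that $a+b=1$. First equalize within each sign class. Restricted to the at most $r$ positive coordinates, padded with zeros to exactly $r$ slots, the vector $(a/r,\dots,a/r)$ is majorized by the positive block. Lemma~\ref{prec}, applied blockwise, then writes the averaged vector as a convex combination of vectors obtained from $\bm{x}$ by permuting coordinates inside each of the two $r$-slot blocks. Since $f$ is permutation invariant and convex, we get $f(\bm{x})\ge f(a/r,\dots,a/r,-b/r,\dots,-b/r)$.

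Next we balance $a$ and $b$. Let $\bm{z}:=(a/r,\dots,a/r,-b/r,\dots,-b/r)$, and let $\bm{z}'$ be $-\bm{z}$ with its two blocks swapped, so $\bm{z}'=(b/r,\dots,b/r,-a/r,\dots,-a/r)$. Then $f(\bm{z}')=f(\bm{z})$, and $\tfrac12(\bm{z}+\bm{z}')$ is the vector with all entries equal to $\pm\frac{1}{2r}$. Convexity gives the claimed minimizer. The infimum is attained there, because that vector lies in the constraint set.

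For \eqref{eq:lemf2}, take $\norm{\bm{x}}_\infty=1$. Using $f(-\bm{x})=f(\bm{x})$ and permutation invariance, we may assume $x_1=1$. The other positive entries lie in $[0,1]$, at most $r-1$ of them, and there are at most $r$ negative entries with $\abs{x_i}\le1$.

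Step one: the map $c\mapsto f(1,\dots,c,\dots)$ is convex and even-symmetric only in a weak sense, so instead we will argue that replacing positive entries other than $x_1$ by $0$ does not increase $f$. Concretely, we would use $f(\bm{x})\ge f(\bm{x}')$ whenever $\bm{x}'$ is a convex combination of $\bm{x}$ and its sign-flips on coordinates $i\ne1$, provided the sign-flipped vectors still have the same value of $f$. This fails in general, and it is exactly the \textbf{main obstacle}.

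A cleaner route is to reduce to the case with only $x_1>0$ and then average the negative block as before, getting $(1,-t,\dots,-t,0,\dots,0)$ with $t=b/r\in[0,1]$. For the removal of the extra positive entries, the first thing to try is a conditioning argument. Fix all $Y_j$ except the positive ones $Y_i$, $i\ne1$, write $S=\sum_{j\ne i}x_jY_j$, and compare $\E\abs{S+x_iY_i}$ with $\E\abs{S}$, where the comparison should be made at the same $\norm{\cdot}_\infty$ normalization.

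If monotonicity in positive entries fails, the fallback is to use convexity along the segment $x_i\in[0,1]$ together with the two endpoints. At the endpoint $x_i=1$, swap $x_i$ with $x_1$ and combine it with the positive block; averaging then lands at the configuration $(1,\dots)$ with the positive mass merged. We expect this endpoint/merging step to be where the real work lies.
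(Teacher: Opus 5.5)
Your convexity argument and your proof of \eqref{eq:lemf1} are correct and essentially match the paper. The paper majorizes $\bm x$ by $\bm a_t=(t,\dots,t,t-\frac1r,\dots,t-\frac1r)$ with $t=a/r$, which is your $\bm z$. It then uses that $t\mapsto f(\bm a_t)$ is convex and symmetric about $\frac1{2r}$, which is your $\bm z,\bm z'$ averaging.

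The genuine gap is \eqref{eq:lemf2}, which you leave unproved. Your intended reduction, discarding the positive entries other than $x_1$, cannot come from monotonicity, because zeroing such an entry can increase $f$. Take $r=2$ and $Y_i\sim\Gamma(1,1)$. The map $c\mapsto f(1,c,-1,-1)$ is convex and equals $\tfrac32$ at both $c=0$ and $c=1$. It is not affine on $[0,1]$, since $Y_1-Y_3-Y_4<0<Y_1+Y_2-Y_3-Y_4$ with positive probability. Hence $f(1,\tfrac12,-1,-1)<f(1,0,-1,-1)$. Your fallback also runs in the wrong direction: convexity along a segment $x_i\in[0,1]$ bounds interior values from above by the endpoint values, whereas you need a lower bound on $f(\bm x)$. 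Convexity gives lower bounds only when the target is a convex combination of points at which $f$ equals $f(\bm x)$, i.e.\ via a majorization.

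The missing idea is to absorb the extra positive mass rather than remove it, using a single majorization acting on all of $x_2,\dots,x_{2r}$ at once.
\begin{itemize}
\item After a global sign flip and a permutation, take $x_1=1\ge x_2\ge\dots\ge x_{2r}$. Since $\bm x\in X_r$ has at most $r$ negative entries and at most $r-1$ positive entries besides $x_1$, you get $x_2,\dots,x_r\ge0\ge x_{r+1},\dots,x_{2r}$.
\item Put $t:=-\frac1r\sum_{i=2}^{2r}x_i$, so that $rt\le\sum_{i>r}\abs{x_i}\le r$. Let $\bm b_t:=(1,-t,\dots,-t,0,\dots,0)$ with $r$ copies of $-t$.
\item For $t\ge0$, the sums of the $k$ largest entries of $\bm b_t$ are $1$ for $k\le r$ and $1-jt$ for $k=r+j$. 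The corresponding sums for $\bm x$ are at least $1$ for $k\le r$. For $k=r+j$ they equal $1-rt+\sum_{i>r+j}\abs{x_i}$, and since $\abs{x_{r+1}}\le\dots\le\abs{x_{2r}}$ this is at least $1-rt+\frac{r-j}{r}\sum_{i>r}\abs{x_i}\ge1-jt$.
\item An analogous check handles $t<0$. Thus $\bm b_t\prec\bm x$, and Lemma~\ref{prec} together with convexity and permutation invariance gives $f(\bm x)\ge f(\bm b_t)$.
\item If $t<0$, every entry of $\bm b_t$ is nonnegative, so $f(\bm b_t)\ge\E Y_1=f(\bm b_0)$. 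The infimum may therefore be restricted to $t\in[0,1]$.
\end{itemize}
This last step is exactly where the positivity of the $Y_i$ is needed, and your proposal never uses it.
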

\begin{proof}
Because the absolute value function is convex, for every $\bm{y}\in \R^{2r}$ and $\bm{x}, \bm{x}'\in \R^{2r}$, we have
\begin{equation}\label{eq:equalization-convex}
\abs{\ip{\lambda \bm{x}+(1-\lambda)\bm{x}'}{\bm{y}}}
\le
\lambda \abs{\ip{\bm{x}}{\bm{y}}}+(1-\lambda)\abs{\ip{\bm{x}'}{\bm{y}}}.
\end{equation}
Applying \eqref{eq:equalization-convex} pointwise and then taking expectations shows that $f$ is convex.

We first prove \eqref{eq:lemf1}. For $t\in \R$ define
\[
\bm{a}_t:=
(\underbrace{t,\dots,t}_{r\text{ times}},
\underbrace{t-\frac{1}{r},\dots,t-\frac{1}{r}}_{r\text{ times}}).
\]
Given $\bm{x}$ that $\norm{\bm{x}}_1=1$, after permuting coordinates, we may assume that $x_1\ge x_2\ge \cdots \ge x_{2r}$. Since $\bm{x}$ has at most $r$ positive and at most $r$ negative coordinates, this ordering also gives $x_r\ge 0$ and $x_{r+1}\le 0$. Set $t(\bm{x}):=\frac{\sum_{i=1}^{r}x_i}{r}$. Then $\sum_{i=1}^k x_i \ge kt(\bm{x})$ and $\sum_{i=r+1}^{r+k} x_i \ge k(t(\bm{x})-\frac{1}{r})$ for all $1\le k \le r$. $\sum_{i=1}^{2r} x_i=2rt(\bm{x})-1$. Thus $\bm{a}_{t(\bm{x})} \prec \bm{x}$ in the sense of Lemma~\ref{prec}. Therefore $\bm{a}_{t(\bm{x})}$ belongs to the convex hull of the permutations of $\bm{x}$ by Lemma~\ref{prec}. Since
$f$ is convex and permutation invariant, we obtain
\begin{equation}\label{eq:d3-majorization}
f(\bm{x})\ge f(\bm{a}_{t(\bm{x})}).
\end{equation}
Thus the infimum of $f$ is attained by some vector of the form $\bm{a}_t$ with $t\in [0,\frac 1r]$, and the problem reduces to the one-parameter family $\bm{a}_t$. Now treat $f(\bm{a}_t)$ as a function of $t$, then we have $f(\bm{a}_t)$ is convex and symmetric about $\frac{1}{2r}$. So the minimum of $f(\bm{a}_t)$ is attained at $\frac{1}{2r}$, which proves \eqref{eq:lemf1}.

Next we prove \eqref{eq:lemf2}. For $t\in \R$ define
\[
\bm{b}_t:=
(1,\underbrace{-t,\dots,-t}_{r\text{ times}},
\underbrace{0,\dots,0}_{r-1\text{ times}}).
\]
Given $\bm{x}$ that $\norm{\bm{x}}_{\infty}=1$, after replacing $\bm{x}$ by $-\bm{x}$ if necessary and then permuting coordinates, we may assume that $x_1=1$, and $x_1\ge x_2\ge \cdots \ge x_{2r}$. Set $t(\bm{x}):=-\frac{\sum_{i=2}^{2r}x_i}{r} \in [-1,1]$. Then $\bm{b}_{t(\bm{x})} \prec \bm{x}$ and $\bm{b}_{t(\bm{x})}$ belongs to the convex hull of the permutations of $\bm{x}$ by Lemma~\ref{prec}. Thus
\begin{equation}\label{eq:d4-majorization}
f(\bm{x})\ge f(\bm{b}_{t(\bm{x})}).
\end{equation}
Since $Y_i$ are positive random variables, we have $f(1,0,\dots,0)\le f(\bm{b}_t)$ for $t\le 0$. Combining with \eqref{eq:d4-majorization}, we arrive at \eqref{eq:lemf2}.
\end{proof}

Now we have all the ingredients to proving Lemma~\ref{d3}.
\begin{proof}[Proof of Lemma~\ref{d3}]
Write the spectral decomposition of $X-Y$ as $X-Y=U^*DU$, where $D$ is a real diagonal matrix and $U$ is unitary. Since $\bm{a}$ is a standard Gaussian vector, $U\bm{a}$ is also a standard Gaussian vector. Denote $D=\diag(\lambda_1,\dots,\lambda_n)$, then
\[
\bm{a}^*(X-Y)\bm{a}\overset{d}{=}\sum_{i=1}^n \lambda_i \abs{a_i}^2.
\]
Notice that $\abs{a_i}^2\sim \chi^2(1)$ if $\HH=\R$ and $\abs{a_i}^2\sim \Gamma(1,1)$ if $\HH=\C$. Since $X$ and $Y$ are of rank at most $r$, we may assume $\lambda_{2r+1}=\dots=\lambda_n=0$. Since $X$ and $Y$ are positive semi-definite, there are at most $r$ positive terms and at most $r$ negative terms in $\lambda_i$. Therefore it is enough to prove that for any $\bm{x}\in\R^{2r}$ such that $\#\{i:x_i>0\}\le r$ and $\#\{i:x_i<0\}\le r$, we have
\begin{equation}\label{f1}
\frac{1}{\beta_0^{\HH,*}(r)}\norm{\bm{x}}_1\le \E\abs{\sum_{i=1}^{2r}x_i\abs{a_i}^2} \le\norm{\bm{x}}_1,
\end{equation}
\begin{equation}\label{f2}
(r+1)\eta_0^{\HH}(r)\norm{\bm{x}}_{\infty}\le \E\abs{\sum_{i=1}^{2r}x_i\abs{a_i}^2}\le U^\HH_r \norm{\bm{x}}_\infty.
\end{equation}

We first prove \eqref{f1}. Assume $\norm{\bm{x}}_1=1$. For the upper endpoint, the triangle inequality immediately gives
\begin{equation}\label{eq:d3-upper}
\E\abs{\sum_{i=1}^{2r} x_i \abs{a_i}^2}
\le
\sum_{i=1}^{2r}\abs{x_i}\E \abs{a_i}^2=1.
\end{equation}
For the lower endpoint, $\sum_{i=1}^r \abs{a_i}^2\sim \Gamma(r/2,2)$ for $\HH=\R$ and $\sum_{i=1}^r \abs{a_i}^2\sim \Gamma(r,1)$ for $\HH=\C$. Using Lemma~\ref{lem:inff} and Lemma~\ref{lem:gamma-difference}, we obtain
\[
\E\abs{\sum_{i=1}^{2r}x_i\abs{a_i}^2}\ge \E\abs{\sum_{i=1}^{r}\frac{1}{2r} \abs{a_i}^2-\sum_{i=r+1}^{2r}\frac{1}{2r} \abs{a_i}^2}=\frac{1}{\beta_0^{\HH,*}(r)}.
\]

Now we move on to proving \eqref{f2}. Assume $\norm{\bm{x}}_\infty=1$. For the lower endpoint, Lemma~\ref{gamma-beta} gives $\sum_{i=1}^{r+1} \abs{a_i}^2$ and $\frac{\abs{a_1}^2}{\sum_{i=1}^{r+1} \abs{a_i}^2}$ are independent. Moreover,
\[
\frac{\abs{a_1}^2}{\sum_{i=1}^{r+1} \abs{a_i}^2} \sim \begin{cases}
\mathrm{Beta}(\frac12,\frac r2), & \HH=\R,\\
\mathrm{Beta}(1,r), & \HH=\C.
\end{cases}
\]
Therefore we obtain
\begin{equation}\label{eq:d4-reduced-value}
\E\abs{\abs{a_1}^2-t\sum_{i=2}^{r+1} \abs{a_i}^2}
=
\E \sum_{i=1}^{r+1} \abs{a_i}^2\cdot \E\abs{(1+t)\frac{\abs{a_1}^2}{\sum_{i=1}^{r+1} \abs{a_i}^2}-t}=(r+1)\eta^{\HH}(r,t).
\end{equation}
Using Lemma~\ref{lem:inff} and then minimizing in $t$, we conclude that
\[
\E\abs{\sum_{i=1}^{2r}x_i\abs{a_i}^2}\ge \inf_{t\in[0,1]} \E\abs{\abs{a_1}^2-t\sum_{i=2}^{r+1} \abs{a_i}^2}= (r+1)\eta_0^{\HH}(r).
\]

Next we solve the upper endpoint in \eqref{f2}.
Fix a sign pattern for $\bm{x}$, the feasible set of each coordinate defined by the normalization is either $[0,1]$ or $[-1,0]$. Notice that $\E\abs{\sum_{i=1}^{2r}x_i\abs{a_i}^2}$ is a convex function with respect to $\bm{x}$, and a convex function attains its maximum at an extreme point of the domain. Hence it is enough to consider vectors $\bm{x}$ with each coordinate $x_i\in\{0,-1,1\}$. For such $\bm{x}$, denote the numbers of $+1$'s and $-1$'s in its coordinates by $p$ and $q$, respectively. By symmetry, we may assume $p\ge q$. If $q=0$, then $\E\abs{\sum_{i=1}^{r}\abs{a_i}^2}=p\le r$. Now assume $q\ge 1$. In the complex case, using Lemma~\ref{lem:gamma-difference} to $\Gamma(q,1)$ random variables gives
\[
\E \abs{\sum_{i=1}^p \abs{a_i}^2-\sum_{i=p+1}^{p+q} \abs{a_i}^2} \le \E \sum_{i=1}^{p-q} \abs{a_i}^2+\E \abs{\sum_{i=p-q+1}^p \abs{a_i}^2-\sum_{i=p+1}^{p+q} \abs{a_i}^2}\le p-q+\frac{2\Gamma(q+\frac 12)}{\sqrt{\pi}\Gamma(q)}\le p \le r.
\]
In the real case, if $q\ge 2$, using Lemma~\ref{lem:gamma-difference} to $\Gamma(q/2,2)$ random variables gives
\[
\E \abs{\sum_{i=1}^p a_i^2-\sum_{i=p+1}^{p+q} a_i^2} \le \E \sum_{i=1}^{p-q} a_i^2+\E \abs{\sum_{i=p-q+1}^p a_i^2-\sum_{i=p+1}^{p+q} a_i^2}\le p-q+\frac{4\Gamma(\frac q2+\frac 12)}{\sqrt{\pi}\Gamma(\frac q2)} \le p\le r.
\]
It remains to treat the case $q=1$. If $p=1$, then using Lemma~\ref{lem:gamma-difference} to $\Gamma(1/2,2)$ random variables yields
\[
\E\abs{a_1^2-a_2^2}=\frac{4}{\pi}.
\]
If $p\ge 2$, then Lemma~\ref{lem:gamma-difference} gives
\[
\E \abs{\sum_{i=1}^p a_i^2-a_{p+1}^2} \le \E \sum_{i=1}^{p-2} a_i^2 +\E \abs{a_{p-1}^2+a_p^2-a_{p+1}^2}=p-2+2\sqrt{2}-1\le p \le r,
\]
This completes the proof.
\end{proof}

\section{Stability of PhaseLift model}
We prove Lemma~\ref{p2} and Theorem~\ref{nuclear} in this section.
\begin{proof}[Proof of Lemma~\ref{p2}]
Given $H\in\Sn(\HH)$, write $H=U^*\diag(\lambda_1,\ldots,\lambda_n)U$, where $U$ is a unitary matrix. Then $\bm a^*H\bm a \overset{d}{=} \sum_{i=1}^n \lambda_i \abs{a_i}^2$. For $\bm{x}\in\R^{n}$ define
\[
f_n^{\HH}(\bm{x}):=\E \abs{\sum_{i=1}^n x_i \abs{a_i}^2}.
\]
Then
\[
\mu^{\HH,*}(r)=\inf_{n\ge r}\inf_{\substack{\norm{\bm{x}}_1=1, \\ \#\{i:x_i>0\}\le r}} f_n^{\HH}(\bm{x}),\qquad
\mu^{\HH,2}(r)=\inf_{n\ge r}\inf_{\substack{\norm{\bm{x}}_{\infty}=1, \\ \#\{i:x_i>0\}\le r}} f_n^{\HH}(\bm{x}).
\]
First we prove the assertion about the median. For $t\in \R$ define $\bm{a}_{t,n}=(1, -\frac{t}{n-1}, \dots ,-\frac{t}{n-1})$. Given $\bm{x}\in\R^n$ that $\norm{\bm{x}}_{\infty}=1$, rearrange the coordinates such that $\abs{x_1}=1$. If $x_1=1$, set $t(\bm x)=-\sum_{i=2}^n x_i$. Then $\bm{a}_{t(\bm{x}),n} \prec \bm{x}$ in the sense of Lemma~\ref{prec} and $\bm{a}_{t(\bm{x}),n}$ belongs to the convex hull of the permutations of $\bm{x}$. Notice that $f_n^{\HH}$ is convex and permutation invariant, we have $f_n^{\HH}(\bm{x})\ge f_n^{\HH}(\bm{a}_{t(\bm{x}),n})$. If $x_1=-1$, then set $t(\bm x)=\sum_{i=2}^n x_i$. Then $\bm{a}_{t(\bm{x}),n} \prec \bm{-x}$ in the sense of Lemma~\ref{prec} and hence $f_n^{\HH}(\bm{-x})\ge f_n^{\HH}(\bm{a}_{t(\bm{x}),n})$. Notice that $f_n^{\HH}(\bm{-x})=f_n^{\HH}(\bm{x})$. Thus in either case we have $f_n^{\HH}(\bm{x})\ge f_n^{\HH}(\bm{a}_{t(\bm{x}),n})$. Therefore for fixed $n$,
\begin{equation}\label{eq:p2fn}
\inf_{\substack{\norm{\bm{x}}_{\infty}=1, \\ \#\{i:x_i>0\}\le r}} f_n^{\HH}(\bm{x})= \inf_t f_n^{\HH}(\bm{a}_{t,n}).
\end{equation}
For fixed $t$, set $\bm{a_{t,n}^0}:=(1, 0, -\frac{t}{n-1}, \dots ,-\frac{t}{n-1})\in\R^{n+1}$. Then $\bm{a_{t,n+1}} \prec \bm{a_{t,n}^0}$. Thus
\begin{equation}\label{eq:p2fnn}
f_n^{\HH}(\bm{a}_{t,n})=f_{n+1}^{\HH}(\bm{a}_{t,n}^0)\ge f_{n+1}^{\HH}(\bm{a}_{t,n+1}).
\end{equation}
The law of large numbers gives
\begin{equation}\label{eq:p2largelaw}
\lim_{n\to \infty} f_n^{\HH}(\bm{a}_{t,n})=\lim_{n\to \infty}\E \abs{\abs{a_1}^2-\frac{t}{n-1}\sum_{i=2}^n \abs{a_i}^2}=\E \abs{\abs{a_1}^2-t}.
\end{equation}
Combining \eqref{eq:p2fn}, \eqref{eq:p2fnn} and \eqref{eq:p2largelaw}, we obtain
\[
\mu^{\HH,2}(r)=\inf_{n\ge r} \inf_t f_n^{\HH}(\bm{a}_{t,n})=\inf_t\lim_{n\to \infty} f_n^{\HH}(\bm{a}_{t,n})=\inf_t \E \abs{\abs{a_1}^2-t}=\E \abs{\abs{a_1}^2-m_{\HH}},
\]
where $m_{\HH}$ is the median of $\abs{a_1}^2$. Since $\abs{a_1}^2 \sim \chi^2(1)$ if $\HH=\R$, and $\abs{a_1}^2 \sim \Gamma(1,1)$ if $\HH=\C$, this proves the assertion about the median. Notice that \eqref{eq:d4-reduced-value} gives $f_{n+1}^{\HH}(\bm{a}_{t,n+1})=(n+1)\eta^{\HH}(n,t/n)$. Taking the infimum we have
\[
\inf_t f_{n+1}^{\HH}(\bm{a}_{t,n+1})= (n+1)\eta_0^{\HH}(n).
\]
Therefore
\[
\lim_{r \to \infty} (r+1)\eta_0^{\HH}(r)=\mu^{\HH,2}.
\]

Now we turn to $\mu^{\HH,*}(r)$. We use the same technique as above. For $t\in [0,1]$ define
\[
\bm{b}_{t,n,k}:=
(
\underbrace{\frac tk,\dots,\frac tk}_{k\text{ times}},
\underbrace{-\frac{1-t}{n-k},\dots,-\frac{1-t}{n-k}}_{n-k\text{ times}}).
\]
Then $\norm{\bm{b}_{t,n,k}}_1=1$. Given $\bm{x}\in\R^n$ that $\norm{\bm{x}}_1=1$, rearrange the coordinates such that $x_1\ge x_2\ge\dots\ge x_n$. Assume that $x_k\ge 0$ and $x_{k+1}\le 0$. Set $t(\bm x)=\sum_{i=1}^k x_i$. Then $\bm{b}_{t(\bm{x}),n,k} \prec \bm{x}$ in the sense of Lemma~\ref{prec} and $\bm{b}_{t(\bm{x}),n,k}$ belongs to the convex hull of the permutations of $\bm{x}$. Thus
\[
f_n^{\HH}(\bm{x})\ge f_n^{\HH}(\bm{b}_{t(\bm{x}),n,k}).
\]
So for fixed $n$, the infimum is attained by $\bm{x}$ of the form $\bm{b}_{t,n,k}$. Similar to \eqref{eq:p2fnn}, for fixed $t$, set
\[
\bm{b}_{t,n,k}^0:=
(
\underbrace{\frac tk,\dots,\frac tk}_{k\text{ times}},
\underbrace{0,\dots,0}_{r+1-k\text{ times}},
\underbrace{-\frac{1-t}{n-k},\dots,-\frac{1-t}{n-k}}_{n-k\text{ times}}).
\]
Then $\bm{b}_{t,n+1+r-k,r} \prec \bm{b}_{t,n,k}^0$. Thus $f_n^{\HH}(\bm{b}_{t,n,k})=f_{n+1+r-k}^{\HH}(\bm{b}_{t,n,k}^0)\ge f_{n+1+r-k}^{\HH}(\bm{b}_{t,n+1+r-k,r})$. Denote $Y_r\sim\chi^2(r)$ when $\HH=\R$ and $Y_r\sim\Gamma(r,1)$ when $\HH=\C$, then we have
\[
\mu^{\HH,*}(r)=\inf_{n\ge r} \inf _{k\le r} \inf_t f_n^{\HH}(\bm{b}_{t,n,k})=\inf_t\lim_{n\to \infty} f_n^{\HH}(\bm{b}_{t,n,r})=\inf_t \E \abs{\frac{t}{r}Y_r-(1-t)}.
\]
where the last equality follows from the law of large numbers.

Now we prove the limit of $\mu^{\HH,*}(r)\beta_0^{\HH,*}(r)$. Stirling's formula gives $\beta_0^{\R,*}(r)\sim\sqrt{\pi r/2}$ and $\beta_0^{\C,*}(r)\sim\sqrt{\pi r}$. Let
$\psi_r^{\HH}(t):=\E|\frac tr Y_r-(1-t)|$. The central limit law gives
\[
\lim_{r\to \infty} \sqrt{r}(\frac 1r Y_r-1) =\lim_{r\to \infty}\sum_{i=1}^r \frac{1}{\sqrt{r}}(\abs{a_i}^2-\E \abs{a_i}^2) \overset{d}{=} \sigma_{\HH}G.
\]
where $G\sim N(0,1)$ and $\sigma_{\R}=\sqrt2$, $\sigma_{\C}=1$. Since $\E \abs{G}=\sqrt{\frac{2}{\pi}}$, we have
\[
\lim_{r \to \infty} \mu^{\HH,*}(r)\beta_0^{\HH,*}(r) \le \lim_{r \to \infty} \psi_r^{\HH}(\frac{1}{2})\beta_0^{\HH,*}(r) = \frac{1}{\sqrt{2}}.
\]
Suppose $t_r$ minimizes $\psi_r^{\HH}$, then Jensen inequality gives
\[
\mu^{\HH,*}(r)=\psi_r^{\HH}(t_r)\ge \abs{\frac{t_r}{r}\E Y_r-(1-t_r)}=\abs{2t_r-1}.
\]
Combined with the upper bound above, this shows that $\sqrt r(2t_r-1)$ is bounded. Fix any subsequence of $t_r$ and passing to a further subsequence if necessary, we may assume the convergence that $\lim_{r\to\infty} \sqrt r(2t_r-1)= c$. Then $\lim_{r\to \infty}t_r= \frac12$, and
\[
\lim_{r\to \infty}\sqrt r\,\mu^{\HH,*}(r)
=
\lim_{r\to \infty}\E\abs{\sqrt{r}(2t_r-1)+t_r\frac{Y_r-r}{\sqrt r}}
=\E\abs{c+\frac{\sigma_{\HH}}{2}G}.
\]
Since $G$ is symmetric, the function $c\mapsto \E|c+\frac{\sigma_{\HH}}{2}G|$ is minimized at $c=0$. Hence every subsequential limit of $\sqrt r\,\mu^{\HH,*}(r)$ is at
least $\frac{\sigma_{\HH}}{2}\E|G|=\frac{\sigma_{\HH}}{2}\sqrt{\frac{2}{\pi}}
$.
Consequently,
\[
\liminf_{r\to\infty}\sqrt r\,\mu^{\R,*}(r)\ge \frac{1}{\sqrt{\pi}},
\qquad
\liminf_{r\to\infty}\sqrt r\,\mu^{\C,*}(r)\ge \frac{1}{\sqrt{2\pi}}.
\]
Thus
\[
\lim_{r \to \infty} \mu^{\HH,*}(r)\beta_0^{\HH,*}(r) = \frac{1}{\sqrt{2}}.
\]
\end{proof}

 Applying Lemma~\ref{mastertrace} we prove Theorem~\ref{nuclear} by a similar argument to
the proof of Theorem~\ref{d1}.
\begin{proof}[Proof of Theorem~\ref{nuclear}]
Denote the support of $\bm{\eta}$ by $S\subset [m]$, and $D:=X_0-\hat{X}_{1,+}$. The optimality of $\hat{X}_{1,+}$ gives $\norm{\bm{\eta}+\bm{w}}_1\ge \norm{\bm{\eta}+\bm{w}+\Phi_{\Aone}(D)}_1$. Hence
\begin{equation}\label{etaw}
\norm{\bm{\eta}+\Phi_{\Aone}(D)}_1\le \norm{\bm{\eta}+\bm{w}}_1+\norm{\bm{w}}_1\le \norm{\bm{\eta}}_1+2\norm{\bm{w}}_1.
\end{equation}
Since for $a,b\in\R$, $\abs{a+b}-\abs{a}\ge \sgn(a)b$, \eqref{etaw} gives
\begin{equation}\label{dw}
\sum_{i\notin S} \abs{\bm{a}_i^*D\bm{a}_i}+\sum_{i\in S} \sgn(\eta_i) \bm{a}_i^*D\bm{a}_i \le \norm{\bm{\eta}+\Phi_{\Aone}(D)}_1-\norm{\bm{\eta}}_1\le 2\norm{\bm{w}}_1.
\end{equation}
Next we establish a lower bound for $\sum_{i\notin S} \abs{\bm{a}_i^*D\bm{a}_i}+\sum_{i\in S} \sgn(\eta_i) \bm{a}_i^*D\bm{a}_i$ with respect to the norm of $D$ using Lemma~\ref{mastertrace}. Since both $X_0$ and $\hat{X}$ are positive semidefinite matrices and $X_0$ is of rank at most $r$, $X_0-\hat{X}$ has at most $r$ positive eigenvalues. Define
\[
B(r):=\{\frac{H}{\E \abs{\bm{a}_i^*H\bm{a}_i}}:H\in T^2(n,r)\}.
\]
Notice that replacing the $T^2(n,r)$ by $T^*(n,r)$ in the definition does not change $B(r)$. Lemma~\ref{p2} implies $\mu^{\HH,*}(r)$ is of the same order $\frac{1}{\sqrt{r}}$ as $\frac{1}{\beta_{0}^{\HH,*}(r)}$. Thus
\[
\sup_{H\in B(r)}\norm{H}_\ast\le 1/\mu^{\HH,*}(r)=O(\sqrt{r}).
\]
Lemma~\ref{Fnorm} gives $\sup_{H\in B(r)}\norm{H}_F=O(1)$ and $\sup_{H\in B(r)}\abs{\Tr(H)}\le1$. Therefore the conditions of Lemma~\ref{mastertrace} hold for a uniform $K$. Also, $\E\abs{\bm{a}_i^* H \bm{a}_i}=1$ holds for all $H\in B(r)$.
Applying Lemma~\ref{mastertrace} with $u=\frac{1}{64C^2K^2}\delta^2(1-2s)^2m$ to $B(r)$ yields
\begin{equation}\label{twosup}
\begin{split}
&\sup_{H\in B(r)}  \abs{\frac 1m\sum_{i\notin S} \abs{\bm{a}_i^*H\bm{a}_i}-(1-s)}+ \sup_{H\in B(r)} \abs{\frac 1m \sum_{i\in S} \abs{\bm{a}_i^*H\bm{a}_i}-s} \\
\le&
CK(\sqrt{\frac{(1-s)nr}{m}}+\frac{nr}{m}+\sqrt{\frac{(1-s)u}{m}}+\frac{u}{m}+\sqrt{\frac{snr}{m}}+\frac{nr}{m}+\sqrt{\frac{su}{m}}+\frac{u}{m}) \\
\le&
2CK(\sqrt{\frac{nr}{m}}+\frac{nr}{m}+\sqrt{\frac{u}{m}}+\frac{u}{m})\le (1-2s)\delta
\end{split}
\end{equation}
holds with probability at least $1-4\exp(-1/(64C^2K^2)\delta^2(1-2s)^2 m)$, provided that $m\ge 64C^2K^2\delta^{-2}(1-2s)^{-2}nr$. In the event of \eqref{twosup}, we have
\begin{equation}\label{inftwo}
\begin{split}
&\inf_{H\in B(r)} \left(\frac 1m \sum_{i\notin S} \abs{\bm{a}_i^*H\bm{a}_i}+ \frac 1m \sum_{i\in S} \sgn(\eta_i) \bm{a}_i^*H\bm{a}_i\right)
\ge
\inf_{H\in B(r)}\left(\frac 1m \sum_{i\notin S} \abs{\bm{a}_i^*H\bm{a}_i}-\frac 1m \sum_{i\in S} \abs{ \bm{a}_i^*H\bm{a}_i}\right) \\
\ge &
(1-s)- \sup_{H\in B(r)} \abs{\frac 1m\sum_{i\notin S} \abs{\bm{a}_i^*H\bm{a}_i}-(1-s)}-s-\sup_{H\in B(r)}  \abs{\frac 1m\sum_{i\in S} \abs{\bm{a}_i^*H\bm{a}_i}-s}
\ge (1-\delta)(1-2s).
\end{split}
\end{equation}
Since $\mu^{\HH,2}(r)\norm{D}_2\le \E \abs{\bm{a}_i^*D\bm{a}_i}$ and $\mu^{\HH,*}(r)\norm{D}_*\le \E \abs{\bm{a}_i^*D\bm{a}_i}$, from \eqref{dw} and \eqref{inftwo} we obtain
\[
(1-\delta)(1-2s)\mu^{\HH,2}(r)\norm{X_0-\hat{X}_{1,+}}_2
\le
\frac 2m \norm{\bm{w}}_1,
\]
\[
(1-\delta)(1-2s)\mu^{\HH,*}(r)\norm{X_0-\hat{X}_{1,+}}_*
\le
\frac 2m \norm{\bm{w}}_1,
\]
hold for all $X_0\in \Mr$ with probability at least $1-4\exp(-1/(64C^2K^2)\delta^2(1-2s)^2 m)$ respectively, provided that $m\ge 64C^2K^2\delta^{-2}(1-2s)^{-2}nr$.
\end{proof}

\section{APPENDIX}

The exceptional real rank-one case requires a separate two-dimensional analysis. We first compress the problem to a plane, then compare the upper and lower Lipschitz
constants by averaging over the circle.

\begin{proof}[Proof of Proposition~\ref{p1}]
For each $i$, denote the submatrix consisting of the first two rows and first two columns of $A_i$ by $B_i \in \mathcal S_2^+(\R)$.
By the definition of $L_\A^{\R,2}(1)$ and $U_{\A}^{\R,2}(1)$ we have
\[
L_{\A}^{\R,2}(1)=\inf_{\substack{X,Y\in \Snp(\R)\\ \rank(X)\le 1 \\ \rank(Y)\le 1}}\frac{\norm{\PhiA(X)-\PhiA(Y)}_1}{\norm{X-Y}_2}\le \inf_{\substack{X,Y\in \mathcal S_2^+(\R)\\ \rank(X)\le 1 \\ \rank(Y)\le 1}}\frac{\norm{\PhiB(X)-\PhiB(Y)}_1}{\norm{X-Y}_2}=L_{\B}^{\R,2}(1),
\]
\[
U_{\A}^{\R,2}(1):=\sup_{\substack{X,Y\in \Snp(\R)\\ \rank(X)\le 1 \\ \rank(Y)\le 1}}\frac{\norm{\PhiA(X)-\PhiA(Y)}_1}{\norm{X-Y}_2}\ge \sup_{\substack{X,Y\in \mathcal S_2^+(\R)\\ \rank(X)\le 1 \\ \rank(Y)\le 1}}\frac{\norm{\PhiB(X)-\PhiB(Y)}_1}{\norm{X-Y}_2}=U_{\B}^{\R,2}(1).
\]
Therefore we have
\[
\beta_\A^{\R,2}(1)=\frac{U_{\A}^{\R,2}(1)}{L_{\A}^{\R,2}(1)}\ge \frac{U_{\B}^{\R,2}(1)}{L_{\B}^{\R,2}(1)}=\beta_\B^{\R,2}(1).
\]
Thus it is enough to prove the proposition in dimension two.

For $\bm{n}=(n_1,n_2)\in S^1$, define
\[
\Sigma(\bm{n}):=\begin{pmatrix}n_1&n_2\\ n_2&-n_1\end{pmatrix}.
\]
Then the eigenvalues of $\Sigma(\bm{n})$ are $1$ and $-1$. Set $p_i:=\frac12 \Tr(B_i)$, then $B_i-p_i I_2$ is of the form $q_i\Sigma(\bm{n}_i)$ for some $q_i\ge 0$ and $\bm{n}_i\in S^1$. Thus we can parameterize each $B_i\in \mathcal S_2^+(\R)$ as
\begin{equation}\label{eq:p1-B}
B_i=p_i I_2+q_i\Sigma(\bm{n}_i).
\end{equation}
Given $X,Y\in \mathcal M_1$ that $\norm{X-Y}_2=1$, we may assume that one eigenvalue of $X-Y$ is $1$, otherwise we switch $X$ and $Y$. Likewise, such $X-Y$ can be parameterized as
\begin{equation}\label{eq:p1-H}
X-Y=tI_2+(1-t)\Sigma(\bm{u}),
\end{equation}
for some $t\in[0,\frac12]$ and $\bm{u}\in S^1$. A direct computation gives
\begin{equation}\label{eq:p1-F}
\frac{\norm{\Phi_B(X)-\Phi_B(Y)}_1}{\norm{X-Y}_2}
=
\sum_{i=1}^m \abs{2p_i t+2q_i(1-t)\,\bm{n}_i\cdot \bm{u}}.
\end{equation}
Thus we have
\[
U_B^{\R,2}(1)= \sup_{t,\bm{u}} \sum_{i=1}^m \abs{2p_i t+2q_i(1-t)\,\bm{n}_i\cdot \bm{u}}, \qquad
L_B^{\R,2}(1)= \inf_{t,\bm{u}} \sum_{i=1}^m \abs{2p_i t+2q_i(1-t)\,\bm{n}_i\cdot \bm{u}}.
\]
First we test two convenient values of $t$ for $U_B^{\R,2}(1)$. Taking $t=\frac12$ and averaging \eqref{eq:p1-F} over $\bm{u}\in S^1$, we obtain
\begin{equation}\label{eq:p1t1}
U_B^{\R,2}(1)\ge \frac{1}{2\pi}\int_{S^1}\sum_{i=1}^m\abs{2p_i t+2q_i(1-t)\bm{n}_i\cdot \bm{u}} d\bm{u}\ge \sum_{i=1}^m p_i+\frac{1}{2\pi}\sum_{i=1}^m q_i\int_{S^1}\bm{n}_i\cdot \bm{u}d\bm{u}=\sum_{i=1}^m p_i.
\end{equation}
Then taking $t=0$ and averaging again, we get
\begin{equation}\label{eq:p1t2}
U_B^{\R,2}(1)\ge\frac{1}{2\pi}\int_{S^1}\sum_{i=1}^m\abs{2p_i t+2q_i(1-t)\bm{n}_i\cdot \bm{u}} d\bm{u}= \frac{1}{2\pi}\sum_{i=1}^m 2q_i \int_0^{2\pi} \abs{\cos\theta} d\theta = \frac{4}{\pi} \sum_{i=1}^m q_i.
\end{equation}
Combining \eqref{eq:p1t1} and \eqref{eq:p1t2} gives
\begin{equation}\label{eq:p1-U}
U_B^{\R,2}(1)\ge \max\{\sum_{i=1}^m p_i,\frac{4}{\pi}\sum_{i=1}^m q_i\}.
\end{equation}
Fixing $t\in[0,\frac12]$ and averaging over the circle, we have
\begin{equation}\label{eq:p1-L-average}
\begin{split}
L_B^{\R,2}(1) &\le \frac{1}{2\pi}\int_0^{2\pi}\sum_{i=1}^m \left|2p_i t+2q_i(1-t)\cos\theta\right| \,d\theta \le \frac{1}{2\pi}\int_0^{2\pi}\sum_{i=1}^m (2(p_i-q_i)t+2q_i\abs{t+(1-t)\cos\theta}) d\theta \\
&\le 2(\sum_{i=1}^m p_i-\sum_{i=1}^m q_i)t+\sum_{i=1}^m q_i \frac{4}{\pi}(\sqrt{1-2t}+t\arcsin\frac{t}{1-t}).
\end{split}
\end{equation}
It is convenient to denote
\[
g(t):=2(\sum_{i=1}^m p_i-\sum_{i=1}^m q_i)t+\sum_{i=1}^m q_i \frac{4}{\pi}\, (\sqrt{1-2t}+t\arcsin\frac{t}{1-t} ),
\]
\[
m(t):=(2-\frac{\pi}{2})t+(\sqrt{1-2t}+t\arcsin\frac{t}{1-t}).
\]

We claim that for $t\in [0,\frac12]$,
\begin{equation}\label{eq:p1g}
t\le \frac{2}{\pi}(\sqrt{1-2t}+t\arcsin\frac{t}{1-t}).
\end{equation}
Indeed, for $0\le t\le \frac12$ denote $\theta:=\arcsin\frac{t}{1-t}$. We have $0\le \theta\le \frac{\pi}{2}$ and
\[
t=\frac{\sin\theta}{1+\sin\theta},
\qquad
\sqrt{1-2t}=\frac{\cos\theta}{1+\sin\theta} .
\]
Multiplying both sides of \eqref{eq:p1g} by $\frac{\pi}{2}(1+\sin\theta)$, then it becomes $\frac{\pi}{2}\sin\theta \le \cos\theta+\theta\sin\theta$. Define $ h(\theta):=\cos\theta+(\theta-\frac{\pi}{2})\sin\theta$. $h'(\theta)=(\theta-\frac{\pi}{2})\cos\theta\le 0$ for $0\le \theta\le \frac{\pi}{2}$, and $h(\pi/2)=0$. Hence $h(\theta)\ge0$ for $0\le \theta\le \frac{\pi}{2}$, which proves \eqref{eq:p1g}.

If $\sum_{i=1}^m q_i\le \frac{\pi}{4}\sum_{i=1}^m p_i$, then \eqref{eq:p1g} implies
\[
g(t)-\sum_{i=1}^m p_i m(t) = (\frac{\pi}{2}\sum_{i=1}^m p_i-2\sum_{i=1}^m q_i) (t-\frac{2}{\pi}\, (\sqrt{1-2t}+t\arcsin\frac{t}{1-t}))\le 0.
\]
If $\sum_{i=1}^m q_i\ge \frac{\pi}{4}\sum_{i=1}^m p_i$, then
\[
g(t) - \frac{4}{\pi}\sum_{i=1}^m q_i m(t)=(2\sum_{i=1}^m p_i-\frac{8}{\pi}\sum_{i=1}^m q_i)t\le 0.
\]
Thus in either case, we have
\begin{equation}\label{eq:p1-galpha}
g(t)\le \max\{\sum_{i=1}^m p_i,\frac{4}{\pi}\sum_{i=1}^m q_i\}m(t).
\end{equation}
Combining \eqref{eq:p1-U}, \eqref{eq:p1-L-average}, and \eqref{eq:p1-galpha}, we obtain
\begin{equation}\label{eq:p1-beta-bound}
\beta_B^{\R,2}(1)=\frac{U_B^{\R,2}(1)}{L_B^{\R,2}(1)}\ge \frac{1}{m(t)}.
\end{equation}
Now minimize in $t$. Set $\theta=\arcsin\frac{t}{1-t}$. As $0\le t\le \frac12$, $0\le\theta\le\frac{\pi}{2}$ and $m(t)$ becomes
\[
m(t)=\frac{\cos\theta+(\theta+2-\frac{\pi}{2})\sin\theta}{1+\sin\theta}=:M(\theta).
\]
Differentiating gives
\[
M'(\theta)=\frac{\cos\theta (\theta+2-\frac{\pi}{2}-\cos\theta)}{(1+\sin\theta)^2}.
\]
Since $\theta+2-\frac{\pi}{2}-\cos\theta$ is strictly increasing on $[0,\pi/2]$, negative at $0$, and positive at $\frac{\pi}{2}$, it has a unique zero defined as $\vartheta_0$. Thus $M(\theta)$ decreases on $[0,\vartheta_0]$ and increases on $[\vartheta_0,\pi/2]$. Hence the minimum is attained at $\vartheta_0$. Using $
\vartheta_0+2-\frac{\pi}{2}=\cos\vartheta_0$, we obtain
\begin{equation}\label{eq:p1-min}
\min_{0\le t\le 1/2} m(t)=M(\vartheta_0)=\frac{\cos\vartheta_0+\cos\vartheta_0\sin\vartheta_0}{1+\sin\vartheta_0}
=
\cos\vartheta_0.
\end{equation}
Substituting \eqref{eq:p1-min} into \eqref{eq:p1-beta-bound}, we conclude that
\begin{equation}\label{eq:p1-lower}
\beta_B^{\R,2}(1)\ge \sec\vartheta_0.
\end{equation}

We now construct a family whose condition number approaches the lower bound. Define $A^{(N)}=\{A_i^{(N)}\}_{i=0}^N$ by
\[
A_0^{(N)}:=(1-\frac{\pi}{4})I_2,
\qquad
A_k^{(N)}:=\frac{\pi}{2N}\,\bm{u}_k \bm{u}_k^\top \quad\text{for } 1\le k\le N
\]
with $\bm{u}_k:= (\cos\frac{(k-1)\pi}{N}, \sin\frac{(k-1)\pi}{N})^T$. For $1\le k\le N$, we have
\[
A_k^{(N)}=\frac{\pi}{4N}(I_2+\Sigma(\cos\frac{2(k-1)\pi}{N}, \sin\frac{2(k-1)\pi}{N})).
\]
Suppose that $X-Y=tI_2+(1-t)\Sigma(\bm{u}_\theta)$ where $\bm{u}_\theta:=(\cos\theta,\sin\theta)$ and substituting into\eqref{eq:p1-F} we get
\begin{equation}\label{eq:p1-riemann}
\frac{\norm{\Phi_{A^{(N)}}(X)-\Phi_{A^{(N)}}(Y)}_1}{\norm{X-Y}_2}
=
(2-\frac{\pi}{2})t+\frac{\pi}{2N}\sum_{k=0}^{N-1}\abs{t+(1-t)\cos(\theta-2\pi k/N)}.
\end{equation}
Since
\[
m(t)=(2-\frac{\pi}{2})t+\frac14 \int_0^{2\pi}\abs{t+(1-t)\cos\theta} d\theta,
\]
the sum in \eqref{eq:p1-riemann} is a Riemann sum of $m(t)$. The convergence of Riemann sum gives
\begin{equation}\label{eq:p1-uniform-convergence}
\lim_{N\to \infty} \sup_{\substack{0\le t\le \frac12\\ \theta\in\R}}\abs{
(2-\frac{\pi}{2})t
+\frac{\pi}{2N}\sum_{k=0}^{N-1}\abs{t+(1-t)\cos(\theta-2\pi k/N)}-m(t)}
=0.
\end{equation}
In particular,
\[
\lim_{N\to \infty} U_{A^{(N)}}^{\R,2}(1)=\lim_{N\to \infty}\sup_{\theta,t}\ (2-\frac{\pi}{2})t
+\frac{\pi}{2N}\sum_{k=0}^{N-1}\abs{t+(1-t)\cos(\theta-2\pi k/N)}
=
\max_{0\le t\le \frac12}m(t)=1,
\]
and
\[
\lim_{N\to \infty} L_{A^{(N)}}^{\R,2}(1)=\lim_{N\to \infty}\inf_{\theta,t}\  (2-\frac{\pi}{2})t
+\frac{\pi}{2N}\sum_{k=0}^{N-1}\abs{t+(1-t)\cos(\theta-2\pi k/N)}
=
\min_{0\le t\le \frac12}m(t)=\cos\vartheta_0.
\]
Hence
\[
\lim_{N\to \infty} \beta_{A^{(N)}}^{\R,2}(1)
=
\frac{1}{\cos\vartheta_0}=\sec\vartheta_0.
\]
This shows that the bound is asymptotically sharp.
\end{proof}

Next we present the proof of Lemma~\ref{Fnorm}.
\begin{lemma}[Lemma 8 \cite{KrahmerStoger2020}]\label{Fnormcite}
Let $\bm{a}\in\HH^n$ be a standard Gaussian vector and $H\in \Sn(\mathbb H)$. Then we have
\[
\Pbb\{\abs{\bm{a}^*H\bm{a}}^2\ge \frac{\E \abs{\bm{a}^*H\bm{a}}^2}{2}\}\gtrsim \frac{(\E \abs{\bm{a}^*H\bm{a}}^2)^2}{\norm{H}_F^4+\Tr(H)^4}.
\]
\end{lemma}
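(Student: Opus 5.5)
The plan is to apply the Paley--Zygmund inequality to the nonnegative random variable $Z^2$, where $Z:=\bm{a}^*H\bm{a}$, and then to control the fourth moment of $Z$ by the quantity $\norm{H}_F^4+\Tr(H)^4$. Recall that Paley--Zygmund gives, for any nonnegative $W$ with finite second moment and $\theta\in(0,1)$,
\[
\Pbb\{W\ge \theta\E W\}\ge (1-\theta)^2\frac{(\E W)^2}{\E W^2}.
\]
With $W=Z^2$ and $\theta=\frac12$ this yields
\[
\Pbb\{\abs{Z}^2\ge \tfrac12\E\abs{Z}^2\}\ge \frac14\frac{(\E \abs{Z}^2)^2}{\E Z^4}.
\]
So it suffices to prove $\E Z^4\lesssim \norm{H}_F^4+\Tr(H)^4$ with a universal constant.

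To bound $\E Z^4$, diagonalize $H=U^*\diag(\lambda_1,\dots,\lambda_n)U$. By unitary invariance of $\bm{a}$, exactly as in the proof of Lemma~\ref{d3}, we have $Z\overset{d}{=}\sum_i\lambda_i\abs{a_i}^2$. Since $\E\abs{a_i}^2=1$, we can write
\[
Z=\Tr(H)+V,\qquad V:=\sum_i\lambda_i\xi_i,\qquad \xi_i:=\abs{a_i}^2-1 .
\]
The $\xi_i$ are i.i.d.\ and centered, with distribution a shifted $\chi^2(1)$ or $\Gamma(1,1)$, so all their moments are universal constants. Then $(x+y)^4\le 8(x^4+y^4)$ gives $\E Z^4\le 8\Tr(H)^4+8\E V^4$.

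Expanding the fourth moment of a sum of independent centered variables, only diagonal and paired terms survive:
\[
\E V^4=\sum_i\lambda_i^4\E\xi^4+3\sum_{i\ne j}\lambda_i^2\lambda_j^2(\E\xi^2)^2 .
\]
Hence $\E V^4\le C\bigl(\sum_i\lambda_i^2\bigr)^2=C\norm{H}_F^4$. Alternatively, one could invoke hypercontractivity of Gaussian chaos of degree two, which gives $\norm{V}_{L_4}\lesssim\norm{V}_{L_2}$. This establishes $\E Z^4\lesssim \norm{H}_F^4+\Tr(H)^4$, and plugging it into the Paley--Zygmund bound completes the argument.

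There is no real obstacle here; the only points needing care are bookkeeping. The first is checking that the constants in $\E\xi^2$ and $\E\xi^4$ are uniform over both fields $\HH=\R$ and $\HH=\C$. The second is that the right-hand side of the lemma is stated in terms of $\E\abs{Z}^2$ itself rather than a lower bound for it, so no lower estimate of $\E Z^2$ is needed. For later use in Lemma~\ref{Fnorm}, however, I would also record the identity
\[
\E Z^2=\Tr(H)^2+c_\HH\norm{H}_F^2,\qquad c_\R=2,\quad c_\C=1,
\]
which shows that the ratio in the lemma is bounded below by a universal constant.
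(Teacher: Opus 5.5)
Your proof is correct. The paper gives no proof of this statement; it imports it from Krahmer--St\"oger (Lemma~8 there), which concerns measurement vectors with i.i.d.\ subgaussian entries, and uses it only as a black box in the proof of Lemma~\ref{Fnorm}. Your argument therefore replaces the citation with a self-contained proof.

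Your skeleton has two steps: Paley--Zygmund for $Z^2$ with $\theta=\tfrac12$, then a fourth-moment bound $\E Z^4\lesssim\norm{H}_F^4+\Tr(H)^4$ after splitting off the mean $\Tr(H)$. This is the natural route to such a small-ball estimate. What genuinely differs from the subgaussian setting is the Gaussian ingredient. You use rotation invariance to write $Z\overset{d}{=}\sum_i\lambda_i\abs{a_i}^2$ with i.i.d.\ summands and compute $\E V^4$ exactly by the pairing expansion. For general subgaussian entries, $H$ cannot be diagonalized against the law of $\bm{a}$, so one would instead control the centered quadratic form by a Hanson--Wright-type moment estimate.

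Your route gives explicit constants. For $\HH=\R$, $\E\xi^2=2$ and $\E\xi^4=60$; for $\HH=\C$, $\E\xi^2=1$ and $\E\xi^4=9$. Hence $\E V^4\le 60\norm{H}_F^4$ in both cases, with no subgaussian machinery. The price is that it covers only the Gaussian case, which is all the paper needs.

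Your closing identity $\E Z^2=\Tr(H)^2+c_\HH\norm{H}_F^2$ with $c_\HH\ge1$ gives $(\E Z^2)^2\ge\Tr(H)^4+\norm{H}_F^4$. So the right-hand side of the lemma is bounded below by a universal constant, which is exactly the step the paper takes when deducing Lemma~\ref{Fnorm}. The only tacit assumption is $H\neq0$, since otherwise the right-hand side is $0/0$.
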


\begin{proof}[proof of Lemma~\ref{Fnorm}]
Denote the eigenvalues of $H$ by $\lambda_1,\dots,\lambda_n$. Then $\bm{a}^* H\bm{a} \stackrel{d}{=}  \sum_{j=1}^n \lambda_j \abs{a_j}^2$. Since $\E \abs{a_j}^2=1$, we have
\[
\E \abs{\bm{a}^* H\bm{a}} \ge \abs{\E\bm{a}^* H\bm{a}}=\abs{\sum_{j=1}^n \lambda_j}=\abs{\Tr{H}}.
\]
Also, $\E \abs{a_i}^4=3$ if $\HH=\R$, and $\E \abs{a_i}^4=2$ if $\HH=\C$. Thus $\E(\bm{a}^* H\bm{a})^2= 2\norm{H}_F^2+\Tr(H)^2$ if $\HH=\R$, and $\E(\bm{a}^* H\bm{a})^2= \norm{H}_F^2+\Tr(H)^2$ if $\HH=\C$. Substituting into Lemma~\ref{Fnormcite} yields
\[
\Pbb\{\abs{\bm{a}^*H\bm{a}}\ge \frac{\norm{H}_F}{\sqrt2}\}\gtrsim 1,
\]
which proves $\E \abs{\bm{a}^*H\bm{a}}\gtrsim \norm{H}_F$.
\end{proof}

Next we present the proof of Lemma~\ref{lem:rank-constrained-wishart-lp}.

\begin{lemma}[Lemma 7\cite{GaoMaRenZhou2015}]
\label{lem:gmrz-lemma7}
Let $\bm{a}_1,\ldots,\bm{a}_m$ be independent $N(0,I_n)$ random vectors in
$\mathbb R^n$. Then there exist universal constants $C,c>0$ such
that, for every integer $1\le r\le n$ and every $t>0$,
\[
\mathbb P\{\sup_{\substack{\rank(K)\le r\\ \norm{K}_F\le 1}}
\abs{\ip{\frac1m\sum_{i=1}^m \bm{a}_i\bm{a}_i^\top-I_n}{K}}>t\}
\le
\exp\{Cnr-cm(t^2\wedge t)\}.
\]
\end{lemma}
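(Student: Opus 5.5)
The approach is the standard one: a pointwise Bernstein-type tail bound, an $\varepsilon$-net over the rank-constrained Frobenius ball, and a self-bounding trick that passes from the net to the whole set. Set $W:=\frac1m\sum_{i=1}^m \bm{a}_i\bm{a}_i^\top-I_n$ and
\[
\mathcal K_r:=\{K\in\R^{n\times n}:\rank(K)\le r,\ \norm{K}_F\le 1\},
\qquad
M:=\sup_{K\in\mathcal K_r}\abs{\ip{W}{K}}.
\]

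\emph{Reduction to symmetric $K$.} Since $W$ is symmetric, $\ip{W}{K}=\ip{W}{(K+K^\top)/2}$. The matrix $(K+K^\top)/2$ has rank at most $2r$ and Frobenius norm at most $1$. It therefore suffices to prove the bound for symmetric $K$ of rank at most $2r$, with $r$ replaced by $2r$; this only changes the constant $C$. From now on $K$ is symmetric.

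\emph{Pointwise bound.} Fix a symmetric $K$ with $\norm{K}_F\le 1$. Then
\[
\ip{W}{K}=\frac1m\sum_{i=1}^m\bigl(\bm{a}_i^\top K\bm{a}_i-\Tr K\bigr).
\]
Each summand is a centered Gaussian chaos. Diagonalizing $K$ writes it as $\sum_j\lambda_j(g_j^2-1)$, whose $\psi_1$-norm is at most a constant times $\norm{K}_F\le 1$ (Hanson--Wright). Bernstein's inequality for i.i.d.\ sub-exponential variables then gives
\[
\Pbb\{\abs{\ip{W}{K}}>t\}\le 2\exp\bigl(-cm(t^2\wedge t)\bigr).
\]

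\emph{Net and union bound.} Let $\mathcal N$ be a $\frac14$-net of $\mathcal K_{r}$ in Frobenius norm. Parametrize $K=U\Lambda V^\top$ with $U,V$ having orthonormal columns and $\norm{\Lambda}_F\le 1$, and net each factor. This is the Cand\`es--Plan covering argument, and it gives $\abs{\mathcal N}\le 36^{(2n+1)r}=e^{C'nr}$. For any $K\in\mathcal K_r$, pick $K_0\in\mathcal N$ with $\norm{K-K_0}_F\le\frac14$. The difference $K-K_0$ has rank at most $2r$. Splitting its SVD into the top $r$ and the remaining $r$ singular triples writes $K-K_0=D_1+D_2$, where the $D_j$ have rank at most $r$ and $\norm{D_j}_F\le\frac14$. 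Hence $\abs{\ip{W}{K}}\le \abs{\ip{W}{K_0}}+\frac12 M$, and taking the supremum gives $M\le 2\max_{K_0\in\mathcal N}\abs{\ip{W}{K_0}}$. The union bound then yields
\[
\Pbb\{M>t\}\le 2e^{C'nr}\exp\bigl(-cm((t/2)^2\wedge (t/2))\bigr),
\]
which is of the claimed form $\exp\{Cnr-cm(t^2\wedge t)\}$ after adjusting constants. The factor $2$ is absorbed into $e^{Cnr}$ because $nr\ge 1$.

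\emph{Expected obstacle.} The step needing the most care is the covering-number estimate for $\mathcal K_r$ together with the self-bounding step. In the self-bounding step one must check that the error matrix decomposes into pieces that again lie in (a scaled copy of) $\mathcal K_r$. This is what produces the $nr$ rather than $n^2$ dependence. The pointwise sub-exponential bound is routine once Hanson--Wright is invoked.
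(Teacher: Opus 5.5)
Your proof is correct. The paper gives no argument to compare it with: it quotes this lemma from Gao--Ma--Ren--Zhou and uses it as a black box in the proof of Lemma~\ref{lem:rank-constrained-wishart-lp}, so your write-up supplies what the paper takes on citation. Your route is the standard one:
\begin{itemize}
\item a Hanson--Wright plus Bernstein tail $2\exp(-cm(t^2\wedge t))$ for each fixed $K$;
\item the Cand\`es--Plan net of size $36^{(2n+1)r}$;
\item the self-bounding step, which splits $K-K_0$ (rank at most $2r$) into two rank-at-most-$r$ pieces and gives $M\le 2\max_{K_0\in\mathcal N}\abs{\ip{W}{K_0}}$;
\item the union bound, using $(t/2)^2\wedge(t/2)\ge\frac14(t^2\wedge t)$ and $2\le e^{nr}$ to reach the stated form.
\end{itemize}

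One small cleanup is needed. Your ``reduction to symmetric $K$'' is unnecessary, and it does not fit your later steps. The Cand\`es--Plan net points $U\Lambda V^\top$ are not symmetric, and neither are the SVD pieces $D_1,D_2$ of $K-K_0$. Moreover, the bound $\abs{\ip{W}{D_j}}\le M/4$ uses $M$ as a supremum over the non-symmetric class $\mathcal K_r$. If $M$ were instead taken over symmetric matrices of rank at most $2r$, that absorption step would fail, because symmetrizing $D_j$ doubles its rank.

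The fix is to drop the reduction altogether. The pointwise bound already holds for every $K$ with $\norm{K}_F\le 1$, since $\ip{W}{K}=\ip{W}{(K+K^\top)/2}$ and $\norm{(K+K^\top)/2}_F\le 1$. With that, the whole argument runs on $\mathcal K_r$ exactly as you first defined it. The finiteness of $M$ needed to absorb $\frac12 M$ is immediate from $M\le\norm{W}_F<\infty$.
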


\begin{proof}[proof of Lemma~\ref{lem:rank-constrained-wishart-lp}]

The Von Neumann's trace inequality yields
\begin{equation}\label{eq:kyfan-variational-real}
\norm{\sum_{i=1}^m(\bm{a}_i \bm{a}_i^*-I_n)}_{(r,2)}=\sup_{\substack{\operatorname{rank}(K)\le r\\ \norm{K}_F\le1}}\abs{\ip{\sum_{i=1}^m(\bm{a}_i \bm{a}_i^*-I_n)}{K}}.
\end{equation}
We first prove the real case. Using Lemma~\ref{lem:gmrz-lemma7} with $t=C(\sqrt{\frac{nr+u}{m}}+\frac{nr+u}{m})$ for sufficiently large $C$ yields
\[
\mathbb P\{\norm{\sum_{i=1}^m(\bm{a}_i \bm{a}_i^*-I_n)}_{(r,2)}>C(\sqrt{m(nr+u)}+nr+u)\}\le e^{-cu}.
\]
Since $\sqrt{m(nr+u)}\le \sqrt{mnr}+\sqrt{mu}$, Lemma~\ref{lem:tailmom} gives
\begin{equation}\label{r2lp}
\norm{\norm{\sum_{i=1}^m(\bm{a}_i \bm{a}_i^*-I_n)}_{(r,2)}}_{L_p} \le C(\sqrt{mnr}+nr+\sqrt{mp}+p).
\end{equation}
This proves \eqref{eq:rank-constrained-wishart-lp} for $\mathbb H=\mathbb R$.

We now prove the complex case by realification.  Write $\bm{a}_i=\bm{x}_i+i\bm{y}_i$ where $\bm{x}_i,\bm{y}_i\in\mathbb R^n$, and set
\[
\bm{g}_i:=\sqrt2\binom{\bm{x}_i}{\bm{y}_i}\in\mathbb R^{2n}.
\]
Since $\bm{a}_i$ is standard complex Gaussian, each $\bm{g}_i$ is standard real Gaussian in $\mathbb R^{2n}$.
For a Hermitian matrix $K=A+i\,B$, we define
\[
\mathcal R(K):=
\begin{pmatrix}
A & -B\\
B & A
\end{pmatrix}.
\]
Then $\mathcal R(K)$ is real symmetric such that
\[
\ip{\bm{a}_i\bm{a}_i^*-I_n}{K}=\frac12 \ip{\bm{g}_i\bm{g}_i^{\top}-I_{2n}}{\mathcal R(K)}.
\]
Moreover, $\operatorname{rank}(\mathcal R(K))=2\operatorname{rank}(K)$ and $\norm{\mathcal R(K)}_F=\sqrt2 \norm{K}_F$ .
Using \eqref{eq:kyfan-variational-real} again,
\[
\begin{aligned}
\norm{\sum_{i=1}^m(\bm{a}_i \bm{a}_i^*-I_n)}_{(r,2)} &=
\sup_{\substack{\operatorname{rank}(K)\le r\\ \norm{K}_F\le1}} \abs{\sum_{i=1}^m(\bm{a}_i^*K\bm{a}_i-\operatorname{tr}K)}
=\frac12 \sup_{\substack{\operatorname{rank}(K)\le r\\ \norm{K}_F\le1}}
\abs{\ip{\sum_{i=1}^m(\bm{g}_i \bm{g}_i^\top-I_{2n})}{\mathcal R(K)}} \\
&\le \frac1{\sqrt2} \sup_{\substack{\operatorname{rank}(L)\le 2r\\ \norm{L}_F\le1}} \abs{\ip{\sum_{i=1}^m(\bm{g}_i \bm{g}_i^\top-I_{2n})}{L}}
=\frac1{\sqrt2} \norm{\sum_{i=1}^m(\bm{g}_i \bm{g}_i^\top-I_{2n})}_{(2r,2)}.
\end{aligned}
\]
Applying the real case with dimension $2n$ and rank parameter $2r$ gives \eqref{r2lp} also holds for complex case.
\end{proof}

At last we present the proof of Lemma~\ref{lem:gamma-difference}.
\begin{proof}[proof of Lemma~\ref{lem:gamma-difference}]
Define $T:=G_1+G_2$, and $U:=\frac{G_1}{G_1+G_2}$.
By Lemma~\ref{gamma-beta}, $T\sim \Gamma(2s,\theta)$ and $U\sim \mathrm{Beta}(s,s)$ are independent. Since $|G_1-G_2|=T|2U-1|$, we obtain $\E\abs{G_1-G_2}=\E[T]\cdot \E\abs{2U-1}$. Notice that $\E[T]=2s\theta$, and a direct computation gives
\[
\E\abs{G_1-G_2}=\frac{4s\theta}{B(s,s)}\int_{1/2}^{1}(2u-1)u^{s-1}(1-u)^{s-1}\,du
=
\frac{2s\theta2^{1-2s}}{sB(s,s)}=\frac{2\theta 2^{1-2s}\Gamma(2s)}{\Gamma(s)^2} =\frac{2\theta\Gamma(s+\frac12)}{\sqrt{\pi}\Gamma(s)}.
\]

In the special case that $G_1\sim\chi^2_2$ and $G_2\sim\chi^2_1$, the density of $G_1$ and $G_2$ are $f_1(x)=\frac12 e^{-x/2}$ and $f_2(x)=\frac{1}{\sqrt{2\pi x}}e^{-x/2}$ for $x\ge 0$.
For fixed $t\ge 0$, a direct computation gives
\[
\E\abs{G_1-t}=\int_0^t (t-x)\frac12 e^{-x/2} dx+\int_t^\infty (x-t)\frac12 e^{-x/2} dx=t-2+4e^{-t/2}.
\]
Therefore, we have
\[
\E\abs{G_1-G_2}=\E G_2-2+4e^{-\frac{G_2}{2}}=-1+4\int_0^\infty \frac{1}{\sqrt{2\pi x}}e^{-x} dx=2\sqrt2-1.
\]
\end{proof}

\bibliographystyle{plain}
\bibliography{reference}
\end{document}